\newcommand\MTkillspecial[1]{
\bgroup
\catcode`\&=9
\let\\\relax%
\scantokens{#1}
\egroup
}
\renewcommand{\mathbb}[1]{\mathbbm{#1}}
\newtheorem{lemma}{Lemma}
\newaliascnt{proposition}{lemma}
\newaliascnt{thm}{lemma}
\newtheorem{theorem}[thm]{Theorem}
\newaliascnt{corollary}{lemma}
\newtheorem{corollary}[corollary]{Corollary}
\newaliascnt{definition}{lemma}
\newaliascnt{claim}{lemma}
\newaliascnt{example}{lemma}
\newtheorem{example}[example]{Example}
\newaliascnt{remark}{lemma}
\newaliascnt{question}{lemma}
\newaliascnt{conjecture}{lemma}
\let\originalleft\left
\let\originalright\right
\renewcommand{\left}{\mathopen{}\mathclose\bgroup\originalleft}
\renewcommand{\right}{\aftergroup\egroup\originalright}
\newcommand{\I}              {j}
\newcommand{\E}              {\mathrm{e}}
\DeclareMathOperator{\rank}  {\operatorname{rank}}
\DeclareMathOperator{\diag}  {\mathrm{diag}}
\newcommand{\tensor}[1][{}]           {\mathbin{\otimes_{\scriptscriptstyle{#1}}}}
\newcommand{\Fourier}{\mathcal{F}}
\DeclarePairedDelimiter\ideal\langle\rangle
\reDeclarePairedDelimiterInnerWrapper\ideal{star}{
\mathopen{#1\vphantom{\MTkillspecial{#2}}\kern-\nulldelimiterspace\right.}
#2
\mathclose{\left.\kern-\nulldelimiterspace\vphantom{\MTkillspecial{#2}}#3}}
\newcommand{\DFT}{\operatorname{DFT}}
\newcommand{\TV}{\operatorname{TV}}
\DeclarePairedDelimiter{\norm}{\lVert}{\rVert}
\renewcommand{\vec}{\operatorname{vec}}
\newcommand{\mypar}[1]{{\bf #1.}}
\let\@@mod\mod
\DeclareRobustCommand{\mod}{\@ifstar\@mods\@@mod}
\def\@mods{\mkern4mu{\operator@font mod}\mkern6mu}
\newcommand{\C}[0]{\mathbb{C}}
\theoremstyle{nonumberplain}
\newtheorem{proof}{Proof}
\title{Digraph Signal Processing with \\Generalized Boundary Conditions}
\author{Bastian~Seifert,~\IEEEmembership{Member,~IEEE}
  and~Markus~Püschel,~\IEEEmembership{Fellow,~IEEE}
  \thanks{The authors are with the Department of Computer Science, ETH
    Zurich, Switzerland (email: bastian.seifert@inf.ethz.ch,
    pueschel@inf.ethz.ch)}
  \thanks{Manuscript received ???; revised ???}}
\begin{document}
%
\maketitle
\begin{abstract}
Signal processing on directed graphs (digraphs) is problematic, since the graph shift, and thus associated filters, are in general not diagonalizable. Furthermore, the Fourier transform in this case is now obtained from the Jordan decomposition, which may not be computable at all for larger graphs. We propose a novel and general solution for this problem based on matrix perturbation theory: We design an algorithm that adds a small number of edges to a given digraph to destroy nontrivial Jordan blocks. The obtained digraph is then diagonalizable and yields, as we show, an approximate eigenbasis and Fourier transform for the original digraph. We explain why and how this construction can be viewed as generalized form of boundary conditions, a common practice in signal processing. Our experiments with random and real world graphs show that we can scale to graphs with a few thousands nodes, and obtain Fourier transforms that are close to orthogonal while still diagonalizing an intuitive notion of convolution. Our method works with adjacency and Laplacian shift and can be used as preprocessing step to enable further processing as we show with a prototypical Wiener filter application.
\end{abstract}
\begin{IEEEkeywords}
    Graph signal processing, graph Fourier transform, directed graph,
    matrix perturbation theory 
 \end{IEEEkeywords}

\section{Introduction}
\label{sec:introduction}%

Signal processing on graphs (GSP) extends traditional signal
processing (SP) techniques to data indexed by vertices of graphs and
has found many real world applications, including in analyzing
sensor networks~\cite{Jablonski:2017a}, the detection of neurological
diseases~\cite{Padole.Joshi.Gandhi:2018a}, gene regulatory network
inference~\cite{Pirayre.Couprie.Bidard.Duval.Pesquet:2015a}, 3D point
cloud processing~\cite{Thanou.Chou.Frossard:2016a}, and rating
prediction in video recommendation
systems~\cite{Huang.Marques.Ribeiro:2017a}. See also
\cite{Ortega.Frossard.Kovacevic.Moura.Vandergheynst:2018a} for a
recent overview.

For undirected graphs there are two major variants of GSP that differ
in the chosen shift (or variation) operator: one is based on the
Laplacian~\cite{Shuman.Narang.Frossard.Ortega.Vandergheynst:2013a},
the other is based on the adjacency
matrix~\cite{Sandryhaila.Moura:2013a}. Both are symmetric and thus
diagonalizable with an associated orthogonal Fourier transform. Since
the definition of the shift is sufficient to derive a complete, basic
SP toolset \cite{Pueschel.Moura:2008a} one
obtains in both cases meaningful (but different) notions of spectrum,
frequency response, low and high frequencies, Parseval identities, and
other concepts.

However, in many applications the graph signals are associated with
directed graphs (digraphs). Examples include argumentation framework
analysis~\cite{Dunne.Butterworth:2016a}, predatory-prey
patterns~\cite{Yorke.Anderson:1973a}, big data
functions~\cite{Chui.Mhaskar.Zhuang:2018a}, social
networks~\cite{Kwak.Lee.Park.Moon:2010a}, and epidemiological
models~\cite{Kephart.White:1992a}. In these cases the GSP frameworks
become problematic since non-symmetric matrices may not be diagonalizable. A natural replacement is to use the Jordan normal form (JNF) for the spectral decomposition of the graph~\cite{Sandryhaila.Moura:2014b}.
But the JNF is known to be numerically highly
unstable~\cite{Beelen.VanDooren:1990a} and thus not easy to compute or, for larger graphs, not computable at all. Further, spectral components have now
dimensions larger than one, since no eigenbasis is available, which
complicates the application of SP methods. In the
theory of graph neural networks the non-diagonalizability of digraphs
is problematic as well~\cite{Bronstein.Bruna.LeCun.Szlam.Vandergheynst:2017a}. 

In this paper we propose a novel, practical solution to this problem.
The basic idea is to generalize, in a sense, the well-known concept of
boundary conditions to arbitrary digraphs to make them diagonalizable.
It is best explained using finite discrete-time SP as motivating
example.

\mypar{Motivating example} Imagine we are trying to build an SP
framework for discrete finite-duration time signals using GSP. The
most natural solution is the graph shown in
Fig.~\ref{subfig:PathModel}: it captures the operation of the time
shift and includes no assumptions on the behavior of the signal to the
left and to the right of its support. The associated shift matrix is
shown in Fig.~\ref{subfig:shift0}: it is a single Jordan block and
thus, in a sense, a worst case: it has only the eigenvalue 0, a
one-dimensional eigenspace, and cannot be diagonalized, not even into
block-diagonal form.

\begin{figure}\centering
\hfill
\subfloat[\label{subfig:PathModel}]{
  \vphantom{
	$
	\begin{bsmallmatrix}
	0 & 1 & 0 & \cdots & 0 \\
	0 & 0 & 1 & \cdots & 0 \\
	\vdots & & &  \ddots & \vdots \\
	0 & 0 & 0 & \cdots & 1 \\
	0 & 0 & 0 & \cdots & 0 \\
	\end{bsmallmatrix}$}
  \includegraphics[width=0.45\linewidth]{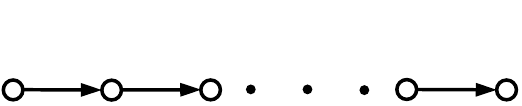}
}
\hfill
\subfloat[\label{subfig:shift0}]{
	$
	\begin{bsmallmatrix}
	0 & 1 & 0 & \cdots & 0 \\
	0 & 0 & 1 & \cdots & 0 \\
	\vdots & & &  \ddots & \vdots \\
	0 & 0 & 0 & \cdots & 1 \\
	0 & 0 & 0 & \cdots & 0 \\
	\end{bsmallmatrix}
	$
}
\hfill
\ 

\hfill
\subfloat[\label{subfig:DFTModel}]{
  \vphantom{$
    \begin{bsmallmatrix}
        0 & 1 & 0 & \cdots & 0 \\
        0 & 0 & 1 & \cdots & 0 \\
        \vdots & & &  \ddots & \vdots \\
        0 & 0 & 0 & \cdots & 1 \\
        1 & 0 & 0 & \cdots & 0 \\
    \end{bsmallmatrix}
    $}
  \includegraphics[width=0.45\linewidth]{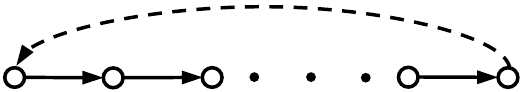}
}
\hfill
\subfloat[\label{subfig:shift1}]{
$
\begin{bsmallmatrix}
0 & 1 & 0 & \cdots & 0 \\
0 & 0 & 1 & \cdots & 0 \\
\vdots & & &  \ddots & \vdots \\
0 & 0 & 0 & \cdots & 1 \\
1 & 0 & 0 & \cdots & 0 \\
\end{bsmallmatrix}
$
}
\hfill
\ 
\caption{The most natural graph model for a finite-duration discrete
  time signal (a) yields the non-decomposable shift matrix in (b).
  Adding one edge (c) yields the well-known circular shift (d) and
  DFT-based spectral analysis. This edge makes
  the support a circle, which is equivalent to assuming the signal as
  periodic.}
\label{fig:DFTModel}
\end{figure}

Indeed, Fig.~\ref{subfig:PathModel} is not the model commonly adopted
but instead Fig.~\ref{subfig:DFTModel}, which adds one edge usually interpreted as a circular boundary condition. Adding this edge (in GSP) makes the graph a circle, and hence signals on this graph are equivalent to periodic signals in DSP.
The associated shift matrix describes the well-known circular shift
(Fig.~\ref{subfig:shift1}). It has an eigendecomposition with distinct
eigenvalues, done by the discrete Fourier transform (DFT). Note that
in almost all applications that use the DFT, {\em the signal is not really periodically extended outside its support}. So, in a
sense, adding the extra edge, or assuming periodicity, can be viewed
as an assumption used to obtain a workable basic SP toolset.

There are a few other aspects worth noting. First, the added edge is
the unique single edge that makes the matrix in Fig.~\ref{subfig:shift0}
diagonalizable and invertible. Further, the eigenvectors of the cyclic shift matrix are approximate eigenvectors of the matrix in Fig.~\ref{subfig:shift0}, and the DFT
diagonalizes this matrix approximately (namely up to a rank-one matrix). We will study these and other aspects in our contribution. 

\mypar{Contributions} The overall contribution of this paper is a
novel approach to make GSP practical on non-diagonalizable digraphs.
For a given digraph, our high-level idea is to add a small number of
edges to make the graph diagonalizable (and also invertible and with
distinct eigenvalues if desired) to obtain a practical form of
spectral analysis. To achieve this we leverage results from
perturbation theory~\cite{Moro.Dopico:2003a,Savchenko:2004a} on the
destruction of Jordan blocks by adding low-rank matrices.

First, we instantiate the perturbation theory to the GSP setting and use
it to design an algorithm that iteratively destroys Jordan blocks by
adding edges. We investigate the consequences for spectral analysis
and show that the added edges can be considered as generalized
boundary conditions in the sense that they add periodic boundary
conditions on subgraphs and increase the number of cycles in the
graph. Second, we provide an efficient implementation of our algorithm
that employs additional techniques to make it numerically feasible and scalable to large
graphs. In particular, the graph Fourier basis we obtain is numerically stable by construction. 

We apply our algorithm to various synthetic and real world graphs showing that usually few edges suffice and that we can process even difficult graphs with a few thousand nodes or close to being acyclic. Finally, we include an application example of a graph Wiener filter enabled by our approach.

\mypar{Related work} The non-diagonalizability in digraph SP is an
important open
problem~\cite[Sec.~III.A]{Ortega.Frossard.Kovacevic.Moura.Vandergheynst:2018a}
and a number of solutions have been proposed. Most approaches aim for
a notion of Fourier basis that circumvents JNF computation at the
price of other GSP properties that are lost.

One idea is to define a different notion of Fourier basis that is orthonormal by construction.
Motivated by the Lov\'asz extension of the graph cut size, \cite{Sardellitti.Barbarossa.DiLorenzo:2017a} defines 
a notion of directed total variation and constructs an orthonormal Fourier basis that minimizes the sum of these. Extending these ideas,
\cite{Shafipour.Khodabakhsh.Mateos.Nikolova:2018a,Shafipour.Khodabakhsh.Mateos.Nikolova:2019a} defines a digraph Fourier basis as the solution of an optimization problem on the Stiefel manifold, minimizing a dispersion function to evenly spread frequencies in the frequency range. Both approaches only work for real signals and yield real Fourier transforms, though a slight modification of the approach was used in~\cite{Shafipour.Khodabakhsh.Mateos.Nikolova:2019a} to make the connection to the circle graph in standard discrete time signal processing. In both cases there is no intuitive notion of convolution in the graph domain anymore, i.e., all filtering now requires the Fourier transform.

Another idea is an approximation of the Fourier basis that almost
diagonalizes the adjacency matrix by allowing small, bounded
off-diagonal entries as proposed in~\cite{Domingos.Moura:2020a}. The
approach is based on the Schur decomposition and the authors solve a
non-convex optimization problem to obtain a numerically stable basis
that can be inverted to compute the Fourier transform. The approach
could be problematic for acyclic digraphs which have zero as the only eigenvalue.

The work in~\cite{Deri.Moura:2017a,Deri.Moura:2017b,Deri.Moura:2017c} maintains the idea of Jordan decomposition but alters the definition of the graph Fourier transform to decompose into Jordan subspaces only, instead of a full JNF. This way a coordinate-free definition of Fourier transform is obtained, which
fulfills a generalized Parseval identity. A method for the inexact, but numerical stable, computation of this graph Fourier transform was proposed in~\cite{Deri.Moura:2017b}. 

Another approach is to change the graph shift operator and thus change the underlying definitions of spectrum and Fourier transform. In~\cite{Furutani.Shibahara.Akiyama.Hato.Aida:2019a} the Hermitian Laplacian matrix is proposed, which is always diagonalizable. The known directed Laplacian was used in~\cite{Singh.Chakraborty.Manoj:2016a} and a scaled version of it, with a detailed study, in \cite{Bauer:2012a}. Both shifts are not diagonalizable, and our proposed method is applicable in both cases.

The work in~\cite{Misiakos.Wendler.Pueschel:2020a} stays within the
framework of \cite{Sandryhaila.Moura:2013a} but identifies the subset
of filters that are diagonalizable. Since these form a subalgebra,
they are generated by one element which can be used as diagonalizable
shift at the price of a smaller filter space. The approach fails for
digraphs with all eigenvalues being zero, i.e., directed acyclic
graphs.

Certain very regular digraphs possess orthonormal Fourier transforms,
e.g., those associated with a directed hexagonal grid~\cite{Mersereau:1979a}, a directed quincunx
grid~\cite{Pueschel.Roetteler:2005a}, or weighted path graphs~\cite{Sandryhaila:12}

Our approach computes an approximate Fourier basis and transform as
some prior work, but is fundamentally different in that it does so by adding a small number of edges to achieve both: stay within the traditional GSP setting and maintain an intuitive notion of convolution.


\section{Graph Signal Processing}
\label{sec:DigraphSP}

In this section we recall the theory of signal processing on graphs, and, in particular, directed graphs (digraphs). We focus on digraphs without edge weights as these are most prone to non-diagonalizable adjacency matrices. However, our approach is applicable to weighted digraphs and discussed later.

\mypar{Directed graphs} A digraph $G = (\mathcal{V},\mathcal{E})$
consists of a set of $n$ vertices $\mathcal{V}$ and a set of edges
$\mathcal{E}\subseteq \mathcal{V}\times\mathcal{V}$. Assuming a chosen
ordering of the vertices, $\mathcal{V} = (v_1,\dots,v_n)$, a digraph
can be represented by its $n\times n$ adjacency matrix $A$ with
entries
\begin{equation}\label{eq:AdjacencyMatrix}
    A_{i,j} =
    \begin{cases}
        1 & \text{if } (v_i,v_j) \in \mathcal{E}, \\
        0 & \text{else.}
    \end{cases}
\end{equation}
We consider graphs with loops, i.e., edges of the form $(v_i,v_i)$ are
allowed. If $A$ is symmetric, i.e., $(v_i,v_j)\in\mathcal{E}$ implies
$(v_j,v_i)\in\mathcal{E}$, the digraph can be viewed as an undirected graph, and hence $A$ is diagonalizable. For other digraphs this may not be the case.

\mypar{Graph signal} A graph signal $s$ on $G$ associates values with
the vertices, i.e., it is a mapping of the form
\begin{equation}\label{eq:GraphSignal}
s: \mathcal{V}\rightarrow\C;\ v_i \mapsto s_i.
\end{equation}
Using the chosen vertex ordering, the graph signal is represented by
the vector $s = (s_i)_{1\leq i\leq n} \in \mathbb{C}^n$.

\mypar{Fourier transform based on adjacency matrix} In GSP based on
\cite{Sandryhaila.Moura:2014b}, the Jordan decomposition of $A$,
\begin{equation}
    \label{eq:JordanDecompositionAdjacency}
    A = V \cdot J \cdot V^{-1},
\end{equation}
where $J$ is in Jordan normal form (JNF), yields $\Fourier = V^{-1}$
as the graph Fourier transform of the graph. The graph Fourier
transform of a graph signal $s$ is
\begin{equation}
    \label{eq:GraphFourierTransform}
    \hat{s} = \Fourier s.
\end{equation}
The frequencies are ordered by total variation, defined as
\begin{equation}
\label{eq:GraphTotalVariation}
\TV_A(v) = \norm{v - \tfrac{A}{|\lambda_{\max}|} v}_1,
\end{equation}
where $\lambda_{\max}$ is the eigenvalue of $A$ with largest
magnitude.

Note that the computation of the JNF is numerically
unstable~\cite{Beelen.VanDooren:1990a}. For example, $
\begin{bsmallmatrix}
    0 & 1 \\
    0 & 0
\end{bsmallmatrix}
$ 
is in JNF, whereas 
$
\begin{bsmallmatrix}
    \epsilon & 1 \\
    0 & -\epsilon
\end{bsmallmatrix}
$ 
is diagonalizable for every $\epsilon\neq 0$. Thus symbolic computation is needed, which, however, becomes too expensive for graphs with hundreds or more nodes.

\mypar{Fourier transform based on Laplacian} An alternative approach to GSP is based on the Laplacian of a graph. For digraphs, several variants of Laplacians have been proposed including the most common directed
Laplacian~\cite{Hein.Audibert.vonLuxburg:2007a}, the normalized
Laplacian~\cite{Bauer:2012a}, the random-walk
Laplacian~\cite{Chung:2005a}, or the magnetic
Laplacian~\cite{Shubin:1994a}. The last two variants are always
diagonalizable, as they are either symmetric or Hermitian.

In~\cite{Shuman.Narang.Frossard.Ortega.Vandergheynst:2013a} the graph
Fourier transform for undirected graphs was defined using the
eigendecomposition of the Laplacian. For the extension of this
framework to directed graphs, \cite{Singh.Chakraborty.Manoj:2016a}
thus uses the directed Laplacian
\begin{equation}\label{eq:DirectedLaplacian}
    L = D - A,
\end{equation}
where $D$ is the diagonal matrix of either in- or out-degrees. The
Jordan decomposition of the directed Laplacian
\begin{equation}
    \label{eq:JordanDecompositionLaplacian}
    L = V \cdot J \cdot V^{-1},
\end{equation}
is then used to define the graph Fourier transform $\Fourier = V^{-1}$
as before. The frequencies are ordered in~\cite{Singh.Chakraborty.Manoj:2016a} by graph total variation as well.

Our focus will be GSP based on \eqref{eq:JordanDecompositionAdjacency} but we will also instantiate our approach to GSP based on \eqref{eq:JordanDecompositionLaplacian} to which it is equally applicable.



\section{Generalized Boundary Conditions for Digraphs}
\label{sec:FixingSpectra}

In the introduction we gave a motivating example for the contribution
in this paper: a "bottom-up" explanation of the cyclic boundary
condition (or, equivalently, periodicity) inherently assumed with
DFT-based spectral analysis.\footnote{More common is what one could
call the "top-down" explanation for periodicity, which naturally
arises, for example, when sampling the spectrum of continuous signals.} Namely, in GSP terms, the cyclic boundary condition is the minimal addition of edges to the graph in Fig.~\ref{subfig:PathModel} to obtain a proper spectrum
with distinct eigenvalues.

In this section we extend this basic idea and construction to
arbitrary digraphs: Given a digraph, our goal is to add the minimal
number of edges that make the digraph diagonalizable. In matrix terms
this means adding to the adjacency matrix $A$ a low-rank adjacency matrix $B$
containing the additional edges, such that $A+B$ is diagonalizable. The same technique can be used to make $A$ also invertible or the eigenvalues distinct. An analogous construction can be done for the directed Laplacian by ensuring that the Laplacian structure is preserved.

Our approach builds on results from matrix perturbation theory on the
destruction of Jordan blocks under low-rank changes of a matrix.

We first introduce the needed results from perturbation theory and then instantiate them in the GSP setting to design an algorithm that destroys Jordan blocks by adding edges to graphs. We provide a number of theoretical results and explain in which way one may consider the added edges as generalized boundary conditions. Accompanying the theory we provide small, illustrating examples.

\subsection{Results from Perturbation Theory}\label{subsec:PerturbationTheory}

We recall some terminology. For a matrix $M$, $v$ is a right
eigenvector if $Mv = \lambda v$ and $u$ a left eigenvector if
$u^TM = \lambda u^T$, i.e., $u$ is an eigenvector of the transpose
$M^T$. $M$ and $M^T$ have the same JNF.

Let $J = V^{-1}MV$ be in JNF. Then $J$ is a block-diagonal matrix consisting of Jordan blocks of the form
\begin{equation}\label{eq:jb}
\begin{bsmallmatrix}
\lambda & 1 & 0 & \cdots & 0 \\
0 & \lambda & 1 & \cdots & 0 \\
\vdots & & &  \ddots & \vdots \\
0 & 0 & 0 & \cdots & 1 \\
0 & 0 & 0 & \cdots & \lambda \\
\end{bsmallmatrix},
\end{equation}
where $\lambda$ is an eigenvalue. Each eigenvalue can have multiple such
blocks and of different size. The Jordan basis (columns of $V$) associated with each
block includes exactly one right eigenvector, which is in first
position of the block, and exactly one left eigenvector (a row in $V^{-1}$), which is in
last position.

\mypar{Matrix perturbation and Jordan blocks} Our work builds on results from perturbation theory \cite{Moro.Dopico:2003a,Savchenko:2004a} that study the effect on the Jordan blocks when perturbing a given matrix
$M \in \mathbb{C}^{n \times n}$ by adding a low rank matrix $B$. We will use the following main result, which can also be found in~\cite{Hoermander.Melin:1994a} without the explicit condition on matrices. We work with the exposition in~\cite{Moro.Dopico:2003a} in a
slightly adapted formulation.

\begin{theorem}[\cite{Moro.Dopico:2003a,Savchenko:2004a}]\label{thm:DestroyingJordanBlocks} 
    Assume the different sizes of the Jordan blocks to a given
    eigenvalue $\lambda$ of $M$ are $f_1>f_2>\dots>f_t$ and that the
    Jordan blocks are ordered accordingly. Let $r_s$ be the number of
    blocks of size $\geq f_s$, $s = 1,\dots,t$, and set $r_0 = 0$. For
    each $r_s$ denote the associated left and right eigenvectors (one
    per block) with $u_k^T$ and $v_k$, $k = 1,\dots,r_s$, s.t. $u_i^T
    v_j = \delta_{ij}$. Let $B$ be a
    matrix of rank $\rho$, with $r_{s-1} < \rho \leq r_{s}$ and define
    \begin{equation}
        \label{eq:GenericityConditionAdds}
        \Phi_s = 
        \begin{bmatrix}
            u_1^T \\
            \vdots \\
            u_{r_s}^T
        \end{bmatrix}
        B
        \begin{bmatrix}
            v_1 & \ldots & v_{r_s}
        \end{bmatrix} \in \mathbb{C}^{r_s \times r_s}.     
    \end{equation}
    We denote with $\Phi_{s-1}$ the upper-left block of dimension $r_{s-1}$ of
    $\Phi_s$. $\Phi_0$ is considered as the empty submatrix. If
    \begin{equation}\label{eq:GenericityCondition}
        \sum_{\substack{\phi\\ \phi\text{ principal $\rho\times\rho$ submatrix}\\\text{of $\Phi_s$ containing $\Phi_{s-1}$}}} \det(\phi) \not= 0,   
    \end{equation}
then the Jordan blocks of $M+B$ for $\lambda$ are those of $M$ minus the $\rho$ largest
    ones. A principle submatrix is obtained by deleting rows and columns with the same indices.
\end{theorem}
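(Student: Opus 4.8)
This is the Moro--Dopico/Savchenko theorem on low-rank perturbation of Jordan structure, so the plan is to follow the perturbation-theoretic argument behind it; I would organise it in three stages.

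\emph{Stage 1 (normalisation and reformulation).} First strip away the inessential data. Conjugating $M$ by a matrix $V$ with $V^{-1}MV=J$ in Jordan form and replacing $B$ by $V^{-1}BV$ leaves $\rank B=\rho$ unchanged; right eigenvectors transform as $v_k\mapsto V^{-1}v_k$, left eigenvectors as $u_k^T\mapsto u_k^TV$, and since $(u_i^TV)(V^{-1}BV)(V^{-1}v_j)=u_i^TBv_j$, every matrix $\Phi_s$ in \eqref{eq:GenericityConditionAdds}, together with the biorthogonality normalisation $u_i^Tv_j=\delta_{ij}$, is left invariant. Translating by $\lambda I$ we may take $\lambda=0$, so $M=\widetilde N\oplus M_1$ with $\widetilde N$ nilpotent (Jordan block sizes $f_1>\dots>f_t$, with $r_s$ of them of size $\ge f_s$) and $M_1$ invertible. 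Since $\ker(M+B)^k$ lies in the generalised $0$-eigenspace for every $k$, the partial multiplicities of $M+B$ at $0$ are read off from its rank sequence via $\#\{\text{blocks of size}\ge k\}=\rank(M+B)^{k-1}-\rank(M+B)^{k}$. Hence ``the Jordan blocks of $M+B$ at $\lambda$ are those of $M$ minus the $\rho$ largest'' is equivalent to the family of rank identities $\rank(M+B)^{k}=\rank M^{k}+\sum_{i=1}^{\rho}\min(n_i,k)$, $k\ge0$, where $n_1\ge n_2\ge\cdots$ lists the block sizes of $\widetilde N$ with multiplicity.

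\emph{Stage 2 (the perturbation mechanism).} The engine is the way the $N_0:=\sum_i n_i$ eigenvalues of $M$ sitting at $0$ break away under perturbation. I would study the small roots $z\to 0$ of $p(z,\tau)=\det\bigl(zI-M-\tau B\bigr)$ and read their Puiseux expansions from the Newton diagram of $p$ in the $(z,\tau)$-plane (the invertible summand $M_1$ contributes only a factor that is regular and nonzero at $z=0$, removable by a Schur complement). A generic rank-$\rho$ perturbation attacks the $\rho$ \emph{longest} chains first: the leading Newton segment has the slope forced by the largest partial multiplicities and peels off $\rho$ eigenvalue branches that leave the origin, with ramifications $f_1,f_2,\dots$ distributed according to $r_{s-1}<\rho\le r_s$; so exactly the $\rho$ longest chains vanish and the survivors are as claimed. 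Since the obstruction to this ``generic'' behaviour is itself the condition \eqref{eq:GenericityCondition} on $B$ (homogeneous in the scalars $u_k^TBv_l$), the conclusion holds for $M+B$ itself, not merely for small $\tau$. Equivalently one can argue with ranks alone --- prove the sharp bound $\rank(M+B)^k\le\rank M^k+\sum_{i=1}^{\rho}\min(n_i,k)$ from the Jordan-chain structure (the crude telescoping bound $\rank((M+B)^k-M^k)\le k\rho$ being too weak) and show it is an equality under the same condition --- reducing, by an induction that peels off one block at a time, to a generic rank-one perturbation destroying the current longest block.

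\emph{Stage 3 (identifying the genericity condition --- the crux).} What remains, and what I expect to be hardest, is to turn ``generic'' into the explicit \eqref{eq:GenericityCondition}. Two points must be pinned down. First, \emph{which} $\rho$ blocks disappear: that it is the longest ones, and, when $r_{s-1}<\rho\le r_s$, all blocks of sizes $f_1,\dots,f_{s-1}$ together with $\rho-r_{s-1}$ of the size-$f_s$ blocks. This is precisely what the nesting $\Phi_{s-1}\subset\Phi_s$ encodes: the perturbation must already act non-degenerately on the $r_{s-1}$ strictly longer blocks --- forcing $\Phi_{s-1}$ to be invertible --- before the size-$f_s$ layer is engaged. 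Second, that the leading coefficient of that first Newton segment, after a Schur-complement reduction against the invertible $\Phi_{s-1}$, equals the sum of the determinants of the principal $\rho\times\rho$ submatrices of $\Phi_s$ containing $\Phi_{s-1}$, i.e. the quantity in \eqref{eq:GenericityCondition}. Proving this identity --- expanding the relevant minor of $zI-M-\tau B$ in the scalars $u_k^TBv_l$ and collecting terms into the determinantal sum --- is the combinatorial heart of the proof and is exactly the computation carried out in \cite{Moro.Dopico:2003a}; for space I would invoke their result directly, as the excerpt does.
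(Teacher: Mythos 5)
The paper does not prove this statement at all: Theorem~\ref{thm:DestroyingJordanBlocks} is imported verbatim (in ``slightly adapted formulation'') from \cite{Moro.Dopico:2003a,Savchenko:2004a}, so there is no in-paper argument to compare yours against. Your sketch is a fair reconstruction of the strategy behind the cited result --- reduction to $\lambda=0$ and a nilpotent-plus-invertible splitting, the rank-sequence characterization of partial multiplicities, and the identification of the genericity condition with a sum of principal minors of $\Phi_s$ nested over an invertible $\Phi_{s-1}$ --- and you are explicit that the decisive determinantal computation is deferred to \cite{Moro.Dopico:2003a}, which is exactly what the paper itself does.

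One concrete soft spot if you intended Stage~2 to carry weight: the Newton-diagram/Puiseux analysis of $\det(zI-M-\tau B)$ describes the Jordan structure of $M+\tau B$ only for \emph{small} nonzero $\tau$, and homogeneity of \eqref{eq:GenericityCondition} in $B$ does not by itself transport the conclusion to $\tau=1$ --- eigenvalue branches that leave $\lambda$ for small $\tau$ could in principle return at $\tau=1$, changing the partial multiplicities there. The published proofs avoid this by working directly with $B$ through the rank identities $\rank(M+B-\lambda I)^k$ and a Schur-complement/Cauchy--Binet expansion, which is the ``equivalent'' route you mention but do not carry out. Since the theorem is a quoted external result, deferring to the reference is acceptable here; just do not present the $\tau$-homotopy as a complete argument.
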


If real or complex matrices are concerned then a random matrix $B$
will satisfy~\eqref{eq:GenericityCondition} for all eigenvalues with
probability~1. This so-called generic case was the purpose of the study in
\cite{Moro.Dopico:2003a}. In our case, neither the matrices $M$ nor
the desired $B$ (to add edges) are generic since they have only entries 0 or 1 and thus constitute finite sets.

Note that the condition in the theorem is sufficient but not necessary. Further,  \eqref{eq:GenericityCondition} is a statement about the Jordan blocks to \emph{one} eigenvalue and does not state what happens to Jordan blocks of other eigenvalues, which can be destroyed as well (generic case), or remain untouched, or even be enlarged.

In general, destroying a Jordan block for $\lambda$ yields new eigenvalues and the basis for the JNF also changes when adding $B$ to $M$.

Theorem~\ref{thm:DestroyingJordanBlocks} allows the destruction of all Jordan blocks to one eigenvalue with a properly chosen $B$, but the condition is complex. Thus later, we prefer to do so iteratively, one block at a time, with matrices $B$ of rank $\rho = 1$. This means all $\phi$ in \eqref{eq:GenericityCondition} have size $1\times 1$, which avoids determinant computations for simplicity and better numerical stability. The following corollary considers this special case. Note that $\rho = 1$ implies that $s = 1$ such that $0 = r_0 < \rho \leq r_1$.
	
\begin{corollary}\label{corollary:DestroyConditionEasy}%
  For a given eigenvalue $\lambda$ of $M$ let $u_{1},\dots,u_{r_1}$ be the left and $v_{1},\dots,v_{r_1}$ be the right eigenvectors of the Jordan blocks of the largest size $f_1$. Let $B$
  be a matrix of rank $1$. If
  \begin{equation}
      \label{eq:GenericityConditionEasy}
      \sum_{k=1}^{r_1} u_{k}^T B v_{k} \not= 0,
  \end{equation}
  then adding $B$ to $M$ destroys one of the largest Jordan blocks
  to $\lambda$ of $M$.
\end{corollary}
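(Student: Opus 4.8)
The plan is to derive Corollary~\ref{corollary:DestroyConditionEasy} as the special case $\rho = 1$ of Theorem~\ref{thm:DestroyingJordanBlocks}. First I would observe that if $B$ has rank $\rho = 1$, then since $r_0 = 0 < \rho$ and the $r_s$ are strictly increasing with $r_1 \ge 1$, the unique index $s$ with $r_{s-1} < \rho \le r_s$ is $s = 1$. Hence the theorem applies with $s = 1$, and the relevant eigenvectors are exactly those attached to the Jordan blocks of the \emph{largest} size $f_1$, of which there are $r_1$; these are the $u_1^T,\dots,u_{r_1}^T$ and $v_1,\dots,v_{r_1}$ in the statement, normalized so that $u_i^T v_j = \delta_{ij}$ (such a biorthogonal normalization is always possible and is the one assumed in the theorem).

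Next I would specialize the genericity condition~\eqref{eq:GenericityCondition}. With $s = 1$ we have $\Phi_1 = [u_i^T B v_j]_{i,j=1}^{r_1} \in \mathbb{C}^{r_1 \times r_1}$ and $\Phi_0$ is the empty submatrix, so the sum in~\eqref{eq:GenericityCondition} ranges over all principal $1 \times 1$ submatrices of $\Phi_1$ (the constraint ``containing $\Phi_0$'' is vacuous). A principal $1\times 1$ submatrix of $\Phi_1$ is just a diagonal entry $(\Phi_1)_{kk} = u_k^T B v_k$, and its determinant is the entry itself. Therefore~\eqref{eq:GenericityCondition} reads $\sum_{k=1}^{r_1} u_k^T B v_k \neq 0$, which is precisely~\eqref{eq:GenericityConditionEasy}.

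Finally I would invoke the conclusion of the theorem: under this condition the Jordan blocks of $M + B$ for $\lambda$ are those of $M$ with the $\rho = 1$ largest ones removed, i.e., exactly one Jordan block of the largest size $f_1$ is destroyed. This is the assertion of the corollary.

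The only real point requiring care is the bookkeeping of indices — checking that $\rho = 1$ forces $s = 1$, that the ``principal $\rho \times \rho$ submatrices containing $\Phi_{s-1}$'' degenerate to the diagonal entries of $\Phi_1$, and that the normalization $u_i^T v_j = \delta_{ij}$ is consistent with the hypotheses of the theorem. There is no analytic obstacle; the corollary is essentially an unpacking of Theorem~\ref{thm:DestroyingJordanBlocks} in the simplest nontrivial case, and the value added is the resulting elimination of determinant computations, which is what makes it attractive for the iterative algorithm.
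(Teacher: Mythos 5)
Your proposal is correct and follows exactly the paper's own argument: specialize Theorem~\ref{thm:DestroyingJordanBlocks} to $\rho=1$ (hence $s=1$), observe that the principal $1\times 1$ submatrices of $\Phi_1$ containing the empty $\Phi_0$ are just its diagonal entries, and read off the conclusion. Your version merely spells out the index bookkeeping that the paper leaves implicit.
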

\begin{proof}
  In this special case of Theorem~\ref{thm:DestroyingJordanBlocks}
  we have $\rho = 1 \leq r_1$. The principle $1\times 1$ submatrices $\phi$ of
  \begin{equation*}
      \Phi_1 = 
    \begin{bmatrix}
        u_1^T \\
        \vdots \\
        u_{r_1}^T
    \end{bmatrix}
    B
    \begin{bmatrix}
        v_1 & \ldots & v_{r_1}
    \end{bmatrix} 
  \end{equation*}
containing the empty matrix $\Phi_0$ correspond exactly to the diagonal elements of $\Phi_1$, which yields the result.
\end{proof}

Fig.~\ref{subfig:shift0} is a very simple example since it is already
in JNF with only one Jordan block. The right eigenvector for the block is $u_1 = [1,0,\dots,0]^T$
and the left eigenvector is $v_1 = [0,0,\dots,1]$. The matrix $B$
containing the added edge in position $(n,1)$ indeed satisfies
\eqref{eq:GenericityCondition}:
\begin{equation}
    \label{eq:TimeModelGenericityCondition}
    [0,0,\dots,1] \cdot B \cdot [1,0,\dots,0]^T = 1,
\end{equation}
and is the only matrix $B$ adding one edge with this property.

\mypar{Behavior of new eigenvalues} The following result shows how the eigenvalues change under a rank-one perturbation. It can be easily proved using the matrix determinant lemma but is not practical for large scale graphs.
\begin{lemma}
\label{lemma:NewEigenvaluesAfterEdgeAdding}%
Let $B = ab^T$ be a rank-one matrix. Then the new eigenvalues of the
perturbed matrix $M + B$ are the solutions to the equation
\begin{equation}
    \label{eq:NewEigenvaluesAfterEdgeAdding}
    b^{T} (x I - M)^{-1} a = 1.
\end{equation}
The left-hand side is a rational function, hence the eigenvalues are
given by the roots of a polynomial.
\end{lemma}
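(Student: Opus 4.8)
The plan is to reduce the claim to a determinant identity and then apply the matrix determinant lemma. The key observation is that $x$ is an eigenvalue of $M+B$ exactly when $\det\bigl(xI - (M+B)\bigr) = 0$. First I would write $xI - (M+B) = (xI - M) - ab^T$. For $x$ not an eigenvalue of $M$, the matrix $xI - M$ is invertible, so I can factor $xI - (M+B) = (xI - M)\bigl(I - (xI-M)^{-1}ab^T\bigr)$ and take determinants to obtain
\begin{equation*}
\det\bigl(xI - (M+B)\bigr) = \det(xI - M)\cdot\det\bigl(I - (xI-M)^{-1}ab^T\bigr).
\end{equation*}
By the matrix determinant lemma (Sylvester's identity), $\det(I - (xI-M)^{-1}ab^T) = 1 - b^T(xI-M)^{-1}a$. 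Hence for $x$ not an eigenvalue of $M$, $x$ is an eigenvalue of $M+B$ if and only if $b^T(xI-M)^{-1}a = 1$, which is exactly \eqref{eq:NewEigenvaluesAfterEdgeAdding}.

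**Handling the rational-function claim.** For the second part of the statement, I would use the cofactor (adjugate) formula $(xI-M)^{-1} = \operatorname{adj}(xI-M)/\det(xI-M)$. Since every entry of $\operatorname{adj}(xI-M)$ is a polynomial in $x$ of degree at most $n-1$ and $\det(xI-M)$ is the characteristic polynomial of $M$ (degree $n$), the scalar $b^T(xI-M)^{-1}a$ is a ratio of polynomials, i.e. a rational function of $x$. Clearing denominators, equation \eqref{eq:NewEigenvaluesAfterEdgeAdding} becomes $b^T\operatorname{adj}(xI-M)\,a = \det(xI-M)$, a polynomial equation, so its roots among the non-eigenvalues of $M$ are the new eigenvalues of $M+B$.

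**Remaining subtlety.** The one point needing care is the eigenvalues of $M$ itself, which are the poles of the rational function and were excluded above. The cleanest way is to argue directly with the polynomial form: define $p(x) = \det(xI-M) - b^T\operatorname{adj}(xI-M)\,a$. A continuity/identity-of-polynomials argument shows that $\det(xI-(M+B)) = p(x)$ as polynomials, since the two sides agree on all but finitely many $x$ (all $x$ that are not eigenvalues of $M$). Thus the eigenvalues of $M+B$ are precisely the roots of $p$, and away from the spectrum of $M$ these coincide with the solutions of \eqref{eq:NewEigenvaluesAfterEdgeAdding}. I expect this bookkeeping at the poles to be the only mild obstacle; the core computation is a one-line application of the matrix determinant lemma.
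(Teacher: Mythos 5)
Your proof is correct and follows exactly the route the paper indicates: the paper omits the details and simply remarks that the lemma ``can be easily proved using the matrix determinant lemma,'' which is precisely the factorization $xI-(M+B)=(xI-M)\bigl(I-(xI-M)^{-1}ab^{T}\bigr)$ and Sylvester's identity that you use. Your additional care with the adjugate formula and the poles at eigenvalues of $M$ is a welcome tightening of what the paper leaves implicit.
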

For the example in Fig.~\ref{fig:DFTModel}, \eqref{eq:NewEigenvaluesAfterEdgeAdding} becomes $1/x^n = 1$, i.e., the new eigenvalues are exactly the $n$th roots of unity, as expected.

The literature also provides bounds on the distance between old and new (under low-rank perturbation) eigenvalues (e.g.,~\cite[Thm.~8]{Kahan.Parlett.Jiang:1982a}), but we found them to be loose and not of practical value in our application scenario.

Finally, \cite[Thm.~6.2]{Ran.Wojtylak:2012a} shows that for real or
complex matrices, in the generic case, $M + B$ has no repeated
eigenvalues, which are not already eigenvalues of $M$.

\subsection{Adding Edges to Destroy Jordan blocks}\label{subsec:AddingEdgesToKillJordanBlocks}%

Our goal is to perturb a directed graph by adding edges to destroy the
Jordan blocks of its adjacency matrix and Corollary~\ref{corollary:DestroyConditionEasy} will be our main tool. First, we establish the viability of this approach, meaning it is always possible to find a matrix $B$ adding one edge that satisfies \eqref{eq:GenericityConditionEasy}.

In the following, we use the column-wise vectorization of a matrix
$B \in \mathbb{C}^{m \times n}$:
$\vec(B) = (b_{1,1}, \dots, b_{m,1}, b_{1,2}, \dots, b_{m,n})^T$.
Vectorization satisfies $\vec(ABC) = (C^T \tensor A) \vec(B)$ for
matrices of compatible dimensions, where $\tensor$ is the Kronecker
product.

\begin{theorem}\label{prop:AddingOneEdgeToDestroyOneBlock}%
Adding or deleting one edge is sufficient to destroy the largest
Jordan block of an adjacency matrix for a chosen eigenvalue $\lambda$.
\end{theorem}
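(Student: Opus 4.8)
The plan is to invoke Corollary~\ref{corollary:DestroyConditionEasy} with a perturbation of the special shape $B = \pm\, e_i e_j^{T}$, where $e_i,e_j$ denote standard basis vectors. Here $B = e_i e_j^{T}$ corresponds to adding the edge $(v_i,v_j)$ (in the case $A_{i,j}=0$), while $B = -\,e_i e_j^{T}$ corresponds to deleting it (in the case $A_{i,j}=1$); in both cases $A+B$ is again an adjacency matrix and $B$ has rank one, so the corollary applies. It therefore suffices to exhibit one index pair $(i,j)$ for which the scalar $\sum_{k=1}^{r_1} u_k^{T} B v_k$ is nonzero, where $u_1,\dots,u_{r_1}$ and $v_1,\dots,v_{r_1}$ are the left and right eigenvectors of the Jordan blocks of the largest size $f_1$, normalized so that $u_i^{T}v_j=\delta_{ij}$.

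First I would put this scalar in closed form. Expanding directly (equivalently, via $\vec(u_k^{T}Bv_k) = (v_k^{T}\tensor u_k^{T})\vec(B)$), one obtains
\[
\sum_{k=1}^{r_1} u_k^{T} B v_k \;=\; \pm\sum_{k=1}^{r_1} (u_k)_i (v_k)_j \;=\; \pm\, P_{i,j},
\qquad\text{where}\qquad P \;:=\; \sum_{k=1}^{r_1} u_k v_k^{T}\in\mathbb{C}^{n\times n}.
\]
Hence the genericity condition \eqref{eq:GenericityConditionEasy} holds for the single-edge modification at position $(i,j)$ precisely when $P_{i,j}\neq 0$, and the whole statement reduces to showing that the matrix $P$ is not the zero matrix.

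The key point is that $P\neq 0$ follows immediately from the biorthonormality $u_i^{T}v_j=\delta_{ij}$ built into Theorem~\ref{thm:DestroyingJordanBlocks}: since $\tr(u_k v_k^{T}) = u_k^{T}v_k = 1$,
\[
\tr(P) \;=\; \sum_{k=1}^{r_1}\tr(u_k v_k^{T}) \;=\; \sum_{k=1}^{r_1} u_k^{T}v_k \;=\; r_1 \;\geq\; 1 .
\]
Thus $P$ has a nonzero entry; choosing any $(i,j)$ with $P_{i,j}\neq 0$ and adding the edge $(v_i,v_j)$ if $A_{i,j}=0$, or deleting it if $A_{i,j}=1$, destroys one of the largest Jordan blocks of $A$ for $\lambda$ by Corollary~\ref{corollary:DestroyConditionEasy}. (One could equivalently argue $P\neq 0$ from $\rank P = r_1$, the $u_k$ and $v_k$ each being linearly independent.)

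I do not expect a genuine obstacle here; the only things to watch are the indexing conventions — which matrix entry encodes which edge, and the sign flip between adding and deleting — and the observation that the biorthonormality assumed in Theorem~\ref{thm:DestroyingJordanBlocks}/Corollary~\ref{corollary:DestroyConditionEasy} is exactly what trivializes the trace computation. As a side remark, since $\tr(P)=r_1>0$ forces $P_{i,i}\neq 0$ for some $i$, a self-loop always suffices; in practice one prefers an off-diagonal choice, which will typically also satisfy $P_{i,j}\neq 0$.
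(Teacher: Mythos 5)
Your proposal is correct and follows the same overall route as the paper: both reduce condition \eqref{eq:GenericityConditionEasy} for a single-entry perturbation $B=\pm e_ie_j^T$ to the statement that the coefficient object $\sum_{k}v_k^T\otimes u_k^T$ (equivalently, your matrix $P=\sum_k u_kv_k^T$) has a nonzero entry, then add the corresponding edge if absent or delete it if present. The one genuine difference is how the non-vanishing is justified: the paper argues that the vectors $v_k\otimes u_k$ are linearly independent (inherited from the independence of the $u_k$ and of the $v_k$), so their sum is a nontrivial linear combination and hence nonzero, whereas you compute $\tr(P)=\sum_k u_k^Tv_k=r_1\geq 1$ directly from the biorthonormality assumed in Theorem~\ref{thm:DestroyingJordanBlocks}. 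Your trace argument is a little more economical, since it sidesteps the (true but not entirely free) fact about linear independence of Kronecker products, and it yields the small bonus observation that some diagonal entry of $P$ is nonzero, i.e., a self-loop position always works. Your handling of the add-versus-delete sign and of the indexing $u_k^TBv_k=(u_k)_i(v_k)_j$ matches the paper's.
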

\begin{proof} 
Let $u_1,\ldots,u_r$ and
      $v_1, \ldots, v_r$ be the left and right eigenvectors
      of the largest Jordan blocks for the eigenvalue $\lambda$,
      respectively. Then \eqref{eq:GenericityConditionEasy} can be written as
      \begin{equation*}
        \begin{split}
        0 &\neq \sum_{k=1}^r u_k^T B v_k \\
        &= \sum_{k=1}^r \vec(u_k^T B v_k)
        = \left(\sum_{k=1}^r v_k^T \tensor u_k^T\right) \vec(B).             
        \end{split}          
    \end{equation*}
    Since the $u_k$ and the $v_k$ are linear independent, the same
    holds for the set of the $v_k \tensor u_k$. Thus,
    $w^T = \sum_{k=1}^r v_k^T \tensor u_k^T$ is a nonzero row vector and
    for an adjacency matrix it is enough (and always possible) to set
    exactly one entry (which depends on the $u_k$ and $v_k$) of $B$ to
    $1$ to ensure that the result is nonzero. The number of nonzero
    elements in $w$ is the number of choices. If for each choice,
    $A$ already contains the edge, we can instead delete an edge,
    choosing $-1$ as entry in~$B$. 
\end{proof}
It is not possible to strengthen the hypothesis to destroying Jordan
blocks by only adding edges in each case. A counter example is the
complete graph, which, however, has only Jordan blocks of size 1. 
In our experiments with (the most relevant) sparse graphs, we never encountered the case that a Jordan block could not be destroyed by adding an edge.

\mypar{Basic algorithm} Using
Theorem~\ref{prop:AddingOneEdgeToDestroyOneBlock} we can formulate the
basic mathematical algorithm to make a digraph adjacency matrix $A$
diagonalizable by adding edges
(Fig.~\ref{algo:AbstractFixingAlgorithm}). The algorithm is iterative,
adding one edge in each step as described in
Theorem~\ref{prop:AddingOneEdgeToDestroyOneBlock},
$A \rightarrow A + B$, to destroy the largest Jordan block. $B$ has
only one entry 1. Note that in the case that all edges that are eligible for adding already exist in the graph, we choose to add a random edge instead of removing an edge. This way the algorithm is guaranteed to terminate as discussed below.

\begin{figure}
    \centering
    \begin{algorithmic}
        \Function{\textbf{DestroyAllJordanBlocks}}{$A$}
        \While {$A$ not diagonalizable}
        \State $u_1, \ldots, u_r \leftarrow$ left EVs
        to largest Jordan blocks 
        \State $v_1, \ldots, v_r \leftarrow$ right
        EVs to largest  Jordan blocks
        \If{$\exists (i,j)$ s.t. $\sum_k u_{k,j} v_{k,i} \not= 0$
          and $A_{i,j} = 0$}
        \State $A_{i,j} \leftarrow 1$
        \Else
        \State select $(i,j)$ random s.t. $A_{i,j} = 0$
        \State $A_{i,j} \leftarrow 1$
        \EndIf
        \EndWhile
        \State \textbf{return} $A$
        \EndFunction
    \end{algorithmic}
    \caption{The mathematical algorithm to obtain a diagonalizable digraph. $u_{k,j}$ is the $j$th element of $u_k$.}
    \label{algo:AbstractFixingAlgorithm}%
\end{figure}
For a practical implementation, various additional details need to be
considered that we discuss later.

\mypar{An example} To illustrate
Alg.~\ref{algo:AbstractFixingAlgorithm} we provide a detailed example.
\begin{example}\label{example:7NodesAndTwoJordanBlocks}
%
\begin{figure}\centering
    \subfloat[\label{subfig:SmallExampleGraph}]{
      \includegraphics[width=0.45\linewidth]{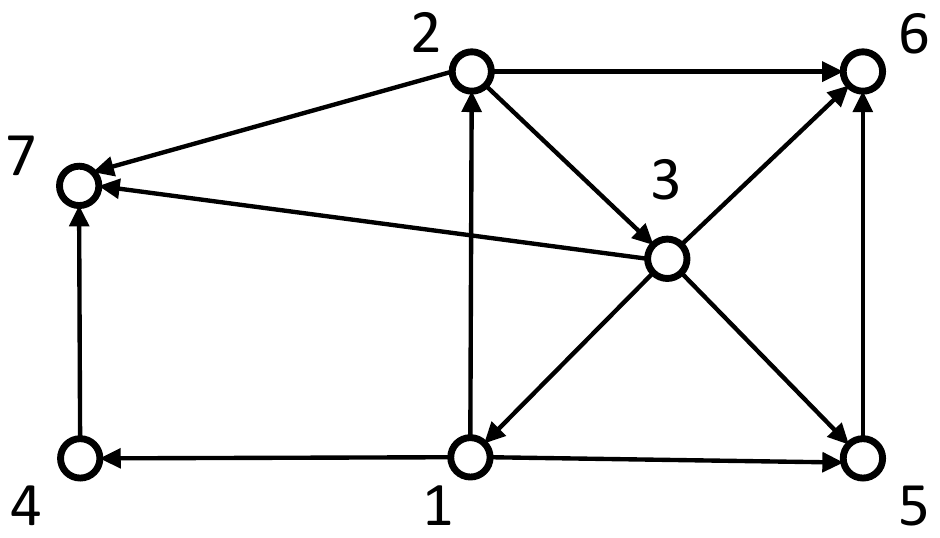}
    }\
    
    \subfloat[\label{subfig:JordanForm7NodesGraph}]{
      $
      \begin{bsmallmatrix}
          0 & 1 &  \\
          0 & 0 &  \\
          & & 0 & 1 &  \\
          & & 0 & 0 &  \\
          & & & & 1 \\
          & & & & & \omega_3^2 \\
          & & & & & & \omega_3
      \end{bsmallmatrix}
      $
    }
    \hspace{0.5em}
    \subfloat[\label{subfig:JordanTransform7NodesGraph}]{
      $ 
      \begin{bsmallmatrix}
          -1 & -1 & 0 & -1 & 1 & \omega_3^2 & \omega_3 \\
          -1 & -1 & -1 & 0 & 1 & \omega_3 & \omega_3^2 \\
          0 & -2 & 0 & -2 & 1 & 1 & 1 \\
          0 & 0 & 1 & 0 & 0 & 0 & 0 \\
          1 & 0 & 0 & 0 & 0 & 0 & 0 \\
          0 & 1 & 0 & 0 & 0 & 0 & 0 \\
          0 & 0 & 0 & 1 & 0 & 0 & 0
      \end{bsmallmatrix}
      $
    }
    \caption{The graph shown in (a) has the Jordan normal form
      $J = V^{-1} A V$ shown in (b). The matrix $V$ of generalized
      eigenvectors is shown in (c), with $\omega_3 = \exp(-2 \pi \I / 3)$.}
    \label{fig:7NodesExample}
\end{figure}

We consider the graph in Fig.~\ref{fig:7NodesExample}, which has the
characteristic polynomial $p(x) = x^4(x^3-1)$ and two Jordan blocks of size 2
for eigenvalue 0. We apply Alg.~\ref{algo:AbstractFixingAlgorithm}.
The right eigenvectors for to the Jordan blocks of size two are
the first and third column of $V$:
\begin{equation}\label{eq:7NodesGraphRightEigenvectorFirstBlock}
    \begin{split}
        v_1 &=
        \begin{bsmallmatrix}
            -1 & -1 & 0 & 0 & 1 & 0 & 0 
        \end{bsmallmatrix}^T, \\
        v_2 &=
        \begin{bsmallmatrix}
            0 & -1 & 0 & 1 & 0 & 0 & 0 
        \end{bsmallmatrix}^T.        
    \end{split}
\end{equation}
The corresponding left eigenvectors are the second and fourth row of
$V^{-1}$:
\begin{equation}\label{eq:7NodesGraphLeftEigenvectorFirstBlock}
    \begin{split}
        u_1^T &=
        \begin{bsmallmatrix}
            0 & 0 & 0 & 0 & 0 & 1 & 0 
        \end{bsmallmatrix}, \\
        u_2^T &=
        \begin{bsmallmatrix}
            0 & 0 & 0 & 0 & 0 & 0 & 1 
        \end{bsmallmatrix}.
    \end{split}
\end{equation}
Thus~\eqref{eq:GenericityCondition} takes the form
\begin{equation}\label{eq:7NodesGraphConditionFirstBlock}
    0\neq u_1^TBv_1 + u_2^T B v_2
    = -b_{6,1} - b_{6,2} + b_{6,5} - b_{7,2} + b_{7,4},
\end{equation}
and we have five choices. We choose $b_{6,1} = 1$, as shown in
Fig.~\ref{subfig:SmallExampleGraphFirstEdge}, which defines $B$. 

Note the effect on the JNF (Fig.~\ref{subfig:JordanForm7NodesGraph}) when $B$ is added: $V^{-1}(A+B)V = J + V^{-1}BV$ with 
\begin{equation}\label{eq:7NodesGraphSpectralFormEdge}
V^{-1} B V =
\begin{bsmallmatrix}
    0 & 0 &  0 & 0 & 0 & 0 & 0 \\
    -1 & -1 & 0 & -1 & 1 & \omega_3^2 & \omega_3 \\
    0 & 0 &  0 & 0 & 0 & 0 & 0 \\
    0 & 0 &  0 & 0 & 0 & 0 & 0 \\
    -\tfrac{4}{3} & -\tfrac{4}{3} & 0 & -\tfrac{4}{3} &
    \tfrac{4}{3} & \tfrac{4}{3} \omega_3^2 & \tfrac{4}{3}
    \omega_3 \\
    -\tfrac{1}{3} & -\tfrac{1}{3} & 0 & -\tfrac{1}{3} &
    \tfrac{1}{3} & \tfrac{1}{3} \omega_3^2 & \tfrac{1}{3}
    \omega_3 \\
    -\tfrac{1}{3} & -\tfrac{1}{3} & 0 & -\tfrac{1}{3} &
    \tfrac{1}{3} & \tfrac{1}{3} \omega_3^2 & \tfrac{1}{3}
    \omega_3 \\
\end{bsmallmatrix}.
\end{equation}
The addition of this matrix to the Jordan form of $A$ modifies all eigenvalues, except for the remaining Jordan block for eigenvalue 0. 

\begin{figure}\centering
    \subfloat[\label{subfig:SmallExampleGraphFirstEdge}]{
      \includegraphics[width=0.45\linewidth]{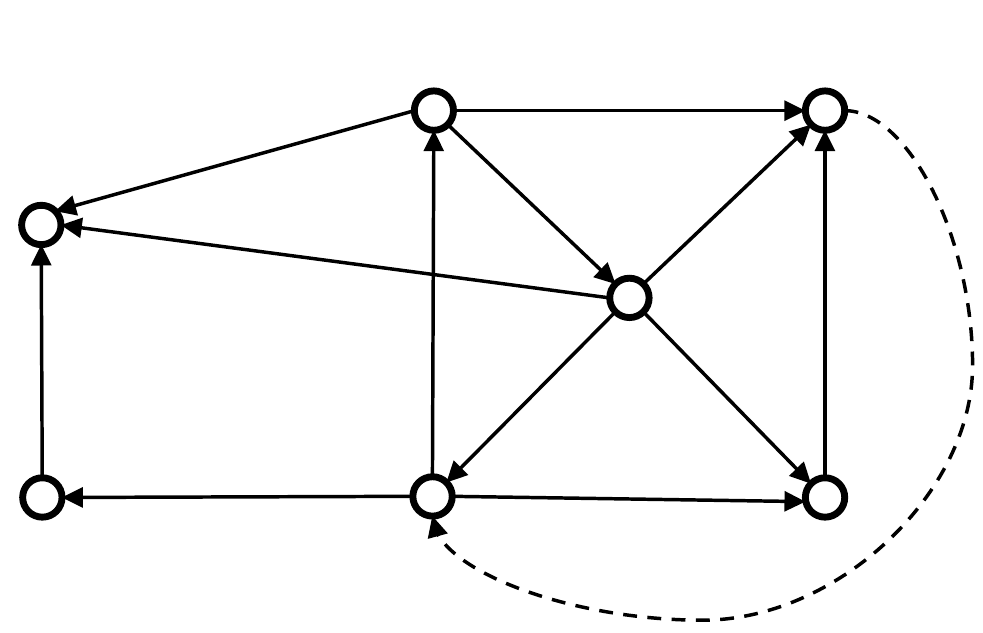}
    }
    \hfill
    \subfloat[\label{subfig:SmallExampleGraphSecondEdge}]{
      \includegraphics[width=0.45\linewidth]{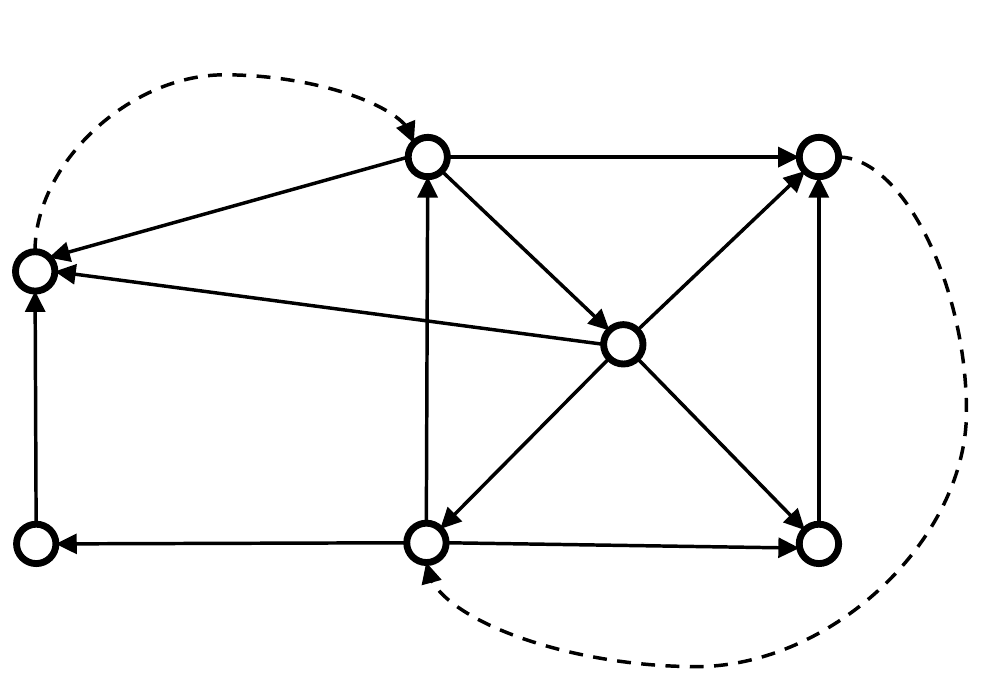}
    }
    \hfill
    \caption{The modified graph with the first (a) and second (b)
      edge added by the proposed abstract algorithm.}
    \label{fig:7NodesExampleSteps}
\end{figure}

For the remaining Jordan block for eigenvalue $0$ in the modified graph, the right and left eigenvectors are given, respectively, by
\begin{equation*}
v_2 =
\begin{bsmallmatrix}
    0 & -1 & 0 & 1 & 0 & 0 & 0 
\end{bsmallmatrix}^T,\quad
u_2^T =
\begin{bsmallmatrix}
    0 & 0 & 0 & 0 & 0 & 0 & 1 
\end{bsmallmatrix}.
\end{equation*}
Condition~\eqref{eq:GenericityCondition} takes the form
\begin{equation}\label{eq:7NodesGraphConditionSecondBlock}
u_2^TBv_2 = - b_{7,2} + b_{7,4}.
\end{equation}
We add the edge $b_{7,2} = 1$ and obtain the graph in
Fig.~\ref{subfig:SmallExampleGraphSecondEdge}. The characteristic
polynomial is now $p(x) = x^7-x^5-4x^4-x^3-2x^2-1$, which yields pairwise
different eigenvalues (Fig.~\ref{fig:7NodesExampleEigenvalues}). 

\begin{figure}
    \centering
    \includegraphics[width=0.8\linewidth]{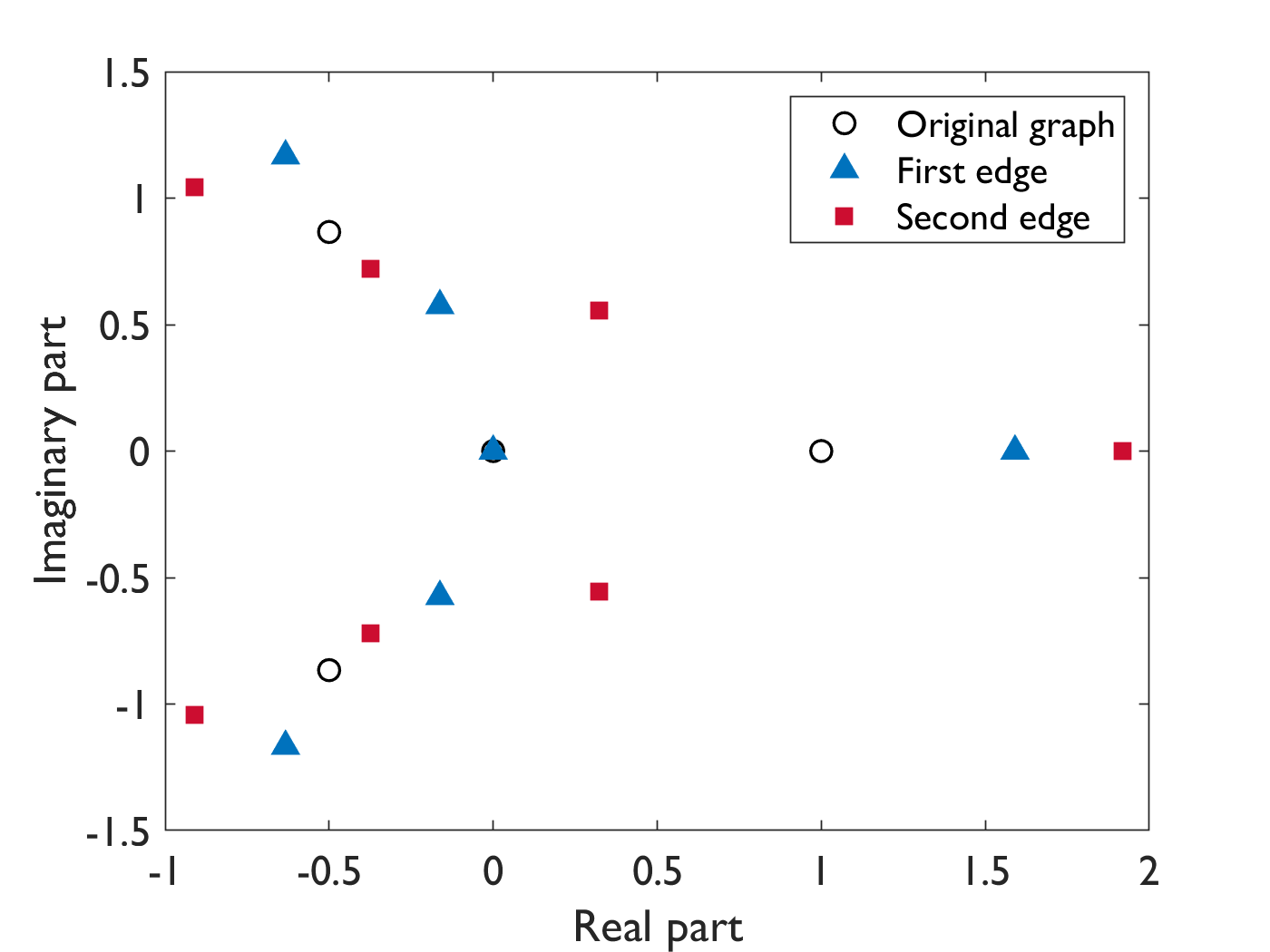}    
    \caption{The eigenvalues of the original graph on 7 nodes (black
      circles), after the first edge was added (blue triangles), and
      after the second edge was added (red squares).}
    \label{fig:7NodesExampleEigenvalues}
\end{figure}
\end{example}

\subsection{Further Properties and Discussion}

We discuss various properties of our basic algorithm, the results it produces, and further extensions. In particular, we provide an explanation for terming the added edges generalized boundary conditions.

\mypar{Minimal number of edges}
Theorems~\ref{prop:AddingOneEdgeToDestroyOneBlock} and
\ref{thm:DestroyingJordanBlocks} give an immediate lower bound for the
number of edges to destroy all Jordan blocks: it is the maximal number
of Jordan blocks of size larger than one over all eigenvalues. The
bound is then achieved if destroying this maximum number of blocks
happens to destroy the Jordan blocks of all other eigenvalues as well.
For real and complex matrices, this would hold in the generic case.
For adjacency matrices, in general, it does not. 

A trivial upper bound is the number of edges needed to make the graph symmetric. This is of course not the purpose of our work, a large number, and not the type of edges found by our algorithm in practice.

\mypar{Termination} The algorithm in Fig.~\ref{algo:AbstractFixingAlgorithm} always terminates since it adds an edge in every step, either one which destroys one Jordan
block or a random one. In the worst case it would reach the unweighted complete graph, which is diagonalizable. Again, we note that in our extensive experiments on sparse graphs we never saw the case of a random edge, i.e., in every step a Jordan block got destroyed. Note that the potential (non-generic) case that a new Jordan block is created if another is destroyed thus also poses no problem for termination.

\mypar{Invertible adjacency matrix} Since the adjacency matrix is
considered as shift in the GSP of~\cite{Sandryhaila.Moura:2013a}, it
may be desirable that it is invertible. Our algorithm can be used for
this purpose by also destroying all Jordan blocks for the
eigenvalue zero, including those of size one.

\mypar{Approximate eigenvectors and Fourier transform} Our algorithm takes as input an adjacency matrix $A$ and outputs a diagonalizable $A+B$, where $B$ contains all the added edges, say $k$ many. As we show now, the eigenvectors of $A+B$ are, in a sense, approximate eigenvectors of $A$ and the same holds for the Fourier transform of $A+B$.
\begin{lemma}\label{thm:NewEigenvectorsForOldGraph}%
If $v$ is an eigenvector of $A + B$ to the eigenvalue $\lambda$ then
\begin{equation}\label{eq:NewEigenvectorsForOldGraph}
||Av - \lambda v||_0 \leq k,
\end{equation}
where $||\cdot||_0$ is the $\ell_0$-pseudonorm that counts the entries $\neq 0$.
\end{lemma}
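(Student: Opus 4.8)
The plan is to start from the eigenvector equation and move the perturbation to the other side. Since $(A+B)v=\lambda v$, we immediately get $Av-\lambda v = -Bv$, so it suffices to bound $\|Bv\|_0$. The key observation is purely about the sparsity of $B$: by construction $B$ records the $k$ edges added by the algorithm, hence has at most $k$ nonzero entries, and therefore at most $k$ nonzero \emph{rows}. For every zero row $i$ of $B$ the $i$-th coordinate $(Bv)_i=\sum_j B_{i,j}v_j$ vanishes for any $v$, so $\|Bv\|_0$ is at most the number of nonzero rows of $B$, which is at most $k$. Combining, $\|Av-\lambda v\|_0 = \|Bv\|_0 \le k$, which is the claim.

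The only point that deserves to be made explicit in the write-up is the distinction between the number of nonzero entries of $B$ and the number of nonzero coordinates of $Bv$: it is the latter that controls the $\ell_0$-pseudonorm, and it is bounded by the number of nonzero rows of $B$, never exceeding the total number of added edges. It is also worth remarking that the bound is usually far from tight, since if several of the added edges share the same head vertex the number of nonzero rows of $B$ — and hence the true support of $Av-\lambda v$ — is strictly smaller than $k$; and that the argument is insensitive to whether $B$ contains entries $+1$, entries $-1$ (when an edge is deleted), or an entry produced by the random-edge fallback, as all of these occupy at most $k$ positions.

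I do not expect a genuine obstacle here: the statement is a one-line consequence of the eigenvector identity together with the observation that $B$ is $k$-sparse with at most $k$ nonzero rows. The write-up should simply spell out this row-count argument carefully so that the reader sees why $k$ (rather than, say, $\|B\|_0$ interpreted as a number of columns, or the rank of $B$) is the right quantity.
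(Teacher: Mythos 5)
Your proof is correct and is essentially the paper's argument: both reduce the claim to the fact that $B$ has at most $k$ nonzero rows, the paper by noting $(Av)_i = ((A+B)v)_i = \lambda v_i$ on the zero rows of $B$, and you by writing $Av - \lambda v = -Bv$ and bounding the support of $Bv$. No gap.
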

\begin{proof}
Let $I$ be the index set of zero rows of $B$, $|I|\geq n-k$. Then
\begin{equation*}
(Av)_{i \in I}
= ( (A + B)v)_{i \in I}
= (\lambda v)_{i \in I}
= \lambda (v)_{i \in I},        
\end{equation*}
as $B$ has no effect on the entries corresponding to $I$.
\end{proof}
As a consequence, $A$ also gets diagonalized approximately by the Fourier transform $\Fourier$ of $A+B$ in the following sense.
\begin{lemma}If $\Fourier(A+B)\Fourier^{-1} = D$ (diagonal), then
$$
\Fourier A\Fourier^{-1} = D - \Fourier B\Fourier^{-1},
$$
is diagonal up to a matrix of rank $k$.
\end{lemma}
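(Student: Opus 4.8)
The plan is to treat this as an immediate consequence of the linearity of conjugation together with the fact that conjugation by an invertible matrix preserves rank. First I would write $A = (A+B) - B$ and apply the map $X \mapsto \Fourier X \Fourier^{-1}$ to both sides. Using the hypothesis $\Fourier(A+B)\Fourier^{-1} = D$ this gives $\Fourier A \Fourier^{-1} = D - \Fourier B \Fourier^{-1}$, which is exactly the displayed identity; no computation is needed beyond distributing the conjugation over the sum.

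It then remains to control the rank of the correction term $\Fourier B \Fourier^{-1}$. Since $\Fourier = V^{-1}$ is invertible, $\rank(\Fourier B \Fourier^{-1}) = \rank(B)$. By construction $B$ is the (possibly signed) adjacency matrix recording the $k$ edges added (or removed) by the algorithm, hence $B$ has at most $k$ nonzero entries; equivalently $B = \sum_{\ell=1}^{k} c_\ell\, e_{i_\ell} e_{j_\ell}^{\Trans}$ is a sum of $k$ rank-one matrices, so $\rank(B) \le k$. Therefore $\Fourier A \Fourier^{-1} = D + R$ with $R := -\Fourier B \Fourier^{-1}$ of rank at most $k$, i.e. $\Fourier A\Fourier^{-1}$ differs from a diagonal matrix by a matrix of rank $\le k$, as claimed.

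There is essentially no obstacle here: the statement is a direct corollary of the preceding lemma and of how $B$ is built, and the only points that require care are purely expository — making precise that ``diagonal up to a matrix of rank $k$'' means the difference to a diagonal matrix has rank at most $k$, and noting that $k$ is merely an upper bound (several added edges sharing a source vertex can yield $\rank(B) < k$). One may also remark that this generalizes the motivating example, where a single added edge gives a rank-one $B$ and the DFT diagonalizes the path shift up to a rank-one term.
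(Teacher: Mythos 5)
Your proof is correct and matches the paper's (implicit) reasoning exactly: the paper states this lemma without proof, treating it as an immediate consequence of linearity of conjugation and the fact that $B$, having at most $k$ nonzero entries, has rank at most $k$, which is preserved under conjugation by the invertible $\Fourier$. Your write-up simply makes these steps explicit, including the correct reading of "rank $k$" as an upper bound.
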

For example, the $\DFT$ diagonalizes the matrix in Fig.~\ref{subfig:shift0} up to a dense rank-one matrix, which is the outer product of the last column of $\DFT$ with the first row of $\DFT^{-1}$.

\mypar{New edges as generalized boundary conditions} We explain why the edges added by our algorithm to destroy Jordan blocks may be considered as generalized boundary conditions. In the example in Fig.~\ref{fig:DFTModel} we saw that the added edge created a cycle. Intriguingly, this observation generalizes: there is an intrinsic relationship between diagonalizability (and invertibility) of $A$ and the occurrence of cycles.

To do so, we first need the following theorem for digraphs that explains the connection between the coefficients of the characteristic polynomial of $A$ and the simple cycles of the graph. We recall that a cycle is simple if all the vertices it contains are different. Further, $H$ is called a subgraph of $G$ if it contains a subset of the vertices and edges of $G$.

\begin{theorem}[\cite{Coates:1959a}]\label{theorem:CoefficientTheoremDigraphs}%
    Let $G$ be a graph and denote by $\mathcal{H}_i$ the set of all
    subgraphs of $G$ with exactly $i$ vertices and consisting of a disjoint union
    of simple directed cycles (equivalently, $\mathcal{H}_i$ consists of all subgraphs with $i$ nodes, each of which has indegree and outdegree $=1$). Then the coefficients of the characteristic
    polynomial of $G$
    \begin{equation}
        \label{eq:CharPolyCoefficientTheorem}
        p_G(x) = x^n + a_{n-1} x^{n-1} + \dots + a_0 
    \end{equation}
    have the form
    \begin{equation}
        \label{eq:CyclesCoefficientTheorem}
        a_i = \sum_{H \in \mathcal{H}_{n-i}} (-1)^{c(H)},
    \end{equation}
    where $c(H)$ is the number of cycles $H$ consists of.
\end{theorem}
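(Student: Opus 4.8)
The plan is to compute the characteristic polynomial $p_G(x) = \det(xI - A)$ directly from the Leibniz formula for the determinant and then group terms by their cycle structure. Writing $\det(xI-A) = \sum_{\sigma \in S_n} \sign(\sigma) \prod_{i=1}^n (xI-A)_{i,\sigma(i)}$, I would first observe that the factor $(xI-A)_{i,\sigma(i)}$ equals $x - A_{i,i}$ when $\sigma(i)=i$ and equals $-A_{i,\sigma(i)}$ when $\sigma(i)\neq i$. Expanding each diagonal factor $x - A_{i,i}$ and collecting the power $x^i$ amounts to choosing a subset $F \subseteq \mathcal{V}$ of $i$ fixed points that contribute the ``$x$'' and letting $\sigma$ act as a permutation of the complement $\mathcal{V}\setminus F$ with \emph{no further fixed points} (a derangement on $\mathcal{V}\setminus F$), together with the $-A_{i,i}$ contributions on any fixed points not in $F$. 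After rearranging, the coefficient of $x^i$ is
\begin{equation*}
a_i = \sum_{\substack{S \subseteq \mathcal{V}\\ |S| = n-i}} \ \sum_{\substack{\sigma \in S_S\\ \sigma \text{ has no fixed point}}} \sign(\sigma) \prod_{k \in S} (-A_{k,\sigma(k)}),
\end{equation*}
where $S_S$ denotes permutations of the set $S$; the product is nonzero precisely when every edge $(v_k, v_{\sigma(k)})$ is present in $G$.

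The second step is the combinatorial identification. A fixed-point-free permutation $\sigma$ of a set $S$ decomposes uniquely into disjoint cycles, each of length $\geq 2$, and the nonvanishing of $\prod_{k\in S} A_{k,\sigma(k)}$ means exactly that each such cycle of $\sigma$ traces out a simple directed cycle in $G$ on the corresponding vertices. Hence the pairs $(S,\sigma)$ with nonzero contribution and $|S|=n-i$ are in bijection with the subgraphs $H \in \mathcal{H}_{n-i}$: the vertex set of $H$ is $S$, and its edges are $\{(v_k,v_{\sigma(k)}) : k\in S\}$, a disjoint union of simple cycles covering all of $S$ with in- and out-degree $1$ at every vertex. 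It then remains to check that the sign works out: for a permutation that is a product of $c(H)$ disjoint cycles of lengths $\ell_1,\dots,\ell_{c(H)}$ summing to $n-i$, one has $\sign(\sigma) = \prod_j (-1)^{\ell_j - 1} = (-1)^{(n-i) - c(H)}$, while the product $\prod_{k\in S}(-A_{k,\sigma(k)}) = (-1)^{n-i}\prod_{k\in S} A_{k,\sigma(k)} = (-1)^{n-i}$ on the support. Multiplying these gives $(-1)^{(n-i)-c(H)}(-1)^{n-i} = (-1)^{-c(H)} = (-1)^{c(H)}$, which is exactly the summand in \eqref{eq:CyclesCoefficientTheorem}.

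The main obstacle, and the part requiring the most care, is the bookkeeping in the first step: correctly tracking which diagonal entries contribute an $x$ versus a $-A_{i,i}$, and verifying that after expansion the residual permutation on the ``non-$x$'' vertices may be taken without loss of generality to be fixed-point-free (fixed points there are absorbed into the $-A_{i,i}$ loop-edge contributions, which are themselves simple cycles of length $1$ — consistent with the statement, since loops are allowed and a loop is a simple cycle on one vertex). Once the expansion is organized so that every term corresponds to a choice of vertex-disjoint simple cycles together with a set of ``free'' vertices contributing $x$, the sign computation above is routine, and summing over all such configurations with $n-i$ cycle-covered vertices yields the claim. I would also briefly note that the parenthetical characterization of $\mathcal{H}_i$ (subgraphs on $i$ nodes with every in- and out-degree equal to $1$) is just the graph-theoretic restatement of ``disjoint union of simple directed cycles,'' so no separate argument is needed there.
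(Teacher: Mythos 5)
The paper does not prove this theorem; it is quoted from Coates' 1959 paper as a known result, so there is no in-paper argument to compare against. Your proof is the standard derivation of the digraph coefficient theorem from the Leibniz expansion of $\det(xI-A)$, and it is essentially correct: the grouping of terms by the set $S$ of ``non-$x$'' vertices, the bijection between permutations of $S$ with nonvanishing weight and the cycle covers $H\in\mathcal{H}_{n-i}$, and the sign computation $(-1)^{(n-i)-c(H)}\cdot(-1)^{n-i}=(-1)^{c(H)}$ are all right.

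One point to tighten: your displayed intermediate formula restricts the inner sum to fixed-point-free permutations of $S$, which is only correct for loop-free graphs; since the paper explicitly allows loops, the inner sum should range over \emph{all} permutations of $S$, with a fixed point $k$ contributing the factor $-A_{k,k}$ (nonzero precisely when there is a loop at $v_k$, i.e., a length-one simple cycle). You do acknowledge and resolve exactly this in your closing paragraph, but the phrasing there (``may be taken without loss of generality to be fixed-point-free'') contradicts the resolution you then give; the clean statement is that the pairs $(\sigma,F)$, with $F$ a chosen subset of the fixed points of $\sigma$ contributing $x$, correspond bijectively to pairs $(S,\tau)$ with $\tau$ an arbitrary permutation of $S=\mathcal{V}\setminus F$ and $\sign(\sigma)=\sign(\tau)$. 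The length-one cycles then fit into your sign calculation without change. With that bookkeeping fixed, the argument is complete.
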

For example the graph in Fig.~\ref{subfig:SmallExampleGraphSecondEdge}
has four subgraphs on three vertices consisting of simple cycles shown
in Fig.~\ref{fig:7NodesSubgraphs}. Hence, by
Theorem~\ref{theorem:CoefficientTheoremDigraphs}, its characteristic
polynomial has the term $-4 x^4$, which is indeed the case.

\begin{figure}
    \centering
    \includegraphics[width=\linewidth]{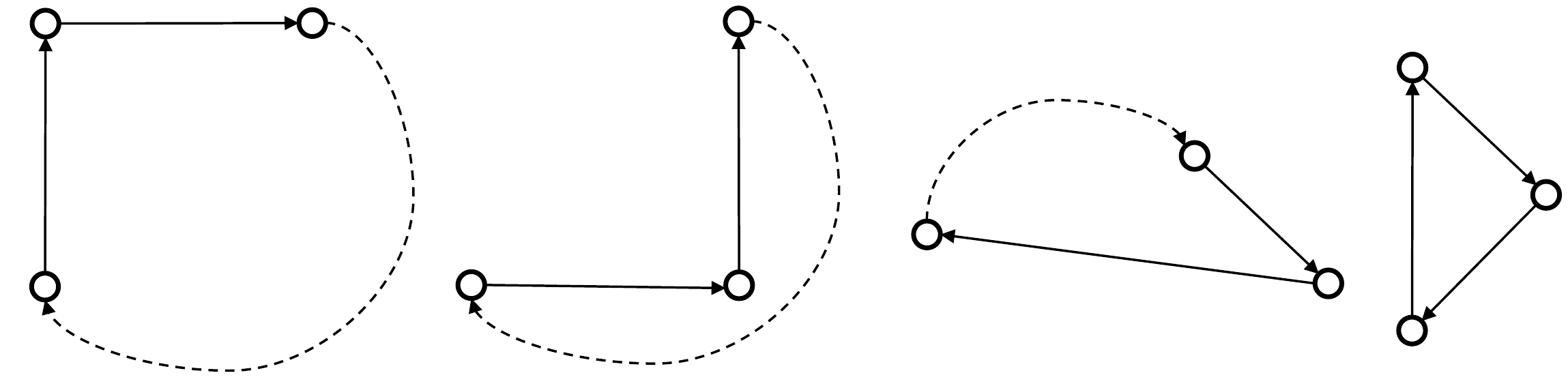}
    \caption{The 4 subgraphs in $\mathcal{H}_3$  of the graph in
      Fig.~\ref{subfig:SmallExampleGraphSecondEdge}. The added edges are dotted.}
    \label{fig:7NodesSubgraphs}
\end{figure}

It is clear that adding edges cannot reduce the number of cycles in a graph. Theorem~\ref{theorem:CoefficientTheoremDigraphs} implies that if an added edge is not part of any cycle, it will not change the characteristic polynomial. But Algorithm~\ref{algo:AbstractFixingAlgorithm} does. Hence we get the following corollary.
\begin{corollary}
\label{theorem:NewEdgesAddNewSimpleCycles}%
Each edge that Algorithm~\ref{algo:AbstractFixingAlgorithm} adds to a
graph to destroy a Jordan block introduces additional simple
cycles.
\end{corollary}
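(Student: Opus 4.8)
The plan is to argue by contraposition, combining the Coates coefficient theorem (Theorem~\ref{theorem:CoefficientTheoremDigraphs}) with the perturbation result (Theorem~\ref{thm:DestroyingJordanBlocks}). I would establish two facts: (i) if an edge $e$ is added to a digraph $G$ and $e$ lies on no simple cycle of the resulting digraph $G' = G + e$, then the characteristic polynomials satisfy $p_{G'} = p_G$; and (ii) any edge that Algorithm~\ref{algo:AbstractFixingAlgorithm} adds in order to destroy a Jordan block strictly changes the characteristic polynomial. Fact (i) then forces such an edge onto a simple cycle of $G'$, and since $e$ was not present in $G$, every simple cycle through $e$ is new.

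For (i), I would use that by Theorem~\ref{theorem:CoefficientTheoremDigraphs} each $H \in \mathcal{H}_i(G')$ has all in- and out-degrees equal to $1$, hence is a disjoint union of simple directed cycles of $G'$, so every edge of $H$ lies on a simple cycle of $G'$. If $e$ lies on no such cycle, then no $H \in \mathcal{H}_i(G')$ contains $e$; consequently every $H \in \mathcal{H}_i(G')$ is already a subgraph of $G$, and conversely every $H \in \mathcal{H}_i(G)$ is a subgraph of $G'$, so $\mathcal{H}_i(G') = \mathcal{H}_i(G)$ for all $i$. By \eqref{eq:CyclesCoefficientTheorem} every coefficient of the characteristic polynomial is then unchanged.

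For (ii), I would note that the edge added in such a step corresponds, by Theorem~\ref{prop:AddingOneEdgeToDestroyOneBlock} (via Corollary~\ref{corollary:DestroyConditionEasy}), to a rank-one matrix $B$ with a single nonzero entry satisfying \eqref{eq:GenericityConditionEasy}. Hence Theorem~\ref{thm:DestroyingJordanBlocks} applies with $\rho = 1$: the Jordan blocks of $A+B$ for the targeted eigenvalue $\lambda$ are exactly those of $A$ with the single largest one removed. Therefore the algebraic multiplicity of $\lambda$ (the sum of the sizes of its Jordan blocks) drops by the size $f_1 \ge 1$ of the destroyed block, so $p_{A+B}$ and $p_A$ differ in the multiplicity of the root $\lambda$.

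Combining the two facts: in any step where Algorithm~\ref{algo:AbstractFixingAlgorithm} adds an edge $e$ to destroy a Jordan block we have $p_{A+B} \ne p_A$, so by (i) the edge $e$ must lie on a simple cycle of the updated graph, and since $e$ was absent before, that cycle --- indeed every simple cycle through $e$ --- is new. Together with the fact that adding edges never destroys existing cycles, this shows each such step strictly increases the number of simple cycles. The step requiring care is (ii): one must invoke the precise assertion of Theorem~\ref{thm:DestroyingJordanBlocks} that the Jordan structure at $\lambda$ becomes exactly the old one with the largest block deleted --- so that no new copies of $\lambda$ reappear as eigenvalues --- rather than the weaker observation that the Jordan form merely changes, which alone would not rule out an unchanged characteristic polynomial.
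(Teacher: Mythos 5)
Your proposal is correct and follows essentially the same route as the paper: the paper likewise combines Theorem~\ref{theorem:CoefficientTheoremDigraphs} (an added edge lying on no cycle leaves the characteristic polynomial unchanged) with the fact that the algorithm's edge does change the characteristic polynomial. Your step (ii), justifying that change via the drop in algebraic multiplicity of $\lambda$ guaranteed by Theorem~\ref{thm:DestroyingJordanBlocks}, is precisely the detail the paper leaves implicit, and you identify correctly that this is the step requiring care.
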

The added edges by Algorithm~\ref{algo:AbstractFixingAlgorithm} thus add periodic boundary conditions to certain subgraphs (see the example in Fig.~\ref{fig:7NodesSubgraphs}). Thus we term them generalized boundary conditions.

One could consider vertices with indegree or outdegree $=0$ (sources or sinks) as boundaries. Such vertices make $A$ non-invertible, i.e., produce eigenvalues $=0$. Our algorithm can be used to remove the eigenvalue 0 by adding edges, thus making $A$ invertible and removing sinks and sources. Note that the added edges in Fig.~\ref{subfig:SmallExampleGraphSecondEdge} achieved exactly that.

\mypar{Directed acyclic graphs} The class of directed acyclic graphs (DAGs) without self-loops constitutes in a sense the worst-case class for signal processing on
graphs. A DAG represents a partial order, and thus the vertices can be topologically sorted to make $A$ triangular, i.e., the characteristic polynomials is $p(x) = x^n$ and the only eigenvalue is $0$. Equivalently, no edge is part of a cycle and thus, by Theorem\ref{theorem:CoefficientTheoremDigraphs}, all edges can be removed without changing $p(x)$, which yields the same result $p(x) = x^n$.

As an example consider the product graph of two directed path graphs
in Fig.~\ref{fig:DirectedGrid}, which is a DAG. The vertices are numbered from 1 to 9 starting in the bottom left.
The JNF consists of three Jordan
blocks of sizes one, three, and five. Applying the proposed algorithm
yields the condition to destroy the largest Jordan block as
\begin{equation}
\label{eq:DirectedGridFirstBlockCondition}
6 b_{9,1} \not= 0,
\end{equation}
while the condition to destroy the second Jordan block is
\begin{equation}
\label{eq:DirectedGridSecondBlockCondition}
\tfrac{1}{2}( b_{6,2} - b_{6,4} - b_{8,2} + b_{8,4}) \not= 0.
\end{equation}
Hence adding the edge $(9,1)$ and any of the ones occurring
in~\eqref{eq:DirectedGridSecondBlockCondition} makes the graph
diagonalizable with distinct eigenvalues. One solution is shown in
Fig.~\ref{fig:DirectedGrid}. Adding one more edge, which can be obtained
from the condition
\begin{equation}
    \label{eq:DirectedGridThirdBlockCondition}
    \tfrac{1}{3}( b_{3,3} - b_{3,5} + b_{3,7} - b_{5,3} + b_{5,5} -
    b_{5,7} + b_{7,3} - b_{7,5} + b_{7,7}) \not= 0,
\end{equation}
can destroy the last block for eigenvalue 0 to make $A$ invertible.

\begin{figure}
	\centering
    \subfloat[b][\label{subfig:DirectedGrid}]{
       	\vphantom{
       		$\begin{bsmallmatrix}
       		\vphantom{-1}0 \\ 
       		& 0 & 1 & 0\\ 
       		& 0 & 0 & 1 \\ 
       		& 0 & 0 & 0 \\ 
       		&  &  &  & 0 & 1 & 0 & 0 & 0\\ 
       		&  &  &  & 0 & 0 & 1 & 0 & 0\\ 
       		&  &  &  & 0 & 0 & 0 & 1 & 0\\ 
       		&  &  &  & 0 & 0 & 0 & 0 & 1\\ 
       		\vphantom{-1}&  &  &  & 0 & 0 & 0 & 0 & 0
       		\end{bsmallmatrix}$}
       	\includegraphics[width=0.3\linewidth,valign=c]{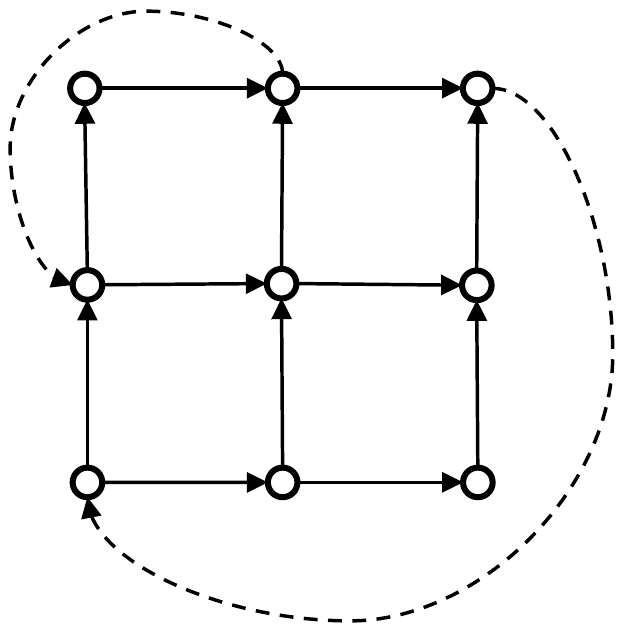}
    }
	\qquad
	\subfloat[b][\label{subfig:DirectedGridJordan}]{
          \vphantom{
		\includegraphics[width=0.3\linewidth,valign=c]{DirectedGrid}}
		$\begin{bsmallmatrix}
		\vphantom{-1}0 \\ 
		& 0 & 1 & 0\\ 
		& 0 & 0 & 1 \\ 
		& 0 & 0 & 0 \\ 
		&  &  &  & 0 & 1 & 0 & 0 & 0\\ 
		&  &  &  & 0 & 0 & 1 & 0 & 0\\ 
		&  &  &  & 0 & 0 & 0 & 1 & 0\\ 
		&  &  &  & 0 & 0 & 0 & 0 & 1\\ 
		\vphantom{-1}&  &  &  & 0 & 0 & 0 & 0 & 0
		\end{bsmallmatrix}$
	}
	\caption{The directed grid in (a) is an example of a directed acyclic
		graph with the JNF shown in (b). One possibility to destroy all Jordan blocks is adding both dashed edges $(8,4)$ and $(9,1)$.}\label{fig:DirectedGrid}
\end{figure}

Note that the common way of adding boundaries, if the two-dimensional DFT is used for spectral analysis, makes the graph a torus, which implies six added edges in this case.

\mypar{Weighted graphs} We concentrate in our theoretical
considerations on \emph{unweighted} directed graphs. This is justified
since from~\cite[Thm.~4.23]{Hershkowitz:1993a} it follows that if an
unweighted digraph is diagonalizable, then a generic weighted version
of the digraph, with weights not equal to zero, is diagonalizable as
well. Indeed, consider any weighted version of the example in
Fig.~\ref{fig:DFTModel} with nonzero weights $w_1, \dots, w_n$. Then
\begin{equation}
    \label{eq:WeightedPathGraph}
    \begin{bsmallmatrix}
	0 & w_1 & 0 & \cdots & 0 \\
	0 & 0 & w_2 & \cdots & 0 \\
	\vdots & & &  \ddots & \vdots \\
	0 & 0 & 0 & \cdots & w_{n-1} \\
	0 & 0 & 0 & \cdots & 0 \\
    \end{bsmallmatrix}
\end{equation}
has the JNF with one block shown in Fig.~\ref{subfig:shift0} with base change $V = \diag(1, \tfrac{1}{w_1}, \tfrac{1}{w_1 w_2}, \cdots, \tfrac{1}{w_1
  \dots w_{n-1}})$. On the other hand, any weighted version of the directed cycle 
\begin{equation}
    \label{eq:WeightedDirectedCycle}
    \begin{bsmallmatrix}
        0 & w_1 & 0 & \cdots & 0 \\
        0 & 0 & w_2 & \cdots & 0 \\
        \vdots & & &  \ddots & \vdots \\
        0 & 0 & 0 & \cdots & w_{n-1} \\
        w_n & 0 & 0 & \cdots & 0 \\
    \end{bsmallmatrix}    
\end{equation}
with $w_n\neq 0$ is diagonalizable. 

If the weights happen to be not generic, which can happen, for example, if they are integer values, it is straightforward to generalize Algorithm~\ref{algo:AbstractFixingAlgorithm} to this situation.

\subsection{Destroying Jordan Blocks of Directed Laplacians}
\label{subsec:AddingEdgesToKillJordanBlocks}%

We briefly explain the straightforward extension of our approach to directed Laplacians $L = D - A$, where $D$ is the matrix of outdegrees (alternatively indegrees) and $A$ the adjacency matrix. Note that this definition is not compatible with self-loops, which are thus disallowed. 

The only needed modification of Algorithm~\ref{algo:AbstractFixingAlgorithm} is to ensure that adding an edge maintains the Laplacian structure. This means that, in addition, $1$ has to be added on the main diagonal (or subtracted if an edge is removed). Thus, the perturbation $B$ has now two entries, -1 and 1, but in the same row, so the rank is still 1 and Corollary~\ref{corollary:DestroyConditionEasy} can be applied.

Note that necessarily 0 is an eigenvalue of every Laplacian with eigenvector $(1,1,\dots,1)^T$. It is also known that all Jordan blocks for eigenvalue $\lambda = 0$ have size 1 \cite{Caughman:2006}. We establish that the larger blocks (and thus for $\lambda \neq 0$) can indeed be destroyed by adding edges.

\begin{theorem}\label{prop:AddingOneEdgeToDestroyOneBlock}%
	Adding or deleting one edge is sufficient to destroy the largest
	Jordan block of a Laplacian for a chosen eigenvalue $\lambda\neq 0$.
\end{theorem}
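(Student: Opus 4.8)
The plan is to follow the proof of the adjacency-matrix analogue, now feeding into Corollary~\ref{corollary:DestroyConditionEasy} the Laplacian-preserving rank-one perturbation $B = e_i(e_j - e_i)^\Trans$ for a suitable pair $i\neq j$, where $e_i$ is the $i$-th standard basis vector; this $B$ has $B_{i,j}=1$ and $B_{i,i}=-1$ (adding the edge $(i,j)$), and $-B$ deletes it, so the Laplacian structure is preserved while $B$ stays rank one. Let $u_1,\dots,u_{r_1}$ and $v_1,\dots,v_{r_1}$ be the left and right eigenvectors of the largest Jordan blocks for $\lambda$, with $u_k^\Trans v_\ell = \delta_{k\ell}$. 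A direct computation gives
\[
\sum_{k=1}^{r_1} u_k^\Trans B v_k \;=\; \sum_{k=1}^{r_1} u_{k,i}\bigl(v_{k,j}-v_{k,i}\bigr) \;=\; Q_{i,j}-Q_{i,i},
\]
where $Q := \sum_{k=1}^{r_1} u_k v_k^\Trans \in \mathbb{C}^{n\times n}$. So it is enough to find $i\neq j$ with $Q_{i,j}\neq Q_{i,i}$: then one adds the edge $(i,j)$ if $A_{i,j}=0$, or deletes it (using $-B$) if $A_{i,j}=1$, and Corollary~\ref{corollary:DestroyConditionEasy} destroys one of the largest Jordan blocks.

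To produce such a pair I would argue by contradiction. If $Q_{i,j}=Q_{i,i}$ for all $i\neq j$, then (together with the trivial equality at $j=i$) every row of $Q$ is constant, so $Q = c\,\mathbf{1}^\Trans$ with $\mathbf{1}=(1,\dots,1)^\Trans$ and hence $\rank Q \leq 1$. On the other hand, biorthogonality $u_k^\Trans v_\ell = \delta_{k\ell}$ makes $\{u_k\}$ and $\{v_k\}$ linearly independent, so $Q = \bigl[\,u_1\ \cdots\ u_{r_1}\,\bigr]\bigl[\,v_1\ \cdots\ v_{r_1}\,\bigr]^\Trans$ has rank exactly $r_1$. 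Therefore $r_1=1$ and $Q = u_1 v_1^\Trans = c\,\mathbf{1}^\Trans$.

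It remains to rule out the corner case $r_1=1$. Choosing an index $i$ with $u_{1,i}\neq 0$ (which exists since $u_1\neq 0$) yields $v_1 = (Q_{i,i}/u_{1,i})\,\mathbf{1}$, so $v_1$ is a nonzero multiple of $\mathbf{1}$. But $v_1$ is a right eigenvector of $L=D-A$ for $\lambda$, and since $L\mathbf{1}=0$ (i.e. $\mathbf{1}$ is a right eigenvector of $L$ for eigenvalue $0$), $\lambda v_1 = L v_1 = 0$ forces $\lambda=0$, contradicting $\lambda\neq 0$. Hence a pair $i\neq j$ with $Q_{i,j}\neq Q_{i,i}$ always exists, and the theorem follows.

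The main obstacle is precisely this last step: when there is a single largest Jordan block ($r_1=1$) the linear-independence/rank argument no longer delivers a usable matrix entry on its own, and one must invoke the special structure of the Laplacian — that $\mathbf{1}$ is an eigenvector only for eigenvalue $0$ — which is exactly where the hypothesis $\lambda\neq 0$ enters (and is genuinely needed, since for $\lambda=0$ every Jordan block of a Laplacian already has size one). A minor point to handle carefully is that the perturbation must simultaneously preserve the Laplacian form (the compensating $-1$ on the diagonal) and remain rank one so that Corollary~\ref{corollary:DestroyConditionEasy} applies, and that the selected pair has $i\neq j$ because self-loops are excluded for Laplacians — both are automatic from the argument above.
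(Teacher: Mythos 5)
Your proof is correct and follows essentially the same route as the paper's: assume the rank-one condition of Corollary~\ref{corollary:DestroyConditionEasy} fails for every Laplacian-preserving perturbation, and derive the contradiction that the all-ones vector is an eigenvector for $\lambda\neq 0$. Your intermediate step --- showing the failure forces $Q=\sum_k u_k v_k^T$ to have rank at most $1$ while biorthogonality forces $\rank Q=r_1$, whence $r_1=1$ and $v_1\propto\mathbf{1}$ --- is in fact a cleaner justification of the step the paper states tersely (that the relevant entries of $w$ are ``all equal, say equal to $a\neq 0$''); the only cosmetic difference is the sign and row-versus-column support of $B$ (out-degree versus in-degree convention), which does not affect the argument.
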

\begin{proof}
Let $u_1,\ldots,u_r$ and $v_1, \ldots, v_r$ be the left and right eigenvectors
of the largest Jordan blocks for the eigenvalue $\lambda\neq 0$,
respectively. Then, as in Theorem~\ref{prop:AddingOneEdgeToDestroyOneBlock}, \eqref{eq:GenericityConditionEasy} yields
\begin{equation}\label{eq:lcond}
0 \neq \left(\sum_{k=1}^r v_k^T \tensor u_k^T\right) \vec(B) = w^T\vec(B).             
\end{equation}
where $B$ has the structure explained above. Assume this is not possible. Then, in particular, the first $n$ elements of $w$ are all equal, say equal to $a\neq 0$. From \eqref{eq:lcond} we get
$$
a\cdot(1,1,\dots,1)^T = v_{1,1}u_1^T + \dots v_{r,1}u_r^T,
$$
which implies that $(1,1,\dots,1)^T$ is an eigenvector for $\lambda\neq 0$ a contradiction.
\end{proof}

\mypar{Example} We consider one example.
\begin{example}
    \label{example:DirectedLaplacianDirectedGrid}%
    Consider the directed Laplacian of the graph in
    Fig.~\ref{fig:DirectedGrid}. The Jordan structure is
    \begin{equation}
        \label{eq:DirectedLaplacianDirectedGridJordan}
        \begin{bsmallmatrix}
            0 & \\ 
            & 1 & 1 & \\ 
            & 0 & 1 & \\ 
            &  &  & 1 & 1 & \\ 
            &  &  & 0 & 1 &  \\ 
            &  &  &  &  & 2 & \\ 
            &  &  &  &  &  & 2 & 1 & 0 \\ 
            &  &  &  &  &  & 0 & 2 & 1 \\ 
            &  &  &  &  &  & 0 & 0 & 2
        \end{bsmallmatrix}.
    \end{equation}
    The conditions to destroy the Jordan blocks of size greater than
    one are
    \begin{equation}
        \label{eq:DirectedLaplacianDirectedGridDestroyJordanBlocksConditions}
        \begin{split}
            b_{7,1} - b_{7,4} + b_{8,1} - b_{8,4} + b_{9,1} - b_{9,4}
            &\not= 0, \\
            b_{3,1} - b_{3,2} + b_{6,1} - b_{6,2} + b_{9,1} - b_{9,2}
            &\not= 0, \\
            2 b_{9,1} - 2 b_{9,2} - 2 b_{9,4} + 2 b_{9,5} &\not= 0.
        \end{split}
    \end{equation}
    Hence the largest Jordan blocks to the eigenvalues $1$ and $2$ can
    be destroyed by adding the edge $(9,1)$. Note that this does not
    destroy both Jordan blocks to the eigenvalue $1$, since a
    perturbation of rank one can at most destroy one Jordan block to
    an eigenvalue. The condition to destroy the remaining Jordan block
    to the eigenvalue $1$ reads
    \begin{equation}
        \label{eq:DirectedLaplacianDirectedGridDestroyJordanBlocksConditionTwo}
        \tfrac{1}{2}( b_{3,2} + b_{3,4} + b_{6,2} + b_{6,4} + b_{7,2}
        - b_{7,4} + b_{8,2} - b_{8,4}) \not= 0.
    \end{equation}
Thus the choices are the same four edges as in \eqref{eq:DirectedGridSecondBlockCondition} plus four additional edges.
\end{example}

\mypar{Adjacency matrix versus Laplacian} In general, the diagonalizability of the adjacency matrix or Laplacian are different properties. Fig.~\ref{fig:CounterexamplesLaplacianDiagAdjacency} shows counterexamples for both implications.
\begin{figure}
    \centering
    \hfill
    \subfloat[\label{subfig:AdjacencyButNotLaplacian}]{
      \vphantom{
      $
      \begin{bsmallmatrix}
          \tfrac{1}{2}(5 - \sqrt{5}) \\
          & 0 \\
          & & 1 \\
          &   &   & \tfrac{1}{2}(5 + \sqrt{5})
      \end{bsmallmatrix}
      $}
      \includegraphics[width=0.075\textwidth,valign=c]{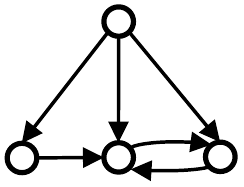}
    }
    \hfill
    \subfloat[\label{subfig:AdjacencyButNotLaplacianAdjJord}]{
      $
      \begin{bsmallmatrix}
          -1  \vphantom{\tfrac{1}{2}}\\
          & 0 & 1 \\
          & 0 & 0 \\
          &   &   & 1 \vphantom{\tfrac{1}{2}}
      \end{bsmallmatrix}
      $
    }
    \hfill
    \subfloat[\label{subfig:AdjacencyButNotLaplacianLapJord}]{
      $
      \begin{bsmallmatrix}
          \tfrac{1}{2}(5 - \sqrt{5}) \\
          & 0 \\
          & & 1 \\
          &   &   & \tfrac{1}{2}(5 + \sqrt{5})
      \end{bsmallmatrix}
      $
    }
    \hfill
    \

    \hfill
    \subfloat[\label{subfig:LaplacianButNotAdjacency}]{
    {
      \includegraphics[width=0.2\linewidth,valign=c]{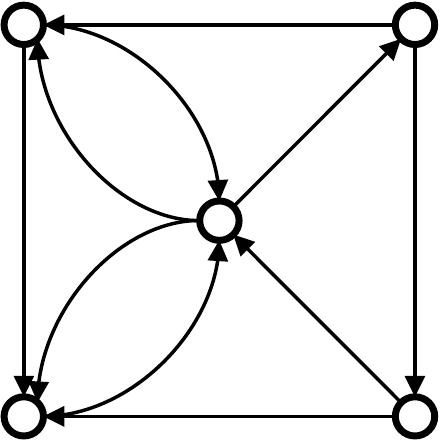}
      }
    } 
    \hfill
    \subfloat[\label{subfig:LaplacianButNotAdjacencyAdjJord}]{
      \vphantom{
      \includegraphics[width=0.2\linewidth,valign=c]{NonDiagLaplacianButAdjacency}}
      $
      \begin{bsmallmatrix}
          \tfrac{\I}{2} (\I - \sqrt{3}) \\
          &-1 \\
          & & 0 \\
          & & & 2 \\
          & & & & \tfrac{\I}{2} (\I + \sqrt{3})
      \end{bsmallmatrix}
      $
    } 
    \hfill
    \subfloat[\label{subfig:LaplacianButNotAdjacencyLapJord}]{
      \vphantom{
      \includegraphics[width=0.2\linewidth,valign=c]{NonDiagLaplacianButAdjacency}}
      $
      \begin{bsmallmatrix}
          0 \vphantom{\tfrac{\I}{2}}\\
          & 2 & 1\\
          & & 2 & 1 \\
          & & & 2 \\
          & & & & 4 \vphantom{\tfrac{\I}{2}}
      \end{bsmallmatrix}
      $
    }     
    \caption{Counterexamples to implications of diagonalizability of
      adjacency/Laplacian. From left to right: graph, adjacency Jordan
      structure, Laplacian Jordan structure.}
    \label{fig:CounterexamplesLaplacianDiagAdjacency}
\end{figure}


\section{Algorithm and Implementation}
\label{sec:NumericalAlgorithm}%

In this section we explain how to implement the Algorithm~\ref{algo:AbstractFixingAlgorithm} numerically. The challenge is to achieve both numerical stability and scalability to large graphs, where the former is necessary for the latter. More concretely, we address two main challenges. First, the algorithm in Fig.~\ref{algo:AbstractFixingAlgorithm} requires the eigenvectors of all largest Jordan blocks, but the Jordan basis is not computable for larger graphs. Second, small numerical errors can lead to the addition of unnecessary edges. Thus we need suitable heuristics.

Finally, we argue that in real-world graphs most of the non-trivial Jordan blocks are associated with the eigenvalue $0$. We exploit this observation with a special algorithm variant that enables scaling to graphs with several thousands of nodes.

We implemented our algorithms in Matlab\footnote{The code is available as open source at \url{https://github.com/bseifert-HSA/digraphSP-generalized-boundaries}.}, which requires some additional details that we explain as well.

\begin{figure*}\footnotesize
	\begin{algorithmic}
		\Function{\textbf{DestroyJordanBlocks}}{$A$}
		\State $U, V \leftarrow$ left and right eigenvectors of $A$
		\While {$\operatorname{rank}(V,\epsilon_R) < n$}
		\Comment{\parbox[t]{.55\linewidth}{Check if eigenvectors form a basis}} 
		\State $D \leftarrow \operatorname{acos}(| V^T \cdot V|)$ 
		\Comment{\parbox[t]{.55\linewidth}{Pairwise angles between
				subspaces spanned by eigenvectors}}
		\State $k \leftarrow \operatorname{argmax}_k( \#( D_{k,i} <
		\epsilon_D))$ 
		\Comment{\parbox[t]{.55\linewidth}{Index of eigenvector for largest Jordan block}}
		\State $(i,j) \leftarrow
		\operatorname{argmax}_{i,j}(|U_{i,k}| \cdot 
		|V_{j,k}|)$ s.t. $A_{i,j} = 0$
		\Comment{\parbox[t]{.55\linewidth}{Choose edge which destroys
				the largest Jordan block}}
		\State $A_{i,j} \leftarrow 1$
		\Comment{\parbox[t]{.55\linewidth}{Add the new edge}}
		\State $U,V \leftarrow$ right and left eigenvectors of $A$
		\EndWhile
		\State \textbf{return} $A$
		\EndFunction 
	\end{algorithmic}
	\caption{Algorithm for obtaining a digraph with a diagonalizable
		adjacency matrix by destroying all Jordan blocks. In our
		experiments a rank tolerance $\epsilon_R = 10^{-6}$ and an
		eigenspace angle tolerance $\epsilon_D$ of one degree were used.}
	\label{algo:FixDigraphSpectrum}%
\end{figure*}

\begin{figure*}\footnotesize
	\centering
	\begin{algorithmic}
		\Function{\textbf{DestroyZeroEigenvalues}}{$A$}
		\State $D \leftarrow$ eigenvalues of $A$
		\While{there exists $|D_i| < \epsilon_Z$}
		\Comment{\parbox[t]{.55\linewidth}{Very small eigenvalues are considered zero}}
		\State $u,v \leftarrow$ right/left eigenvector to $D_i$          
		\Comment{\parbox[t]{.55\linewidth}{Compute only one right/left eigenvector to $D_i$}}
		\State $(i,j) \leftarrow \operatorname{argmax}_{i,j}(|u_i| \cdot
		|v_{j}|)$ s.t. $A_{i,j} = 0$
		\Comment{\parbox[t]{.55\linewidth}{Choose edge to be added}}
		\State $A_{i,j} \leftarrow 1$
		\Comment{\parbox[t]{.55\linewidth}{Add the new edge}}
		\State $D \leftarrow$ eigenvalues of $A$
		\EndWhile
		\State \textbf{return} $A$
		\EndFunction 
	\end{algorithmic}    
	\caption{Algorithm for removing all zero eigenvalues of a
		digraph. In our experiments we choose $\epsilon_Z = 10^{-3}$ as tolerance for
		identifying zeros.}
	\label{algo:KillAllZeros}
\end{figure*}

\subsection{Numerical Algorithm: Details}
\label{subsec:ObstaclesNumerical}%

We explain the additional details to make the algorithm
in Fig.~\ref{algo:AbstractFixingAlgorithm} efficiently applicable in practice.

\mypar{Eigenvectors of largest Jordan blocks} The algorithm requires the eigenvectors of all largest Jordan blocks. In practice, we cannot determine the largest Jordan blocks via computing the JNF (except for very small graphs). Typical implementations of the
eigendecomposition, as the one used in Matlab, give as output for
non-diagonalizable matrices still a complete matrix of eigenvectors, in which, however, each eigenvector is usually repeated as often as the corresponding Jordan block size. Thus, as a first heuristic, we compute the pairwise angles between the spaces
spanned by the eigenvectors and determining the largest group with
angles very close to zero.

Since there is no certain way to obtain the left and right eigenvectors of all largest blocks as needed by Corollary~\ref{corollary:DestroyConditionEasy}, we only compute them for one block. Thus, as a second heuristic, we aim to make only one summand in condition~\eqref{eq:GenericityConditionEasy} nonzero, which makes the entire sum nonzero in the generic case that no cancellation occurs. Further, doing so still guarantees that one Jordan block (but not necessarily the largest) is destroyed:
  \begin{lemma}[\cite{Savchenko:2020a}]
      \label{lemma:SoundHeuristics}%
      If $u^T B v \not= 0$ for some left and right eigenvectors
      associated with an eigenvalue $\lambda$ of $M$, then exactly one
      Jordan block to $\lambda$ will be destroyed under the transition
      from $M$ to $M + B$.
  \end{lemma}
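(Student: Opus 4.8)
The plan is to turn the hypothesis into a transparent condition on the vectors defining the perturbation, control the \emph{geometric} multiplicity of $\lambda$ by an elementary kernel computation, and then upgrade ``one block fewer'' to ``exactly one block destroyed'' by invoking the known structural behaviour of rank-one perturbations. First I would normalize: after a similarity we may assume $M$ is in Jordan form (this affects neither the Jordan structure nor the scalars $u^{T}Bv$), and $B=ab^{T}$ has rank one, as it does in Algorithm~\ref{algo:AbstractFixingAlgorithm}. Then $u^{T}Bv=(u^{T}a)(b^{T}v)$, so the assumption ``$u^{T}Bv\neq 0$ for some left and right eigenvectors of $\lambda$'' is equivalent to: $a$ is not orthogonal to the left eigenspace $L=\ker(M^{T}-\lambda I)$ \emph{and} $b$ is not orthogonal to the right eigenspace $R=\ker(M-\lambda I)$. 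Using $\operatorname{im}(M-\lambda I)=L^{\perp}$, the first of these reads $a\notin\operatorname{im}(M-\lambda I)$.

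Next I would compute $\ker(M+B-\lambda I)$. A vector $x$ lies in it iff $(M-\lambda I)x=-(b^{T}x)\,a$; since $a\notin\operatorname{im}(M-\lambda I)$ this forces $b^{T}x=0$, hence $(M-\lambda I)x=0$. Therefore $\ker(M+B-\lambda I)=\{\,v\in R : b^{T}v=0\,\}$, and because $b$ restricted to $R$ is a nonzero functional this space has dimension $\dim R-1$. Thus the geometric multiplicity of $\lambda$ for $M+B$ — that is, the number of Jordan blocks of $M+B$ at $\lambda$ — is exactly one less than that of $M$. (That it can drop by at most one is clear a priori as well, since $\operatorname{rank}(M+B-\lambda I)\le\operatorname{rank}(M-\lambda I)+\operatorname{rank}(B)$.) In particular at least one, and not more than one, block ``disappears from the list''.

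It remains to rule out that, while one block disappears, a surviving block simultaneously shrinks in size; this is the only step that is not elementary. Here I would invoke the structural fact used in \cite{Savchenko:2020a} (and underlying Theorem~\ref{thm:DestroyingJordanBlocks} and Corollary~\ref{corollary:DestroyConditionEasy}): a rank-one perturbation changes the Jordan structure at each eigenvalue in at most one block, i.e.\ the Segre partition at $\lambda$ loses, gains, or alters at most one part. Combined with the previous paragraph — the number of parts at $\lambda$ decreases by exactly one — the only remaining possibility is that exactly one part is deleted and all others are unchanged, which is the assertion. The crux, which I would cite rather than reprove, is precisely this control of the \emph{full} Jordan structure (not just the geometric multiplicity): it is obtained by analysing all the kernels $\ker(M+B-\lambda I)^{k}$ and using that $(M+B-\lambda I)^{k}$ differs from $(M-\lambda I)^{k}$ by a matrix whose column and row spaces lie in the Krylov subspaces generated by $a$, respectively $b^{T}$, under $M-\lambda I$.
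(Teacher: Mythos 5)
First, note that the paper does not actually prove this lemma: it is quoted from \cite{Savchenko:2020a} (a communication from Savchenko, see the acknowledgments), so there is no in-paper argument to compare yours against. Your first two paragraphs are correct and, in my view, already prove the statement in the sense in which the paper uses it. Writing $B=ab^{T}$ (the rank-one assumption is implicit in the lemma but essential, and you are right to make it explicit), the hypothesis is exactly $a\notin\operatorname{im}(M-\lambda I)$ together with $b\notin\operatorname{im}(M^{T}-\lambda I)$, and your kernel computation shows $\dim\ker(M+B-\lambda I)=\dim\ker(M-\lambda I)-1$, i.e.\ the number of Jordan blocks at $\lambda$ drops by exactly one. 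That is all the paper needs: the lemma is invoked only to guarantee that each iteration of the heuristic makes progress.

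Your third paragraph, however, rests on a false premise. It is not true that a rank-one perturbation ``alters at most one part of the Segre partition at $\lambda$,'' and the stronger conclusion you are after --- one part deleted, all other parts unchanged --- is itself false even under the hypotheses of the lemma. Take $M=J_{2}(0)\oplus J_{2}(0)$ on the basis $e_{1},\dots,e_{4}$ (so $Me_{2}=e_{1}$, $Me_{4}=e_{3}$) and $B=(e_{2}+e_{4})(e_{1}-e_{3})^{T}$. Then $u=e_{2}$ and $v=e_{1}$ are left and right eigenvectors with $u^{T}Bv=1\neq 0$, yet $M+B$ is nilpotent of index $4$ with one-dimensional kernel, i.e.\ a single block $J_{4}(0)$: the partition goes from $(2,2)$ to $(4)$. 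The block count drops by one, consistent with your first two paragraphs, but the surviving block doubles in size. (Even ignoring the hypothesis, $B=e_{4}e_{2}^{T}$ sends $(2,2)$ to $(3,1)$, altering two parts at once.) So you should stop after the kernel computation and read ``exactly one Jordan block is destroyed'' as ``the geometric multiplicity of $\lambda$ decreases by exactly one''; any attempt to additionally freeze the sizes of the remaining blocks cannot succeed.
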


Both heuristics are robust in the sense that in the worst case they add useless edges, which does not affect termination, as discussed before.

\mypar{Choice of edge} In general,
Algorithm~\ref{algo:AbstractFixingAlgorithm} produces in each
iteration several choices for the edge to add, based on the sparsity
pattern of our chosen (with above heuristic) $u$ and $v$. Since very small nonzero values could be rounding errors, we choose the edge corresponding to the maximal
absolute value in both $v$ and $u$ for stability.

\mypar{Sparsity and eigenvalue \boldmath$0$} An adjacency matrix can
have nontrivial Jordan blocks for any eigenvalue (this can be shown
using the rooted product of graphs). However, in real-world graphs and some
of the random graph models commonly considered, we frequently observe
the eigenvalue 0 with high multiplicity, which was also observed
in~\cite{Deri.Moura:2017b}. This observation can be explained
with Theorem~\ref{theorem:CoefficientTheoremDigraphs}: real-world
graphs are typically sparse so it is likely that several edges are not
part of any cycle, which yields a large factor $x^m$ in the
characteristic polynomial.

This observation is valuable, since it is computationally much cheaper
to compute only one eigenvector to a known eigenvalue. Matlab offers
the function \texttt{eigs} for this purpose. As an additional benefit, this function also has special support for sparse matrices unlike the \texttt{eig} function.

\subsection{Implementation}
\label{subsec:ImplementationDetails}%

We used the above insights and heuristics to refine Algorithm~\ref{algo:AbstractFixingAlgorithm} into two algorithms. Algorithm~\ref{algo:FixDigraphSpectrum} destroys all Jordan blocks of a digraph to obtain a diagonalizable adjacency matrix. As an optional preprocessing step, the considerable more efficient Algorithm~\ref{algo:KillAllZeros} adds edges to remove all zero eigenvalues and hence yields an invertible adjacency matrix. Note that the
algorithms require numerical tolerance parameters to determine which
eigenvalues are 0, and which eigenvectors should be considered as
collinear or equal.

\mypar{DestroyJordanBlocks} From the above considerations we
can now derive the numerical Algorithm~\ref{algo:FixDigraphSpectrum}. First we calculate the left and right eigenmatrices $U,V$ of the adjacency matrix $A$. While $A$ is not diagonalizable, which we check by testing if $V$ is rank-deficient, we destroy iteratively the Jordan blocks. For this we use our heuristic and first calculate all the angles $D$ between the subspaces spanned by the elements of $V$.
Then we obtain the eigenvector, which most likely corresponds to the
largest Jordan block, by finding the index $k$ for which most entries of
$D$ are approximately zero. To find the edge, we maximize over the
product of the entries of the left and right eigenvectors
$|U_{i,k}| \cdot |V_{j,k}|$ under the constraint that $A_{i,j} = 0$.
The case of the random edge in Algorithm~\ref{algo:AbstractFixingAlgorithm} occurs here if $A_{i,j} = 1$ whenever the product is $>0$, i.e., if the maximum is zero.

For the implementation of Algorithm~\ref{algo:FixDigraphSpectrum} in
Matlab we use the \texttt{eig} function with the \texttt{nobalance} option.  With these options, Algorithm~\ref{algo:FixDigraphSpectrum} is applicable
to all matrix sizes for which one can calculate the complete
eigendecomposition of a full matrix.

For the computation of $\rank(V)$, Matlab requires a tolerance, for which we chose $\epsilon_R = 10^{-6}$, meaning that the smallest singular value fulfills $\sigma_{\min} > 10^{-6}$. In~\cite{Domingos.Moura:2020a} this condition was used to define a Fourier basis as numerical stable. Thus our constructed bases are stable in the same sense. As tolerance $\epsilon_D$ to identify two eigenvectors we observed that an angle of one degree is a good choice.

\mypar{DestroyZeroEigenvalues} Algorithm~\ref{algo:KillAllZeros} destroys all zero eigenvalues. Here we first calculate the eigen\emph{values} of the adjacency
matrix $A$. As long as an eigenvalue is approximately zero, we calculate an associated left and right eigenvector. Then we destroy the Jordan block to that eigenvalue similar as in Algorithm~\ref{algo:FixDigraphSpectrum}. 

We implement Algorithm~\ref{algo:KillAllZeros} using sparse matrices in the compressed sparse row (CSR) format and use the Matlab function \texttt{eigs} to find one eigenvector to the numerical eigenvalue zero and destroy the corresponding Jordan block. Thus, Algorithm~\ref{algo:KillAllZeros} scales to all matrix sizes for
which one can calculate one eigenvector for a sparse matrix. In our
experiments we consider eigenvalues as zero if their absolute value
is $\leq \epsilon_Z = 10^{-3}$.

Since we argued already that in real-world graphs typically many
Jordan blocks are associated to the eigenvalue $0$, one can obtain a
significant speedup by first removing all zeros from the eigenvalues
of a graph using Algorithm~\ref{algo:KillAllZeros} and then,
afterwards, applying the more costly
Algorithm~\ref{algo:FixDigraphSpectrum} to destroy the remaining
Jordan blocks.

\mypar{Robustness} We note that our algorithms yield an inherent
robustness property: our parameter settings ensure that the
final digraph obtained does not have very small eigenvalues, does
not have almost collinear eigenspaces, and produces a numerically
stable Fourier basis.

\mypar{Complexity}
In the implementation of DestroyZeroEigenvalues we use the sparse CSR matrix format, which requires space of size $O(\max(n,m))$, where $n$ is the size of the matrix and $m$ the number of nonzero entries~\cite{Gilbert.Moler.Schreiber:1992a}.
Updating the adjacency matrix with a new edge requires $O(m)$
operations. \texttt{eigs} calculates an eigenvector to the eigenvalue zero
in time $O(n k^2)$ using a Krylov-Schur algorithm~\cite{Stewart:2001a}; $k$ depends on the rate of convergence which is hard to estimate beforehand, 

The implementation of DestroyJordanBlocks relies on full matrices and hence
the required storage is $O(n^2)$. The complexity of computing
the complete eigendecomposition and the pairwise angles is $O(n^3)$.


\section{Experiments and Applications}
\label{sec:Application}%

We evaluate our proposed algorithm and implementation with two kinds of experiments. First, we apply our algorithm to a set of random and real-world graphs to make them diagonalizable (and possibly invertible) and investigate the results. Then we show a Wiener filter as prototypical application that is enabled by using our approach that first establishes a complete basis of eigenvectors. Finally, we consider also the case of a Laplacian to demonstrate that our approach is equally applicable.

The experiments in this section, unless stated otherwise, were performed on a computer with an
Intel Core i9-9880H CPU and 32 GB of RAM. 

\subsection{Computing generalized boundary conditions}
\label{subsection:ModificationSomeGraphs}%

\mypar{Random digraphs} 
In our first experiment we apply our algorithm DestroyJordanBlocks in Fig.~\ref{algo:FixDigraphSpectrum} to four different classes of random digraphs~\cite{Prettejohn.Berryman.McDonnell:2011a}. We briefly recall their properties.

The Erdős–Rényi model creates homogeneous digraphs in the sense that
the degree distribution of the nodes decays symmetrically from the
mean degree, the average path length increases as the graph size
increases, and its clustering coefficient reduces as the graph size
increases.

The Watts-Strogatz model leads to small-world digraphs which means they have large clustering coefficients, unlike the Erdős–Rényi random graphs.

The Barabási-Albert model yields scale-free digraphs in the sense that their degree distribution is very inhomogeneous, which means they contain a large numbers of nodes with small
degree and a only a few hubs with large degree.

The fourth model is Klemm-Eguílez, which combines the small-world property of the Watts-Strogatz model with the scale-freeness of the Barabási-Albert model. Since it is conjectured that real-world networks are scale-free and small-world, these graphs may be particularly realistic. 

For each model we generated 100 random weakly connected graphs\footnote{If a graph is not weakly connected the components can be processed separately.} with 500 nodes. We set the model parameters to obtain an average of about 5000 edges in each case. For the Erdős–Rényi model we choose a success probability of connecting two nodes of $0.02$. We created Watts-Strogatz model graphs with $10$ edges to each node in the
initial ring lattice and a rewiring probability of $0.001$. The
Barabási-Albert model got as parameters a seed size of $10$ and an
average degree of the nodes of $10$. Finally we used the
Klemm-Eguílez model with seed size $5$ and a probability of
connecting to non-active nodes of $0.1$. The parameters are explained in~\cite{Prettejohn.Berryman.McDonnell:2011a}. 

For Erdős–Rényi, all generated graphs were diagonalizable. For the other models we 
summarize the results of applying DestroyJordanBlocks in Table.~\ref{tab:ResultsRandomGraphs}. The table reports the minimum, median, and maximum number of edges added to make them diagonalizable and the runtime to do so. The first main observation is that our algorithm works in each case and with a runtime that is easily acceptable for a one-time preprocessing step. For Watts-Strogatz very few edges are sufficient in all cases, whereas for the other two up to 1\% additional edges may be needed in the worst case.

\begin{table}
    \centering
    \begin{tabular}{@{}lllllllll@{}}\toprule
      & \multicolumn{2}{c}{min}& &
        \multicolumn{2}{c}{median}& &
        \multicolumn{2}{c}{max} \\
      \cmidrule{2-3} \cmidrule{5-6} \cmidrule{8-9}
      & edges & time & & edges & time & & edges & time \\
      \midrule
      Watts-Strogatz & 0 & 0.2s & & 1 & 0.5s & & 3 & 1.3s \\
      Barabási-Albert & 36 & 4.4s & & 44 & 10s & & 55 & 31s \\
      Klemm-Eguílez & 10 & 2.2s & & 27 & 6s & & 47 & 9s \\
      \bottomrule          
    \end{tabular}
    \caption{Edges added and runtime of DestroyJordanBlocks for three different
      random graph models with 500 nodes and approximately 5000
      edges.}
    \label{tab:ResultsRandomGraphs}
\end{table}

Next, we consider three real-world graphs.

\mypar{USA graph} First, we consider a small digraph consisting of the 48
contiguous US states with edges going from lower to higher latitude (see Fig.~\ref{fig:fixedUSAGraph}) that has been a popular use case in several publications (e.g.,~\cite{Shafipour.Khodabakhsh.Mateos.Nikolova:2019a,Furutani.Shibahara.Akiyama.Hato.Aida:2019a}). The graph consists of 48 nodes and 105 edges and is an extreme case since it is acyclic, i.e., only has the eigenvalue 0, with 7 Jordan blocks of sizes $13, 10, 9, 5, 5, 4, 2$, respectively. Application of DestroyJordanBlocks yields (the minimal needed number of) $7$ added edges shown in Fig.~\ref{fig:fixedUSAGraph} dashed in blue. The eigenvalues of the modified graph are shown in Fig.~\ref{fig:USAGraphEigenvalues}. They are all simple eigenvalues, and well-separated, which is ideal for any subsequent GSP analysis. Further, Fig.~\ref{fig:USAGraphEigenstructure} shows the angles between (spaces generated by the) eigenvectors. On the left for the Jordan basis of the original USA graph (which for this size is still computable) and on the right for the eigenbasis of the modified graph. The basis is not far from orthogonal, a property that will become more pronounced for the larger graphs considered next.

\begin{figure} \centering
    \includegraphics[width=0.8\linewidth]{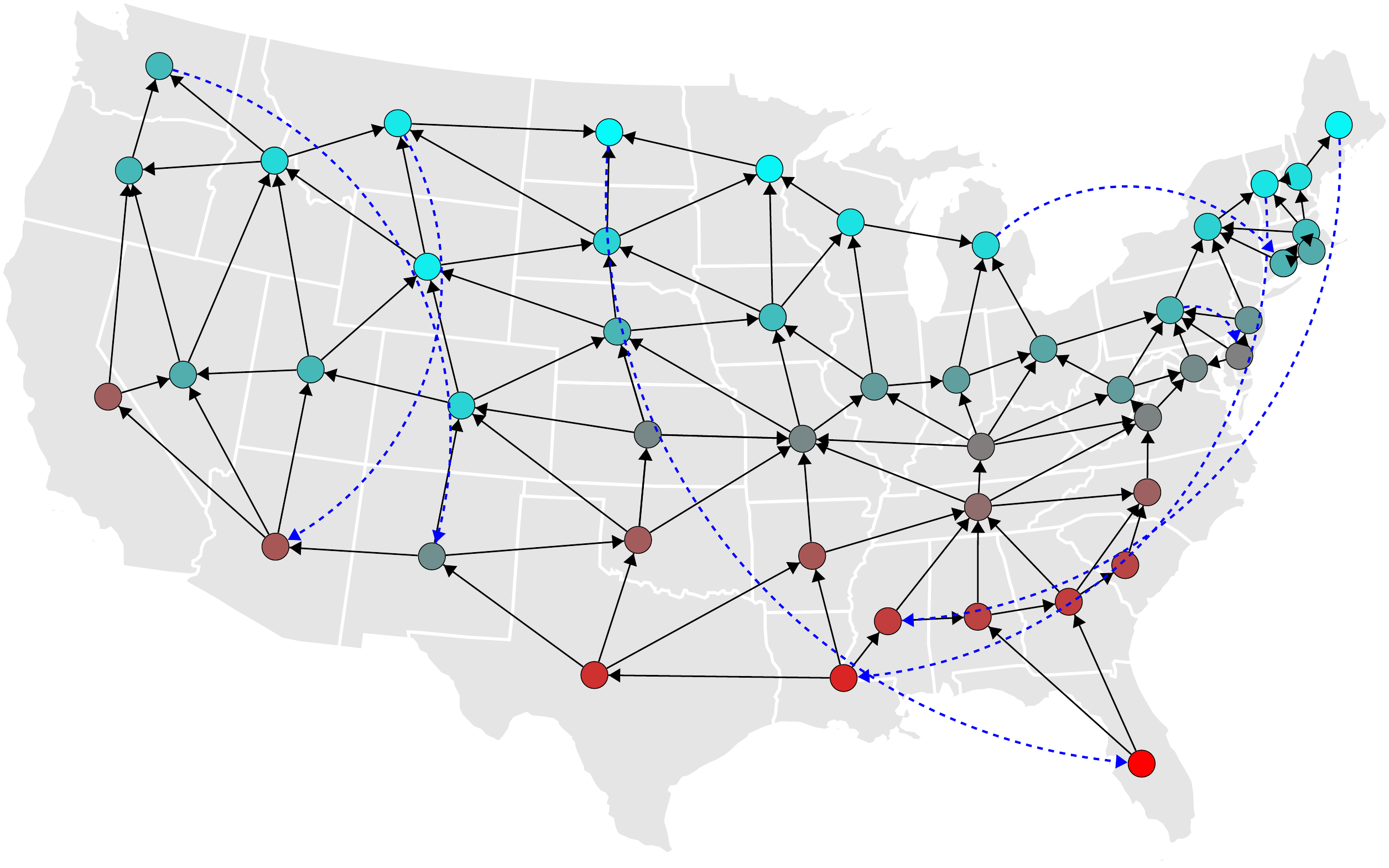}
    \caption{The USA graph. The 7 new edges added by
      DestroyJordanBlocks are shown as dashed blue.}
    \label{fig:fixedUSAGraph}
\end{figure}
\begin{figure}
    \centering
    \includegraphics[width=0.7\linewidth]{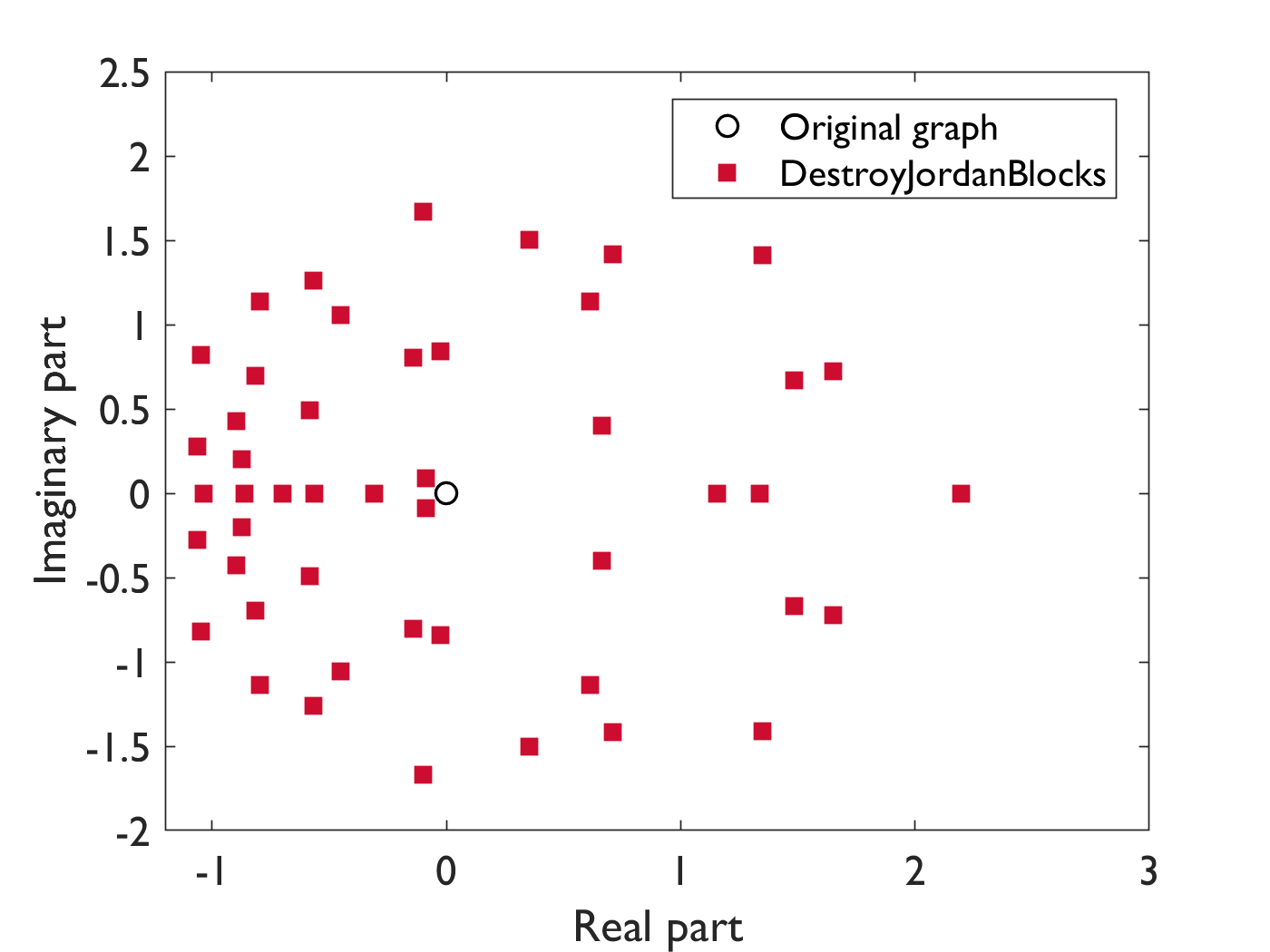}
    \caption{The eigenvalues of the USA graph (black circle),
      and after making it diagonalizable (red squares).}
    \label{fig:USAGraphEigenvalues}
\end{figure}
\begin{figure}
    \centering
    \includegraphics[width=0.49\linewidth]{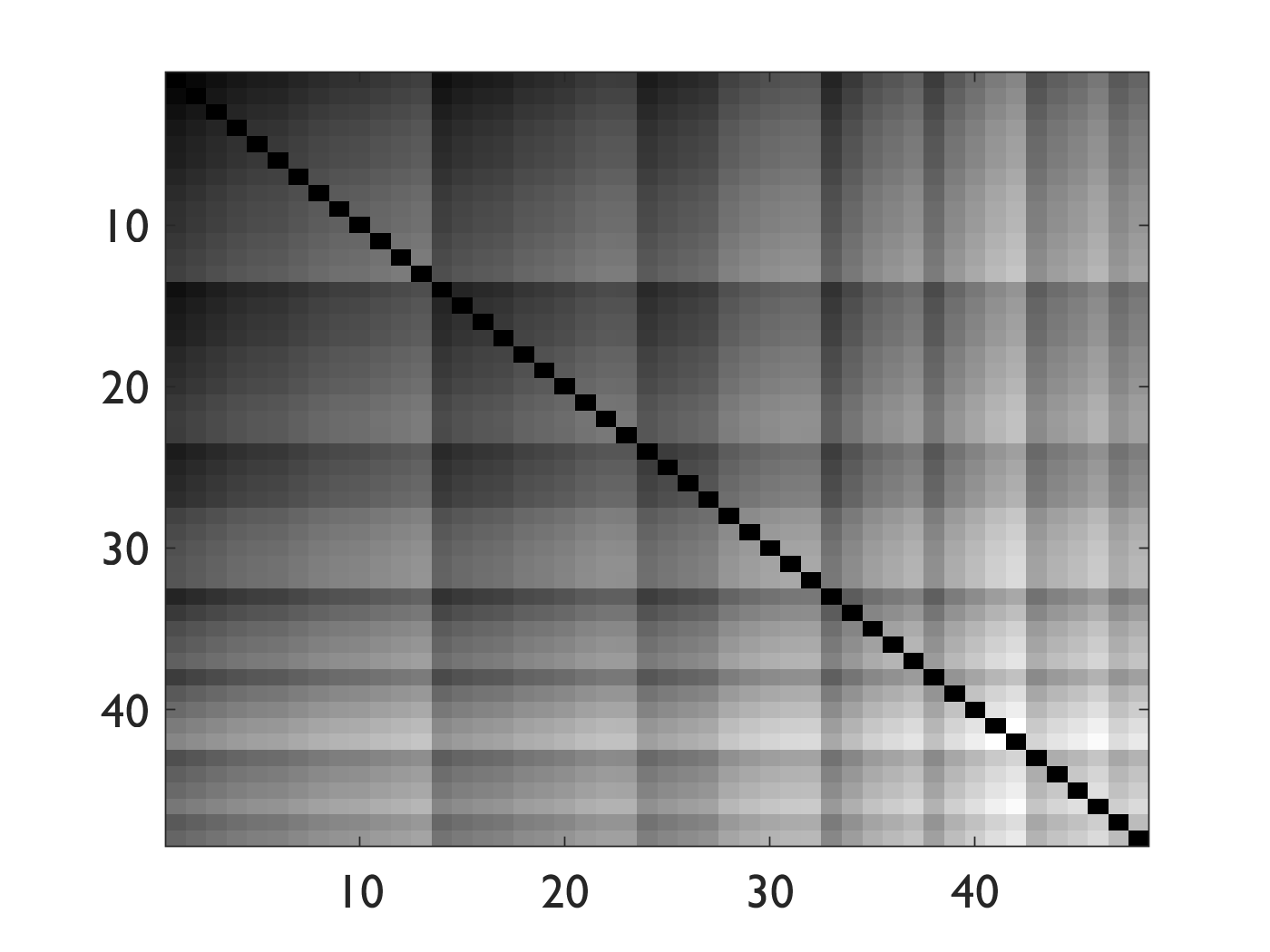}
    \includegraphics[width=0.49\linewidth]{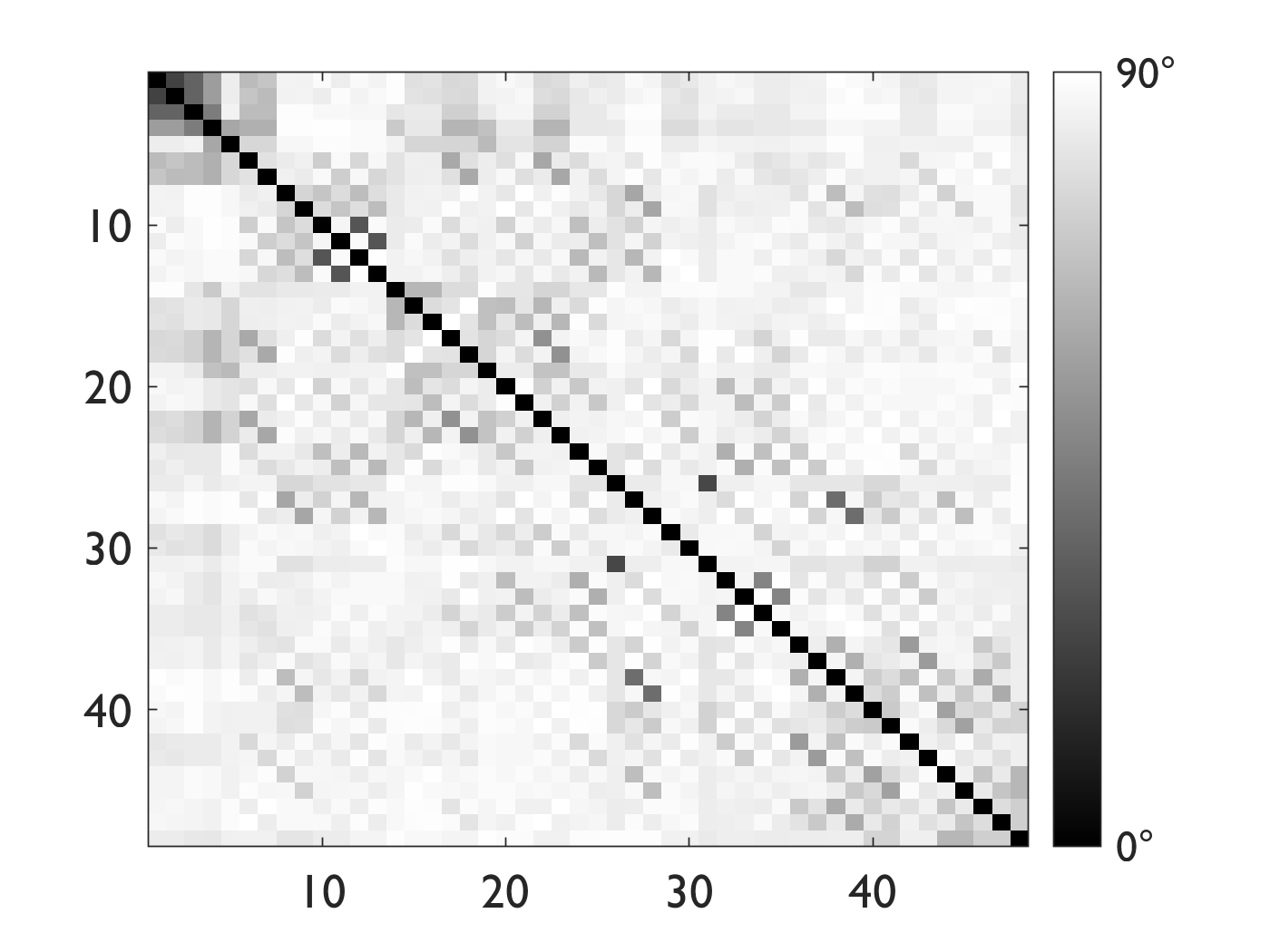}
    \caption{The angles between the computed generalized eigenvectors
      of the original USA graph (left) and the angles between the
      computed eigenvectors of the USA graph with the additional
      edges (right).}
    \label{fig:USAGraphEigenstructure}
\end{figure}

\mypar{Manhattan taxi graph} Next we demonstrate that our
algorithm can process large-scale graphs that are particular
challenging in numerical stability. First we consider the
Manhattan taxi graph used in~\cite{Li.Moura:2020a,Domingos.Moura:2020a}\footnote{The graph
  and the graph signal is based on data available at \url{https://www1.nyc.gov/site/tlc/about/tlc-trip-record-data.page}} and shown in Fig.~\ref{fig:ManhattanGraph}. The graph
consists of 5464 nodes, each representing a spatial location at the
intersection of streets or along a street. A directed edge means that traffic is allowed to directly move from one node to the other. The total number of edges in the graph is 11568.

\begin{figure}
    \centering
    \includegraphics[width=\linewidth]{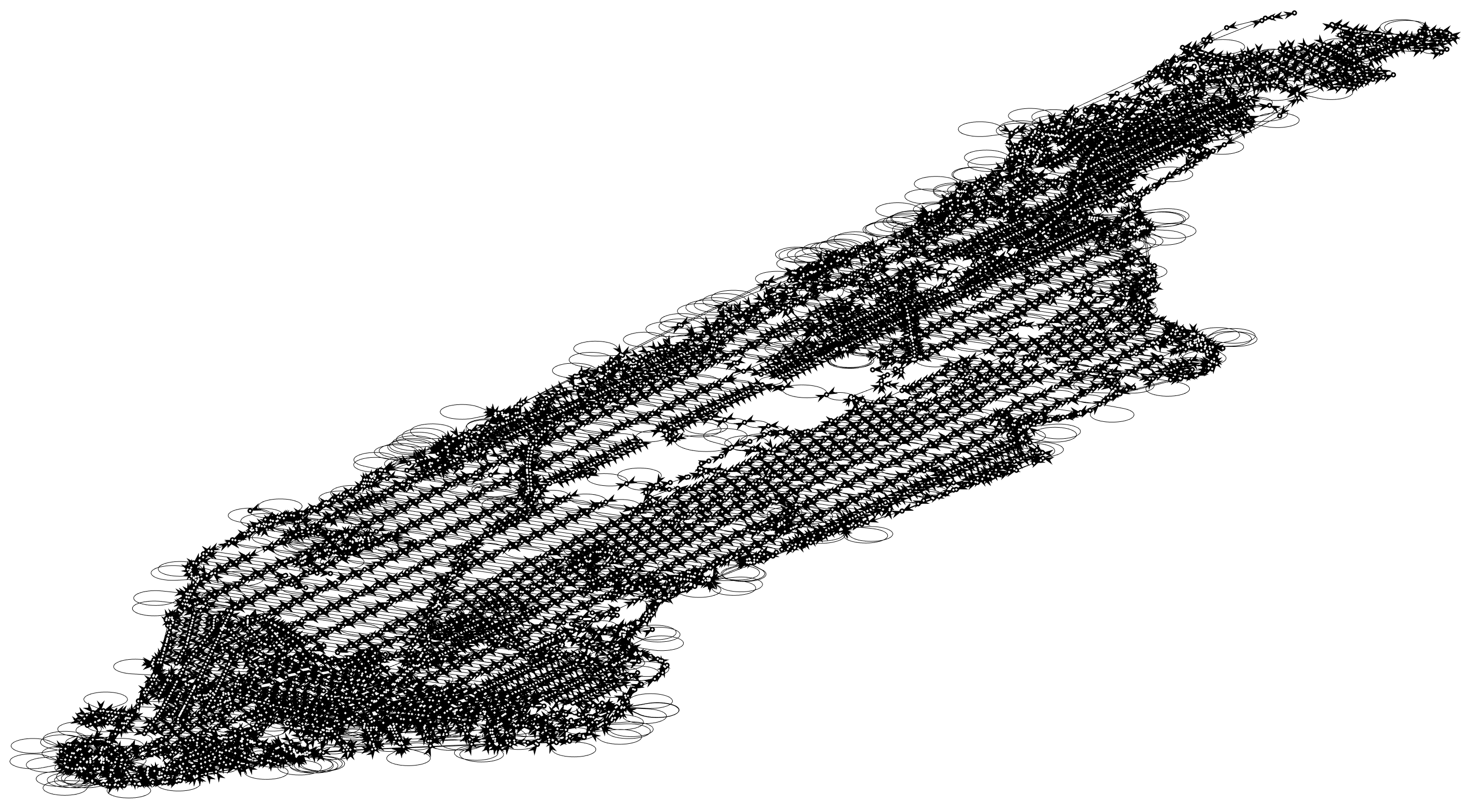}
    \caption{The Manhattan graph with 5464 nodes.}
    \label{fig:ManhattanGraph}
\end{figure}

Because of the large scale and rank deficiency, as explained in
Section~\ref{subsec:ImplementationDetails}, we first apply
DestroyZeroEigenvalues in Fig.~\ref{algo:KillAllZeros} to first remove
all zero eigenvalues, which took 2.3 minutes and added 772 edges
(about 6.7\%). Then we applied DestroyJordanBlocks in
Fig.~\ref{algo:FixDigraphSpectrum}, which added another 1 edge in 2.5
minutes for a total processing time of about 5 minutes. Our algorithm
guarantees that the resulting graph has a computed eigenmatrix with
full rank (with tolerance
$\sigma_\text{min}\geq \epsilon_R = 10^{-6}$), and a minimal angle
between computed eigenvectors of $\epsilon_D \geq$ one degree.
However, Fig.~\ref{fig:ManhattanHistogram} shows that the eigenbasis
is even very close orthogonal. It is also numerically stable:
$\sigma_{\min} = 0.0017$, $\sigma_{\max} = 4.9158$, i.e., the
condition number is $\kappa = \sigma_{\max} / \sigma_{\min} = 2892$, which
means we can compute a valid Fourier transform by
inversion\footnote{If the condition number is $\kappa$, about $\log_{10} \kappa$ decimal digits of precision are lost when inverting the matrix in floating point~\cite[p.~95]{Trefethen.Bau:1997a}. Here it is only 3 digits out of 16 available in double precision.}.

\begin{figure}
    \centering
    \includegraphics[width=0.49\linewidth]{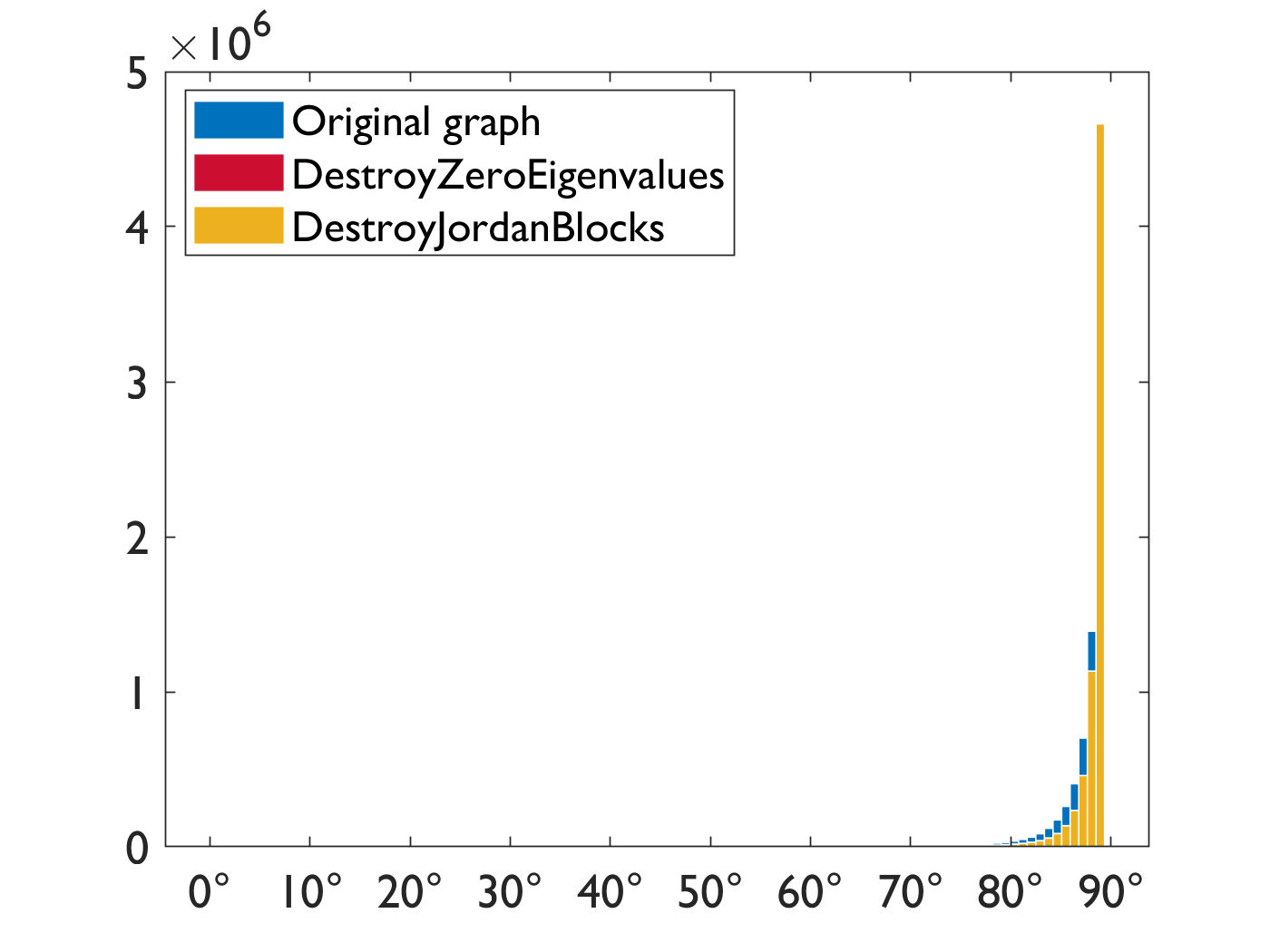}
    \includegraphics[width=0.49\linewidth]{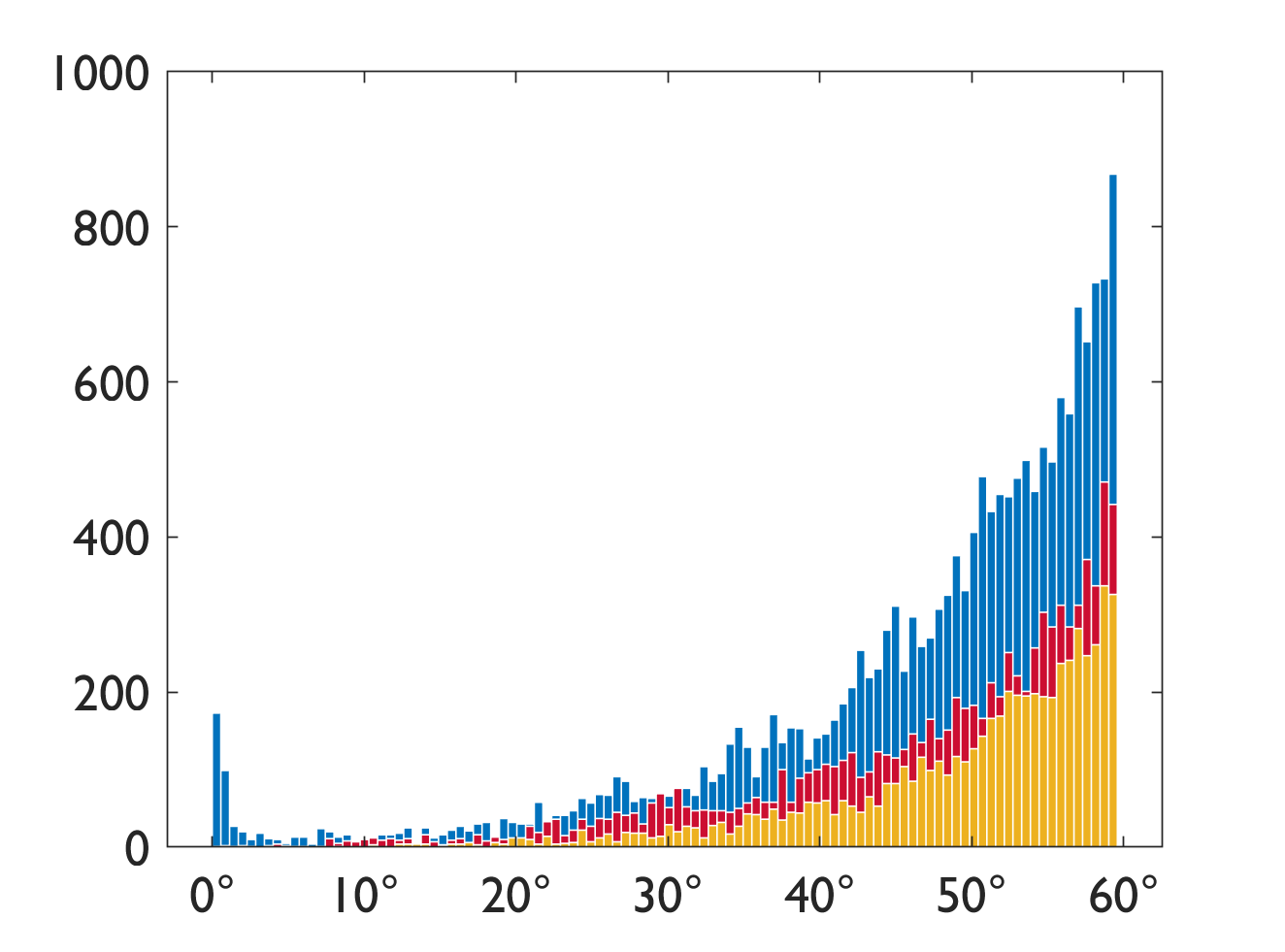}
    \caption{Manhattan graph: The histogram of all $5464^2$ angles between the spaces spanned by
      the computed eigenvectors: for all angles (left), and zoomed in on angles $\leq 60$ degrees (right).}
    \label{fig:ManhattanHistogram}
\end{figure}

To show the gain in computational complexity, we also applied only
DestroyJordanBlocks from Fig.~\ref{algo:FixDigraphSpectrum}. For this
experiment we used a computer with Intel Xeon CPU E5-2660, 2.2 GHz
with 128 GB of RAM. The algorithm added 243 additional edges within 19
hours. Even though the number of edges added differs significantly, the
eigenspace angle distribution for both approaches turned out very similar, meaning in
both cases one obtains an almost orthogonal graph Fourier transform.
The eigenbasis obtained by applying only
DestroyJordanBlocks is slightly less stable:
$\sigma_{\min} = 0.0006$, $\sigma_{\max} = 4.897$ for a
condition number of $\kappa = 7802$. Note that when using only DestroyJordanBlocks, the adjacency matrix still has eigenvalue zero with a high (namely 536) multiplicity. 

In the following we consider only the previous modified graph obtained with the fast method that combines both algorithms.

Fig.~\ref{fig:ManhattanGraphEigenvaluesCompare} shows that the eigenvalues of the modified graph lie in a similar range as those of the original graph, except for those near zero that we destroyed.

\begin{figure*}
    \centering
    \subfloat[\label{fig:ManhattanGraphEigenvaluesCompare} Eigenvalues of $A$ (blue) and $A + B$ (red).]{
      \includegraphics[width=0.33\linewidth]{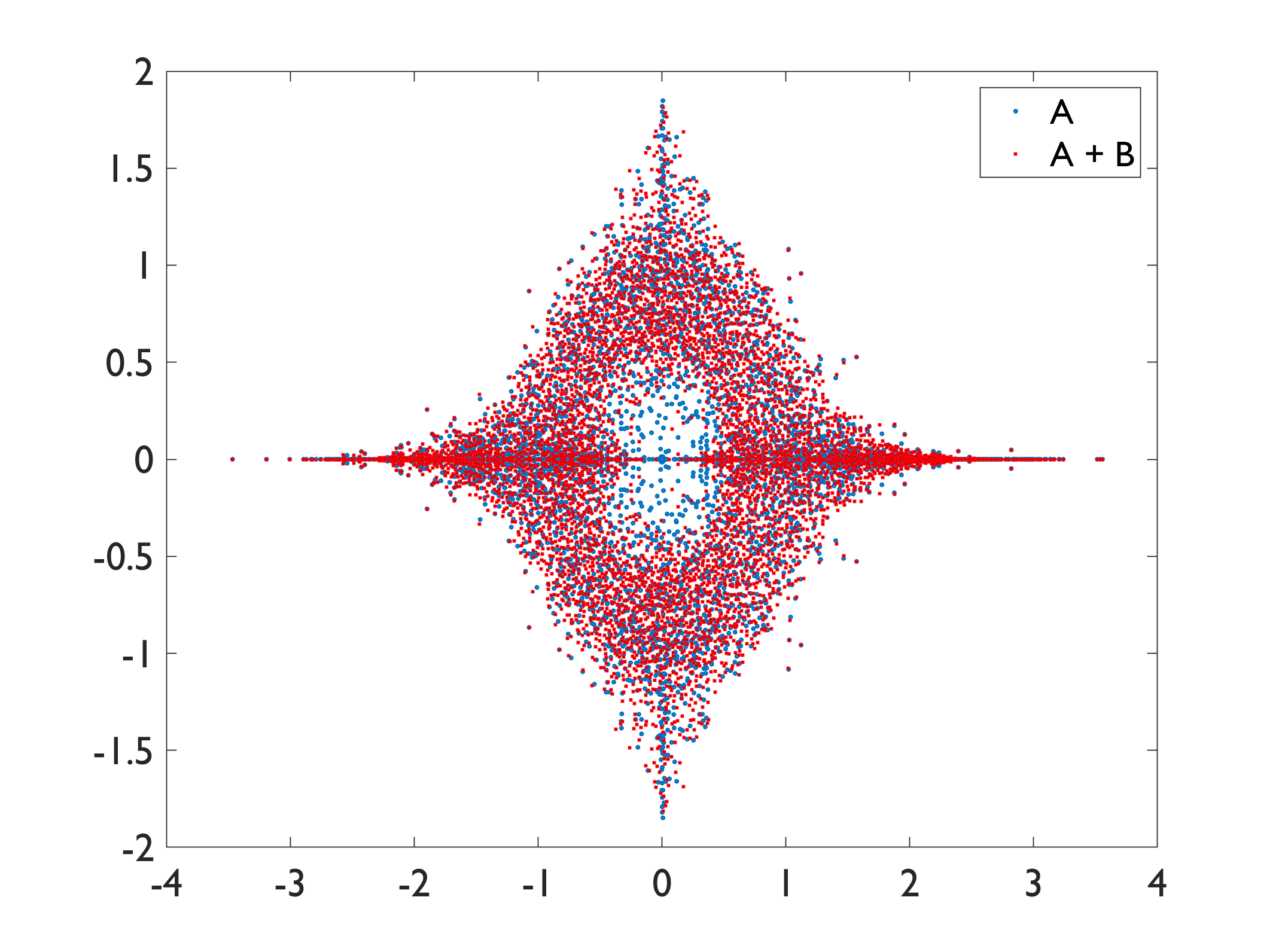} 
    }
    \subfloat[$\TV_A(v)$ vs.~$\TV_{A+B}(v)$.
    \label{fig:TotalVariationManhattanCompare}]{
      \includegraphics[width=0.33\linewidth]{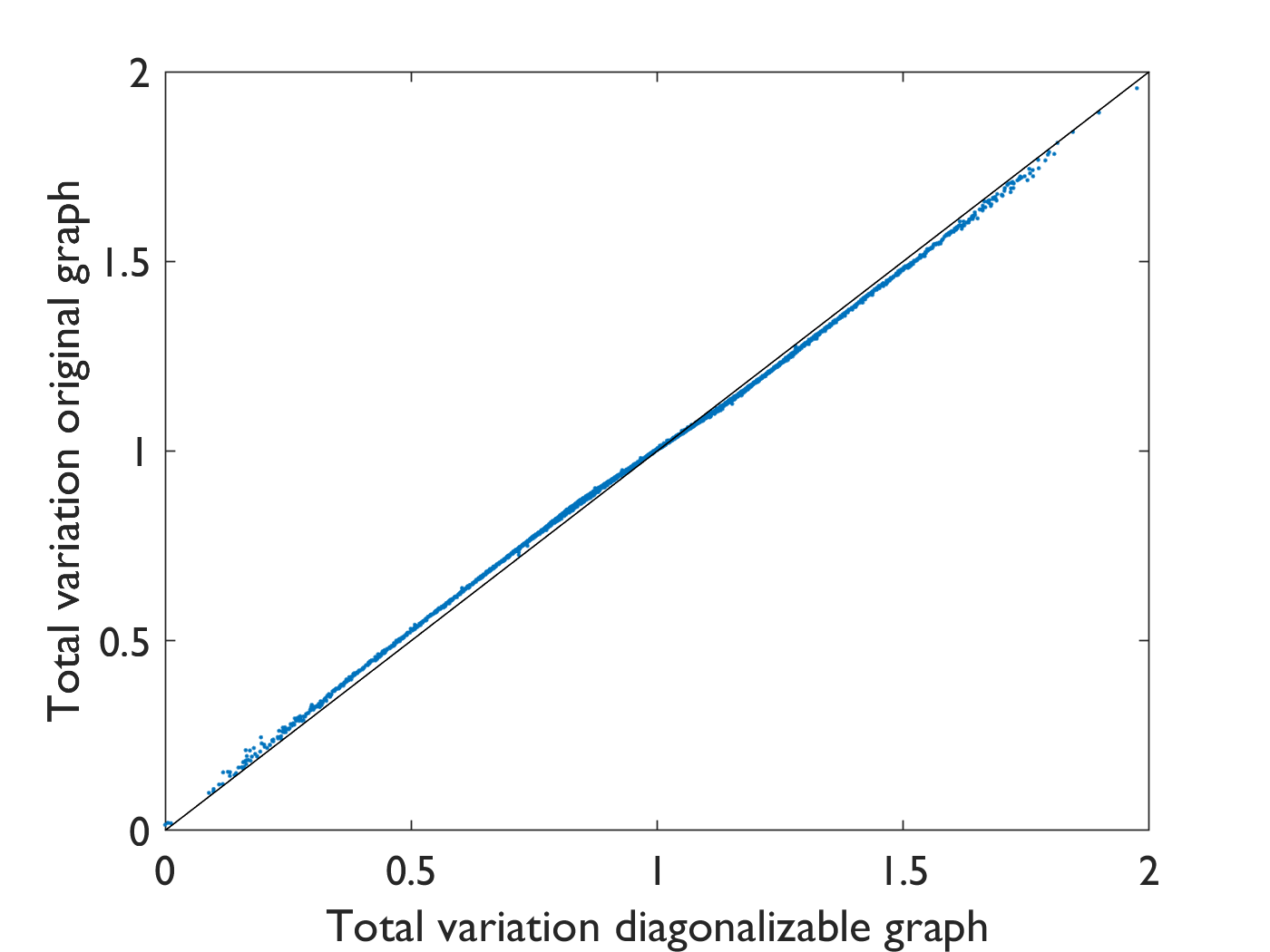} 
    } 
    \subfloat[Distribution of $\TV_A(v)$. \label{fig:TotalVariationPlotManhattan}]{
      \includegraphics[width=0.33\linewidth]{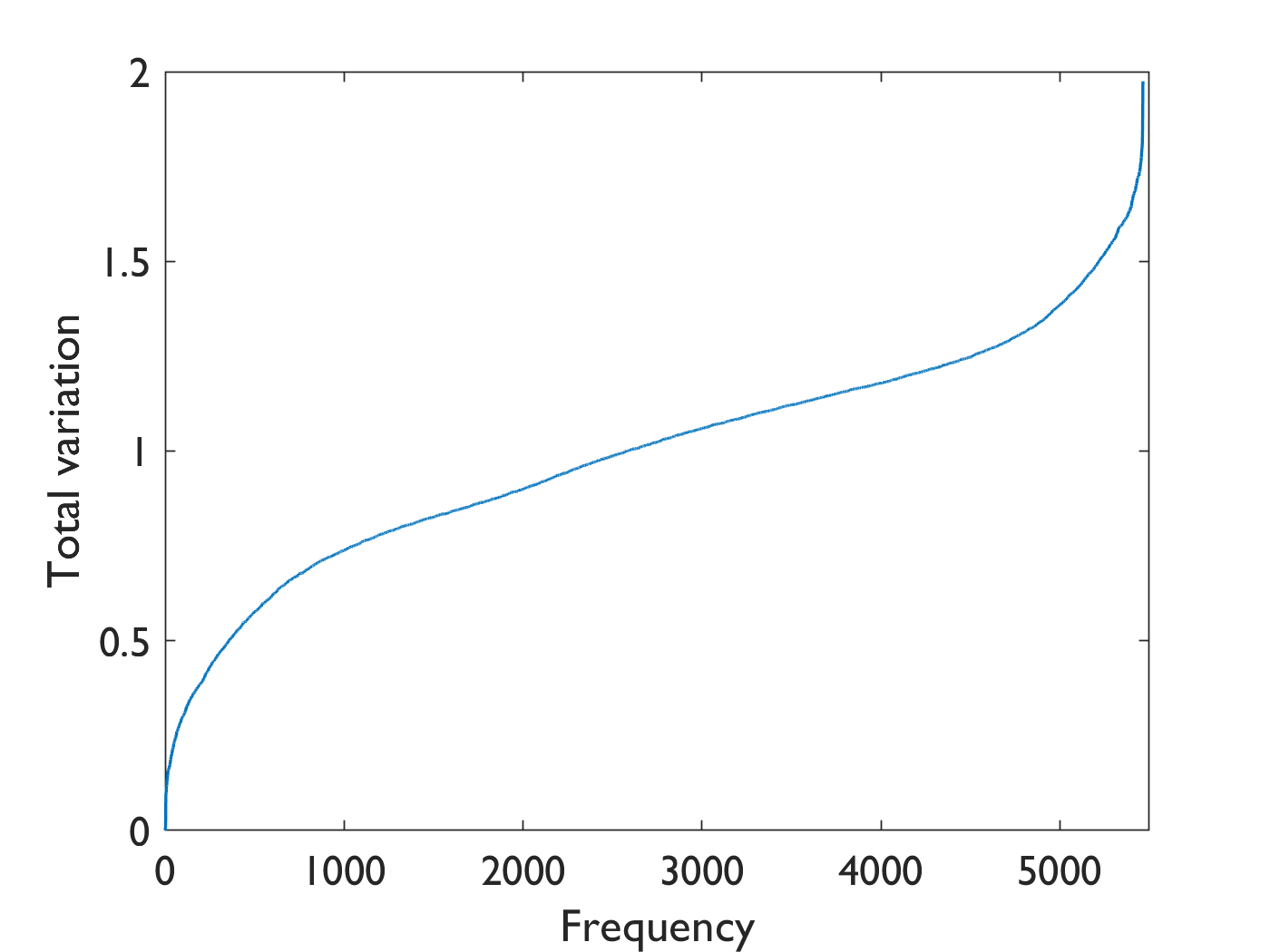}
    }
    \caption{Properties of the eigenvalues and eigenvectors $v$ of the
      modified Manhattan graph $A + B$.}
    \label{fig:SpectralPropertiesManhattanModified}
\end{figure*}

Basis vectors $v$ can be ordered by total variation $\TV_A$ (see \eqref{eq:GraphTotalVariation}) w.r.t.~the adjacency matrix $A$. For an eigenvector $v$ with $||v||_1 = 1$ and eigenvalue $\lambda$, $\TV_A(v) = |1-\lambda/\lambda_\text{max}|$. We noted earlier (Lemma~\ref{thm:NewEigenvectorsForOldGraph}) that the eigenvectors $v$ of our modified graph ($A+B$) are, in a sense, approximate eigenvectors for the original $A$. Here we compare the total variations $\TV_{A+B}(v)$ and $\TV_A(v)$ of the eigenbasis of $A+B$ when used as basis for $A$.

Fig.~\ref{fig:TotalVariationManhattanCompare} plots $\TV_A(v)$ against
$\TV_{A+B}(v)$. Even though about 6.5\% of edges were added, the total variations are
almost equal. This means that our method preserved the ordering of frequencies and thus the notion of low and high frequency. Thus, for example, a low-pass filter designed for the diagonalizable graph $A+B$ will be a low pass filter for the original $A$.

We also show the distribution curve of the total variations in Fig.~\ref{fig:TotalVariationPlotManhattan}. Interestingly, it is similar to the curve in~\cite[Fig.~16]{Domingos.Moura:2020a}, even though a very different approximation method was used.

The work in~\cite{Li.Moura:2020a} used as signal on the Manhattan graph the number
of hourly taxi rides starting from each node. Similarly, we take the number of taxi rides averaged over the first half of 2016. The obtained graph signal is shown in
Fig.~\ref{fig:ManhattanGraphSignal}. 

\begin{figure}
    \centering
    \includegraphics[width=0.7\linewidth]{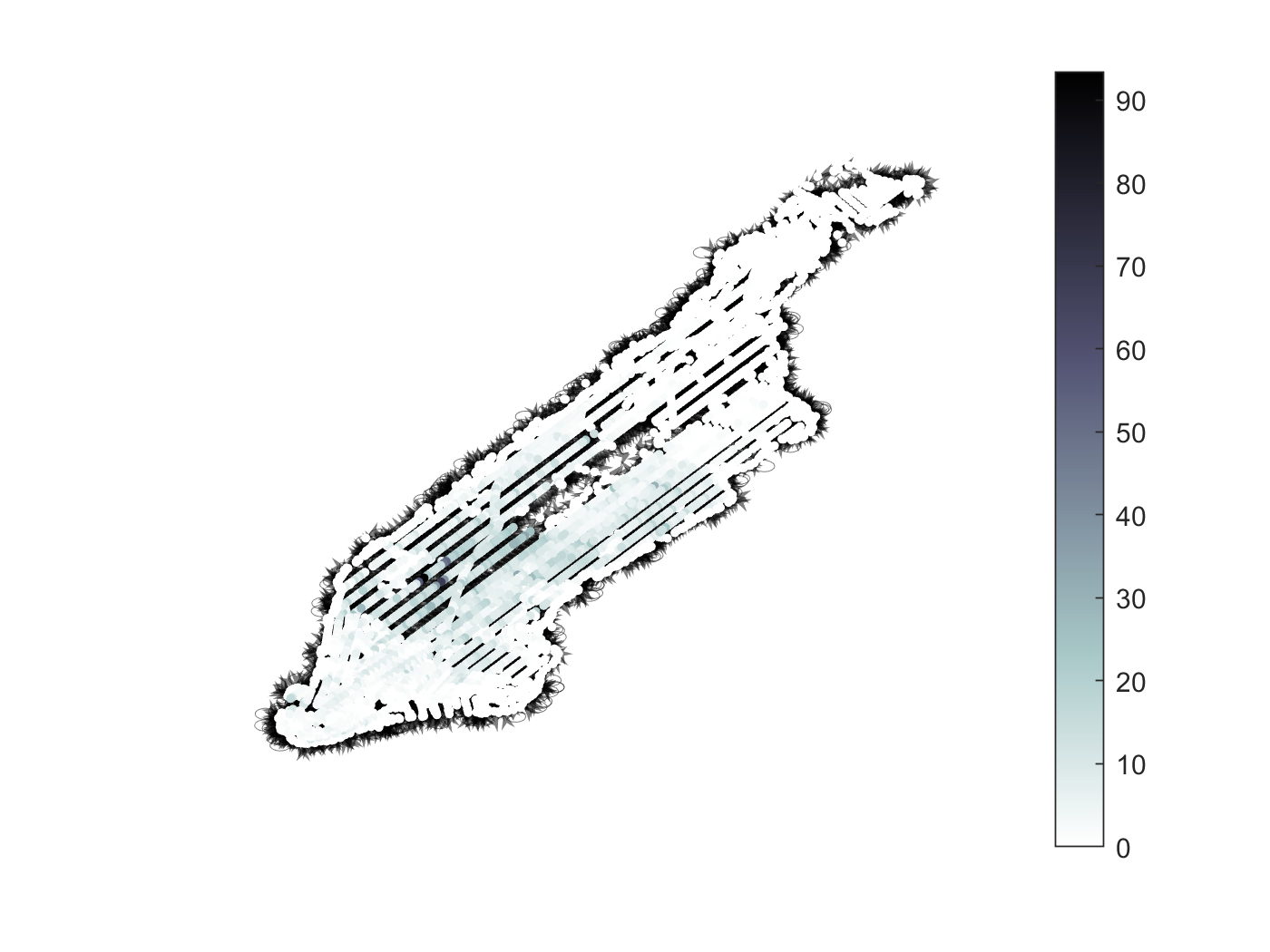}
    \caption{The graph signal of averaged hourly taxi rides on the
      Manhattan graph.}
    \label{fig:ManhattanGraphSignal}
\end{figure}

We apply the graph Fourier transform for the graph we obtained by
applying DestroyZeroEigenvalues and DestroyJordanBlocks to the
Manhattan graph to this signal. For this, we sorted the eigenvectors with
respect to total variation and normalized them to $\norm{v}_2 = 1$. 
The obtained magnitude signal spectrum is shown in
Fig.~\ref{fig:ManhattanGraphSignalSpectrum}. 
\begin{figure}
    \centering
    \includegraphics[width=0.7\linewidth]{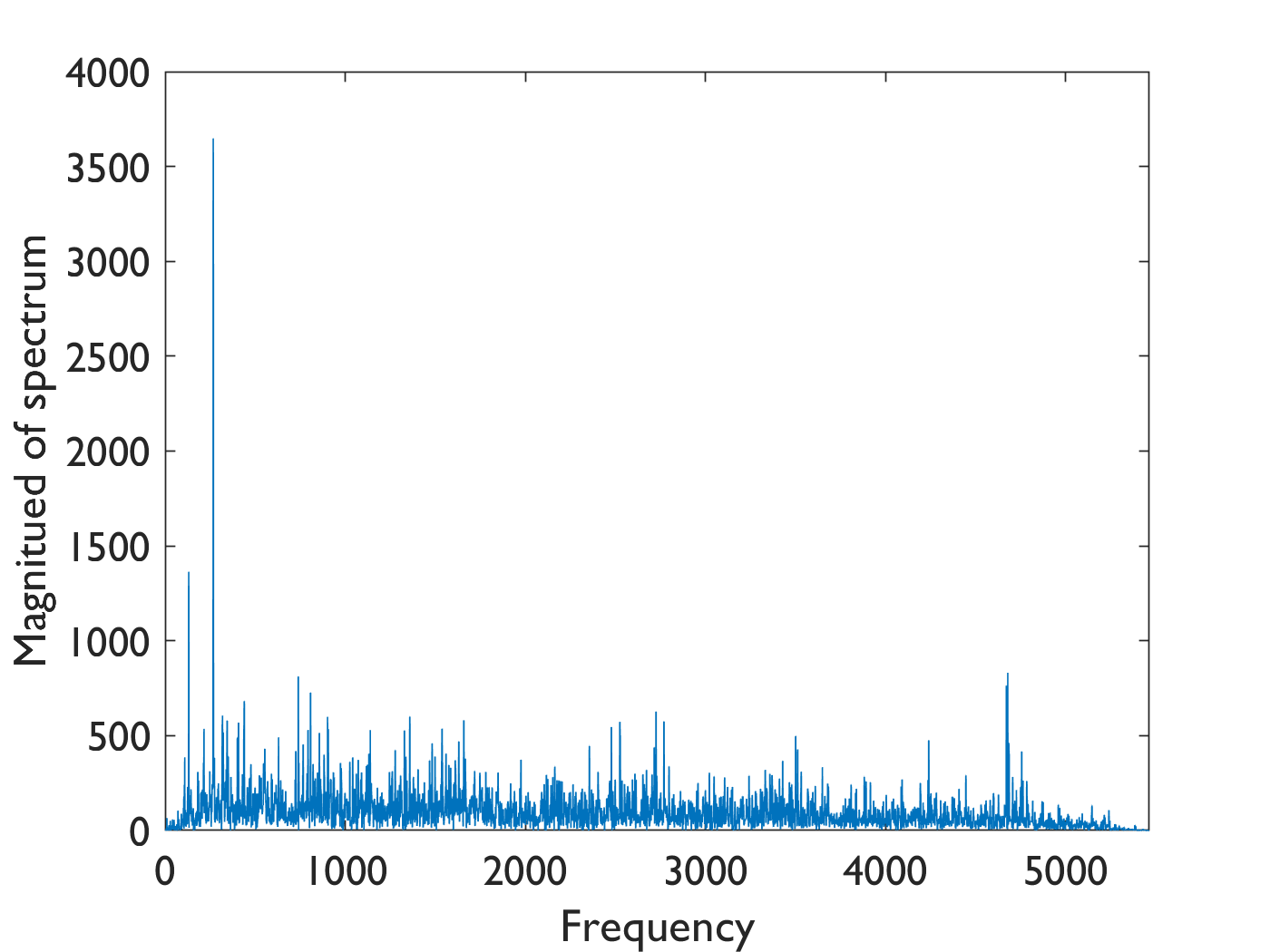}  
    \caption{The spectrum of the Manhattan graph signal. The
      frequencies are ordered by total variation.}
    \label{fig:ManhattanGraphSignalSpectrum}
\end{figure}

\mypar{Citation graph} As second large real-world graph we use the arXiv HEP-PH
citation graph released in~\cite{Gehrke.Ginsparg.Kleinberg:2003a}\footnote{The graph is
  available online at \url{https://snap.stanford.edu/data/cit-HepPh.html}}. For our
experiments we used a weakly connected subgraph with 4989 vertices and
17840 edges shown in Fig.~\ref{fig:CitationGraph}, where the nodes are vertically placed by publication time.

As a citation graph this graph is very close to being acyclic and thus has almost all eigenvalues 0, i.e., it is particularly challenging.

\begin{figure}
    \centering
    \includegraphics[width=\linewidth]{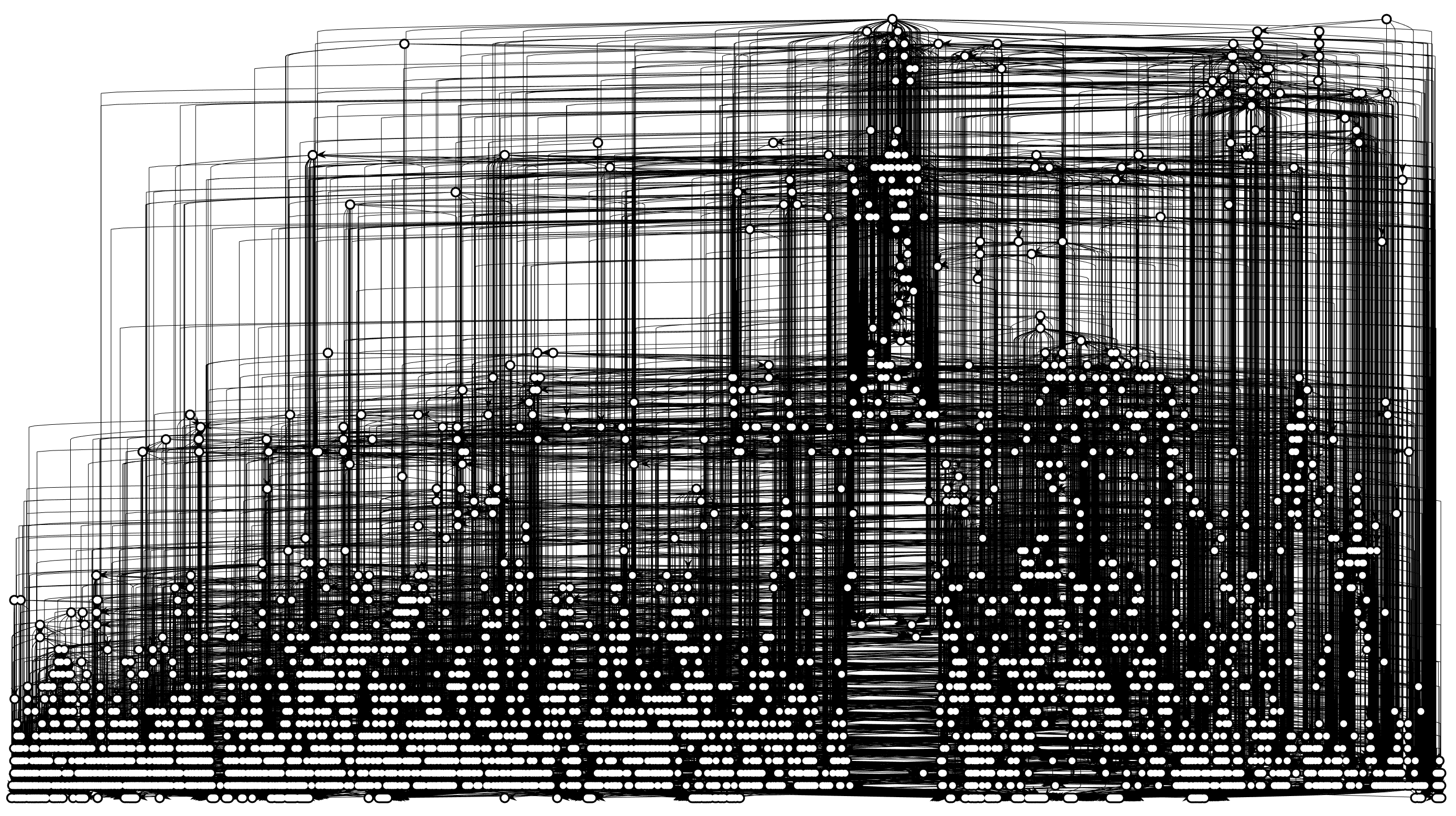}
    \caption{A citation graph with 4989 nodes.}
    \label{fig:CitationGraph}
\end{figure}

As before, we first apply DestroyZeroEigenvalues in
Fig.~\ref{algo:KillAllZeros} to remove all eigenvalue zeroes, which
took 9.5 minutes and added 1890 edges (about 10.5\%). Using
DestroyJordanBlocks then added another 21 edges in 22 minutes and
gives the usual guarantees on the minimal angle between the computed
eigenspaces. However, Fig.~\ref{fig:CitationHistogram} shows that, as
for the Manhattan graph, the eigenbasis is even almost orthogonal. The
obtained Fourier basis, with $\norm{v}_2 = 1$, is again numerically stable: $\sigma_{\min} = 0.0026$, $\sigma_{\max} = 5.1847$, for a condition number of $\kappa = 1994.1$).

\begin{figure}
    \centering
    \includegraphics[width=0.49\linewidth]{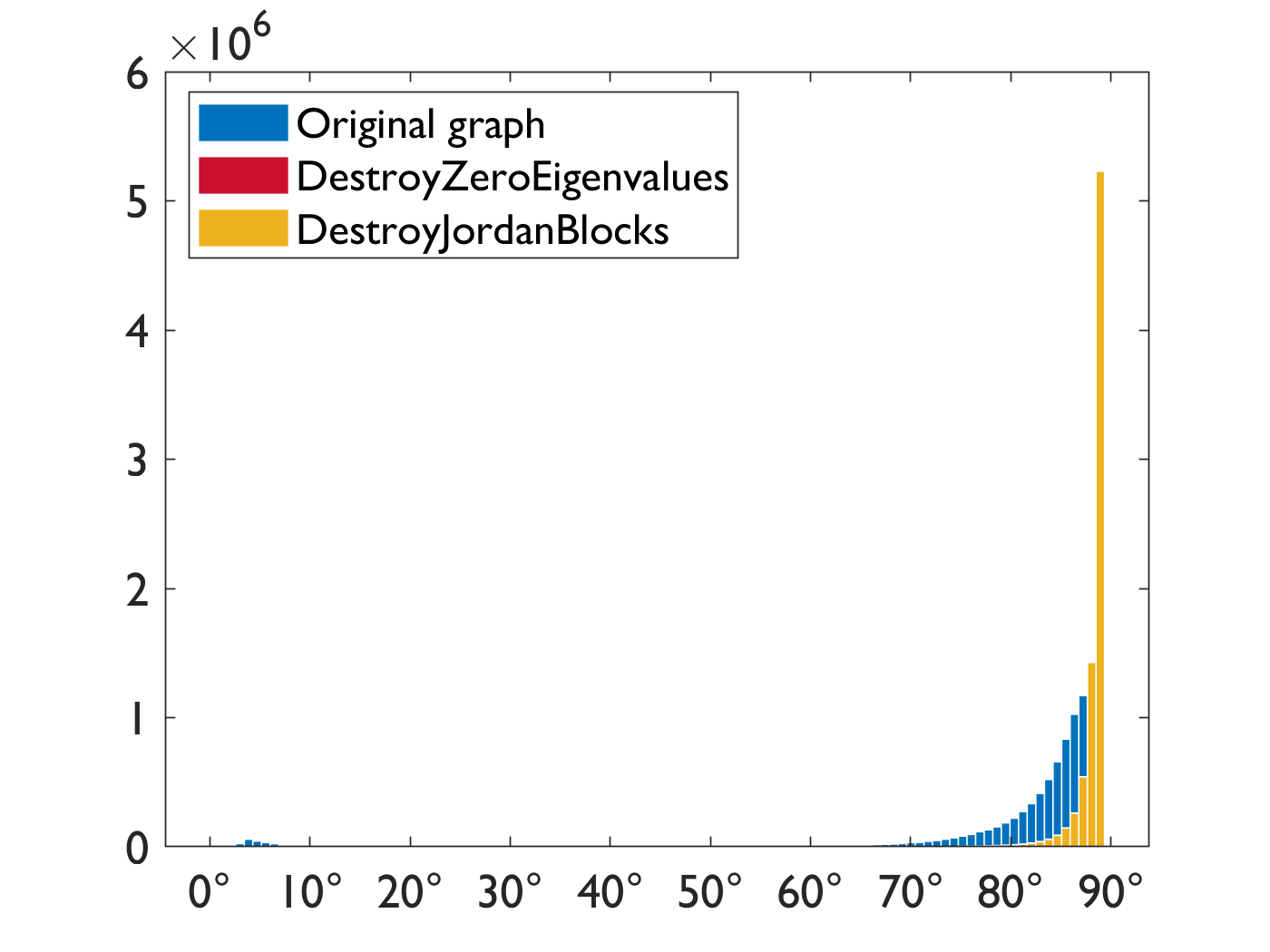}
    \includegraphics[width=0.49\linewidth]{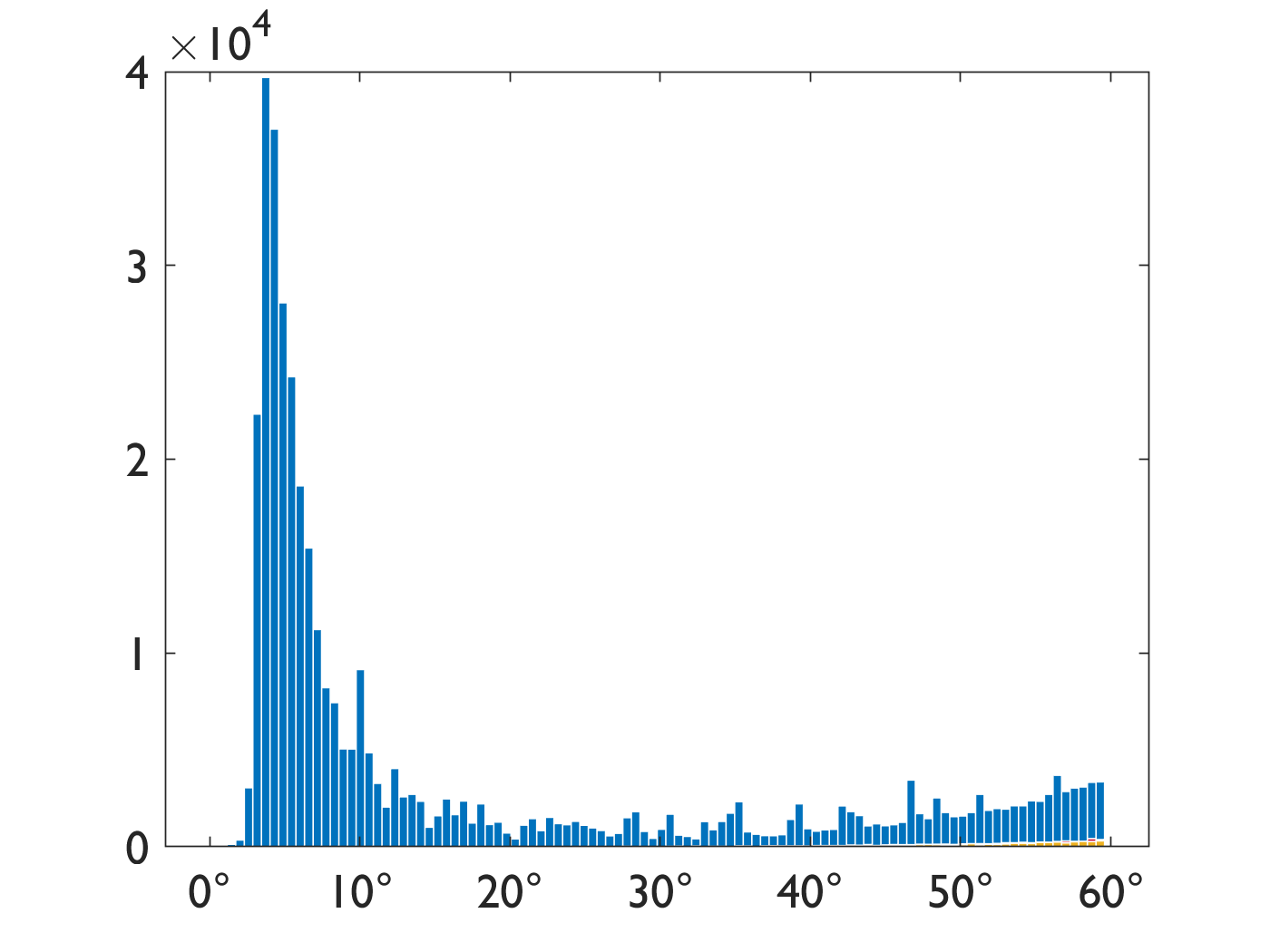}
    \caption{Citation graph: The histogram of all $4989^2$ angles between the spaces spanned by
      the computed eigenvectors for the citation graph: for all angles
      (left), and zoomed in on angles $\leq 60$ degrees (right).}
    \label{fig:CitationHistogram}
\end{figure}

As for the Manhattan graph,
Fig.~\ref{fig:TotalVariationCitationCompare} plots $\TV_A(v)$ against
$\TV_{A+B}(v)$. Even though about 10\% of edges were added, they are
close to equal and very close to order preserving.
\begin{figure}
    \centering
    \includegraphics[width=0.7\linewidth]{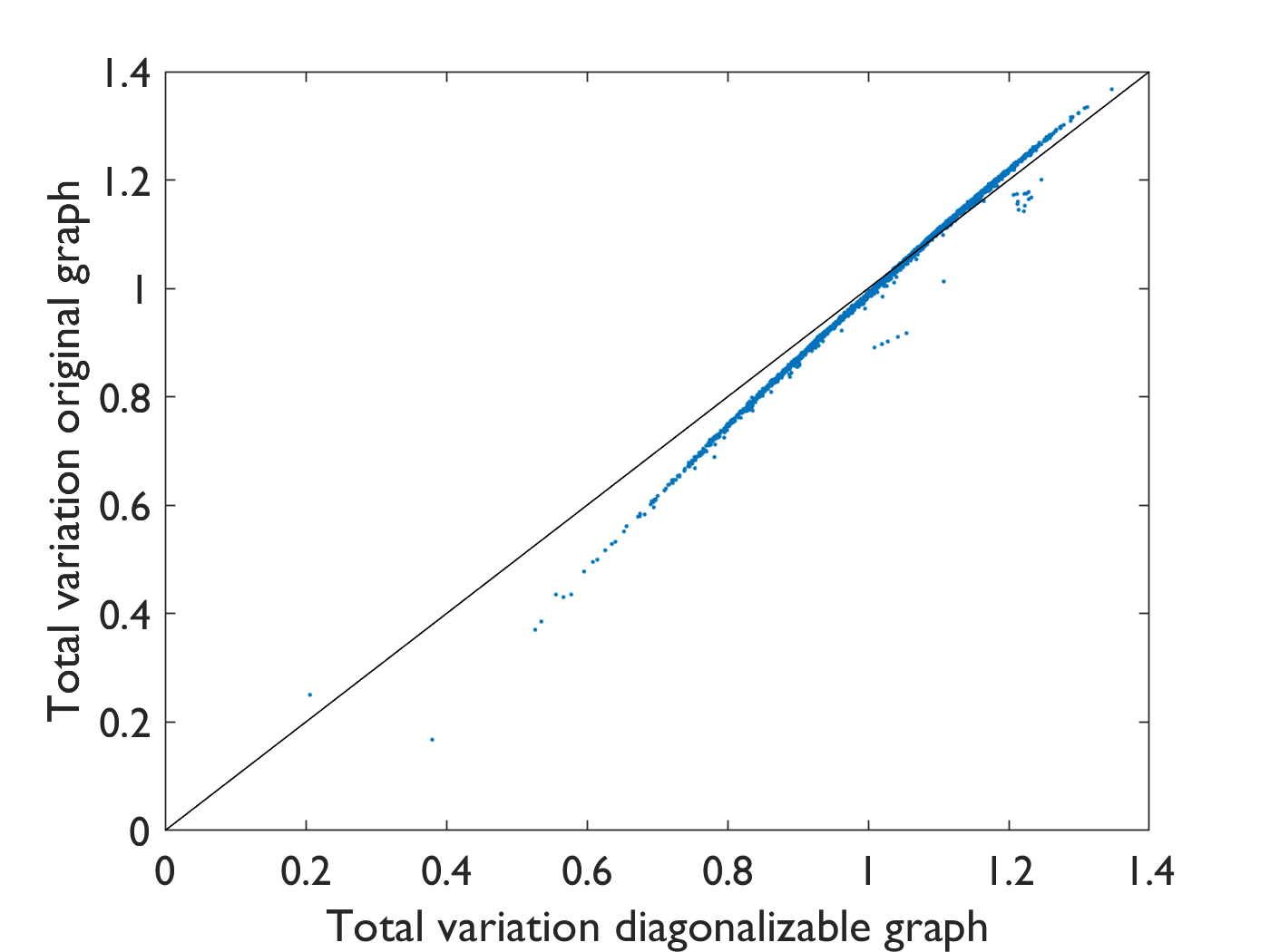}
    \caption{$\TV_A(v)$ vs.~$\TV_{A+B}(v)$ for the eigenvectors $v$ of $A+B$.}
    \label{fig:TotalVariationCitationCompare}
\end{figure}

\subsection{Wiener filtering with energy preserving shift}
\label{subsec:WienerFiltering}%

The work in~\cite{Gavili.Zahng:2017a} introduced an energy-preserving shift for graphs and digraphs but required the adjacency matrix to be diagonalizable. We show that our work can be used as a preprocessing step to establish this property to then enable further SP. As example, we use the generalization of Wiener filtering to graphs show-cased in~\cite{Gavili.Zahng:2017a}. First, we briefly provide background from \cite{Gavili.Zahng:2017a}.

\mypar{Energy-preserving shift} Let $A = V D V^{-1}$ with $D$ diagonal and let
$\Lambda_e = \diag(\lambda_{e_1}, \dots, \lambda_{e_N})$, with
$ \lambda_{e_k} = \E^{-2\I \pi (k-1)/n}$. The energy-preserving graph
shift is then defined as
\begin{equation}
    \label{eq:EnergyPreservingGraphShift}
    A_e = V \Lambda_e V^{-1}.
\end{equation}
Thus, $\norm{\Fourier s} = \norm{\Fourier (A_e s)}$ and $n$ applications to a signal reproduce the original: $A_e^n x = x$. If the eigenvalues of $A$ are all simple, $A_e$ is a polynomial in $A$, i.e., a filter.

\mypar{Graph Wiener filter} Consider a graph signal $x$ and a noisy
measurement of the signal $y = x + n$. The graph Wiener filter of
order $L$ has the form
\begin{equation}
    \label{eq:WienerFilter}
    H = \sum_{k=0}^{L-1} h_k A_e^k, 
\end{equation}
where the filter coefficients $h$ are found by solving
\begin{equation}
    \label{eq:WienerFilterCoefficients}
    \min_h \norm{B h - x}_2^2,\quad\text{with }B = [y \; A_e y \ldots
    \; A_e^{L-1} y].
\end{equation}
Using $R_{y,y}(\ell,m) = y^H (A_e^\ell)^H A_e^m y$ as definition for autocorrelation of the graph signal $y$, and $r_{x,y}(\ell) = y^H (A_e^\ell)^H x$
as definition of the cross-correlation between the graph signals $x$, yields the linear equation
\begin{equation}
    \label{eq:WienerFilterCoefficientsEquation}
    R_{y,y} h = r_{x,y}
\end{equation}
for the coefficients of the Wiener filter. Note that the powers of
$A_e$ can be computed efficiently using $A_e^k = V \Lambda_e^k V^{-1}$.

\mypar{Small graph signal} Since in~\cite{Gavili.Zahng:2017a} a random graph
was used to evaluate the graph Wiener filter, we use the USA graph in Fig.~\ref{fig:fixedUSAGraph} for our experiments. As graph signal we used, similar
to~\cite{Shafipour.Khodabakhsh.Mateos.Nikolova:2019a,Furutani.Shibahara.Akiyama.Hato.Aida:2019a},
the average monthly temperature of each state\footnote{Available on
  \url{https://www.currentresults.com/Weather/US/average-annual-state-temperatures.php}}. Then we added, over 1000 simulations, normally distributed noise with
zero mean and standard deviation of $10$ to the signal, leading to a
signal-to-noise ratio of $14.4 \pm 0.9$ decibel.

\mypar{Small graph results} Fig.~\ref{fig:NormalNoise10WienerFilter} shows the relative reconstruction error, as function of the filter order, for the graph Wiener filtered signal. The filter was designed with our modified graph that ensures diagonalizability. The qualitative behavior is as expected based on the results in~\cite{Gavili.Zahng:2017a}. Designing the Wiener filter based on the original graph and its Jordan basis fails (and was also not proposed in~\cite{Gavili.Zahng:2017a}). 

\begin{figure}
    \centering
    \includegraphics[width=0.8\linewidth]{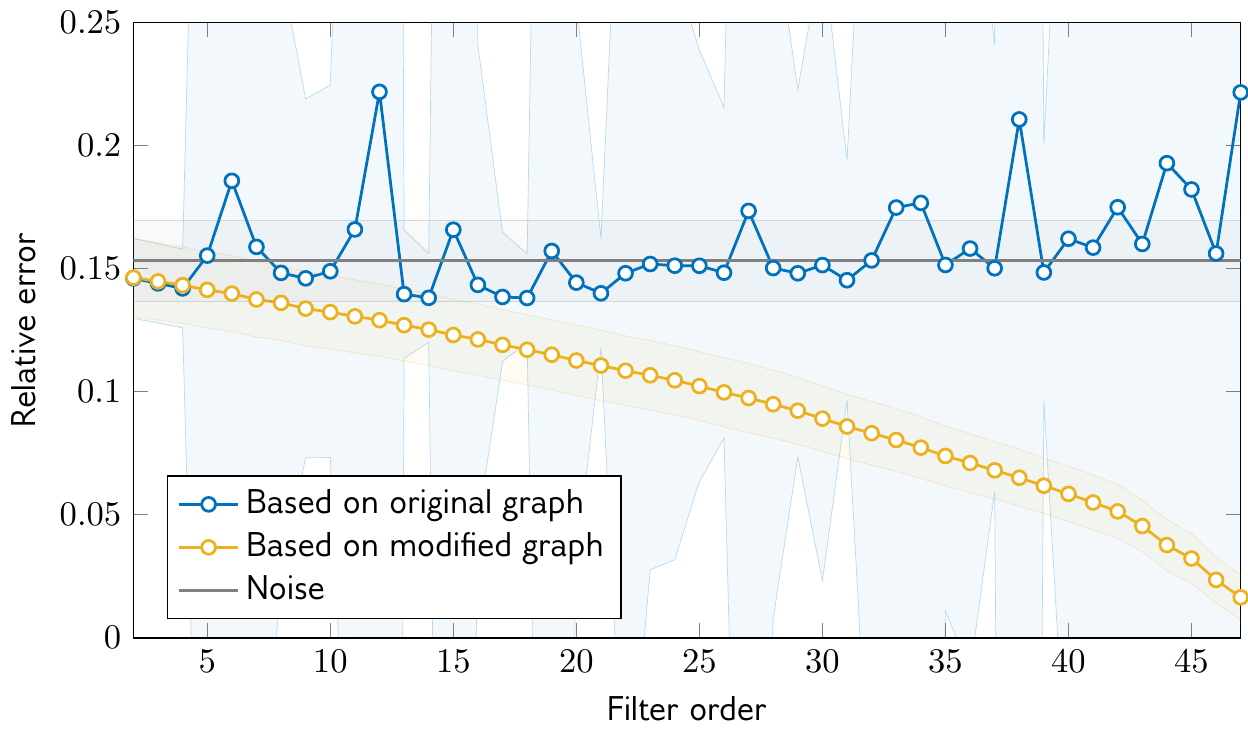}
    \caption{The relative reconstruction error $\norm{x - d}/\norm{x}$
      of the Wiener filtered noisy signal $d$ compared the original signal
      $x$, for different filter orders and energy-preserving shifts
      based on the adjacency matrix of the original graph and our
      modified graph. We used normally distributed noise with mean 0 and
      standard deviation 10. The average over 1000 noise simulations
      is shown as thick line and the standard deviation as shaded area
      in the respective colors.}
    \label{fig:NormalNoise10WienerFilter}
\end{figure}

\mypar{Large graph results} The experiments in~\cite{Gavili.Zahng:2017a} only considered a graph with 40 nodes. Here, we repeat the previous experiment with the large scale Manhattan graph signal in Fig.~\ref{fig:ManhattanGraphSignal} with added noise. Since the JNF is (by far) not computable in this case (and the method also did not work with the Jordan basis in Fig.~\ref{fig:NormalNoise10WienerFilter}) we only show the results for the modified graph after applying DestroyZeroEigenvalues and DestroyJordanBlocks and get a roughly similar behavior as before. Due to the high computational cost we show only one run and thus no standard deviation.

\begin{figure}
    \centering
    \includegraphics[width=0.8\linewidth]{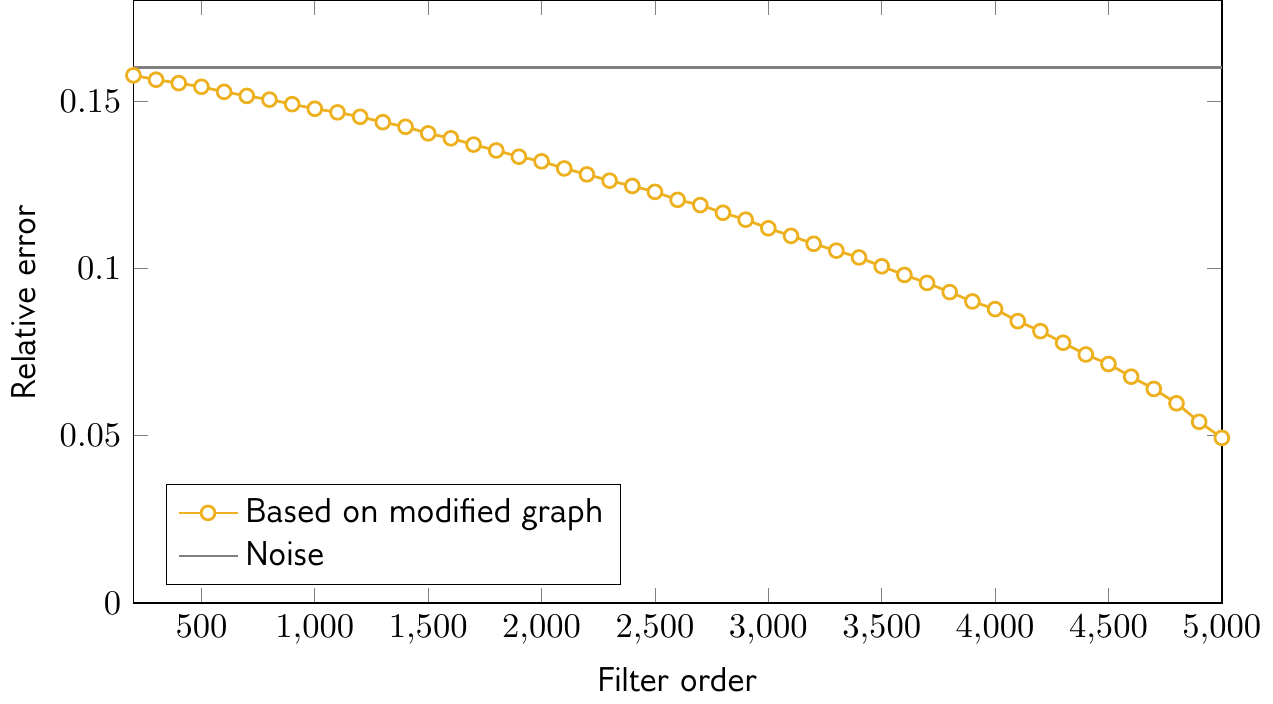}    
    \caption{The relative reconstruction error $\norm{x - d}/\norm{x}$
      of the Wiener filtered noisy signal $d$ to the original
      Manhattan graph signal $x$. In this experiment we added normally
      distributed noise with mean 0 and standard deviation 0.5.}
    \label{fig:ManhattanGraphWienerFilterError}
\end{figure}

In summary, using our method as preprocessing step makes the design of Wiener filters from~\cite{Gavili.Zahng:2017a} applicable to non-diagonalizable digraphs.

\subsection{Laplacian}

In this section we repeat some of the above experiments for the Laplacian instead of the adjacency
matrix, to demonstrate that our algorithm is equally applicable with minor modifications. For $D$ we use the in-degrees in all experiments. We have to slightly adjust DestroyJordanBlocks in
Fig.~\ref{algo:FixDigraphSpectrum}, replacing the condition $u_i \cdot v_j \not= 0$ with
$v_j (u_j - u_i) \not= 0$. Note that there is no analogue of the
algorithm from Fig.~\ref{algo:KillAllZeros} for the Laplacian, as no particular 
eigenvalue has a distinguished significance and zero is always an eigenvalue and thus cannot be destroyed. For example, in directed acyclic graphs Jordan blocks can only appear if there are nodes with the same number of incoming edges.

We also remind the reader that diagonalizability of $A$ and $L= D-A$ are different properties (see Fig.~\ref{fig:CounterexamplesLaplacianDiagAdjacency}) in general. 

\mypar{Random graphs} As for the experiment with the adjacency matrix, we generate 100
random digraphs for each of the random digraph models. The Laplacians
of the Erdős–Rényi random graphs are again all diagonalizable.
For the other three types of random digraphs the results
are shown in Table~\ref{tab:ResultsRandomGraphsLaplacian}. The overall
behavior is similar as before in Table~\ref{tab:ResultsRandomGraphs}.
\begin{table}
    \centering
    \begin{tabular}{@{}lllllllll@{}}\toprule
      & \multicolumn{2}{c}{min}& &
        \multicolumn{2}{c}{median}& &
        \multicolumn{2}{c}{max} \\
      \cmidrule{2-3} \cmidrule{5-6} \cmidrule{8-9}
      & edges & time & & edges & time & & edges & time \\
      \midrule
      Watts-Strogatz & 0 & 0.1s & & 0 & 0.12s & & 3 & 0.61s \\
      Barabási-Albert & 27 & 3.3s & & 40 & 4.8s & & 55 & 6.9s \\
      Klemm-Eguílez & 3 & 0.4s & & 12 & 1.6s & & 52 & 5.5s \\
      \bottomrule          
    \end{tabular}
    \caption{Edges added and runtime of DestroyJordanBlocks for three different
      random graph models with 500 nodes and approximately 5000
      edges.}
    \label{tab:ResultsRandomGraphsLaplacian}
\end{table}

\mypar{USA graph} The Jordan structure for the Laplacian of the USA
graph is shown in Table~\ref{tab:JUSA}. To obtain a diagonalizable
Laplacian on the USA graph our algorithm adds $6$ additional edges (one more than the theoretical minimum of 5), shown in Fig.~\ref{fig:fixedUSAGraphLaplacian}. The eigenvalues before
and after adding the edges are shown in Fig.~\ref{fig:USAGraphEigenvaluesLaplacian}.

\begin{table}\centering
\begin{tabular}{ll}\toprule
Eigenvalue & Block sizes\\ \midrule
0 & 1,1,1,1\\
1 & 4,3,2,1,1,1\\
2 & 4,2,1,1,1\\
3 & 8,4,3,2,2\\
4 & 2\\
5 & 2\\ \bottomrule
\end{tabular}
\caption{Jordan block sizes for Laplacian of the USA graph.\label{tab:JUSA}}
\end{table}

\begin{figure} \centering
    \includegraphics[width=0.8\linewidth]{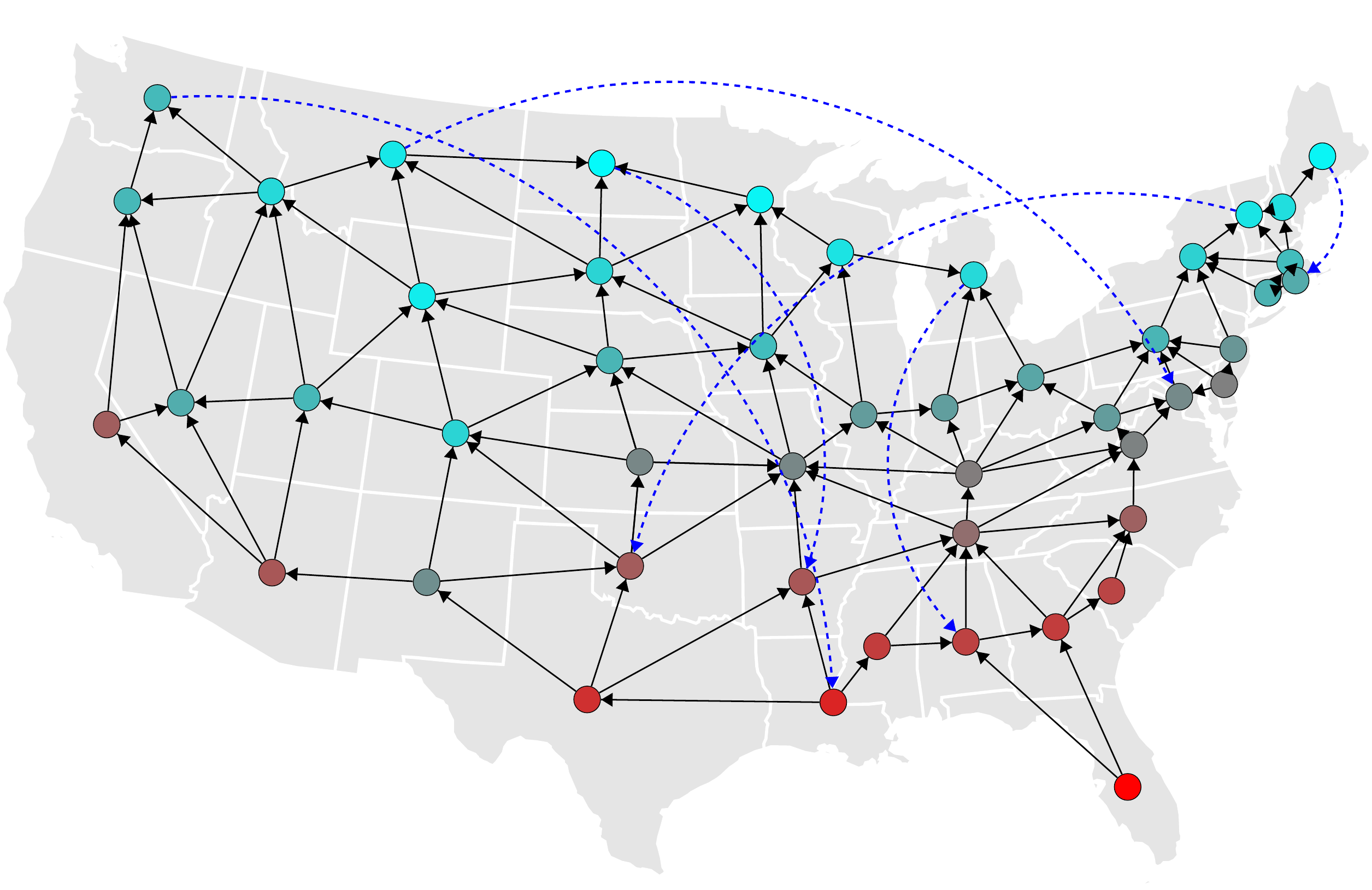}
    \caption{The USA graph. The 6 new edges added by
      DestroyJordanBlocks for the Laplacian are shown as dashed blue.}
    \label{fig:fixedUSAGraphLaplacian}
\end{figure}
\begin{figure}
    \centering
    \includegraphics[width=0.7\linewidth]{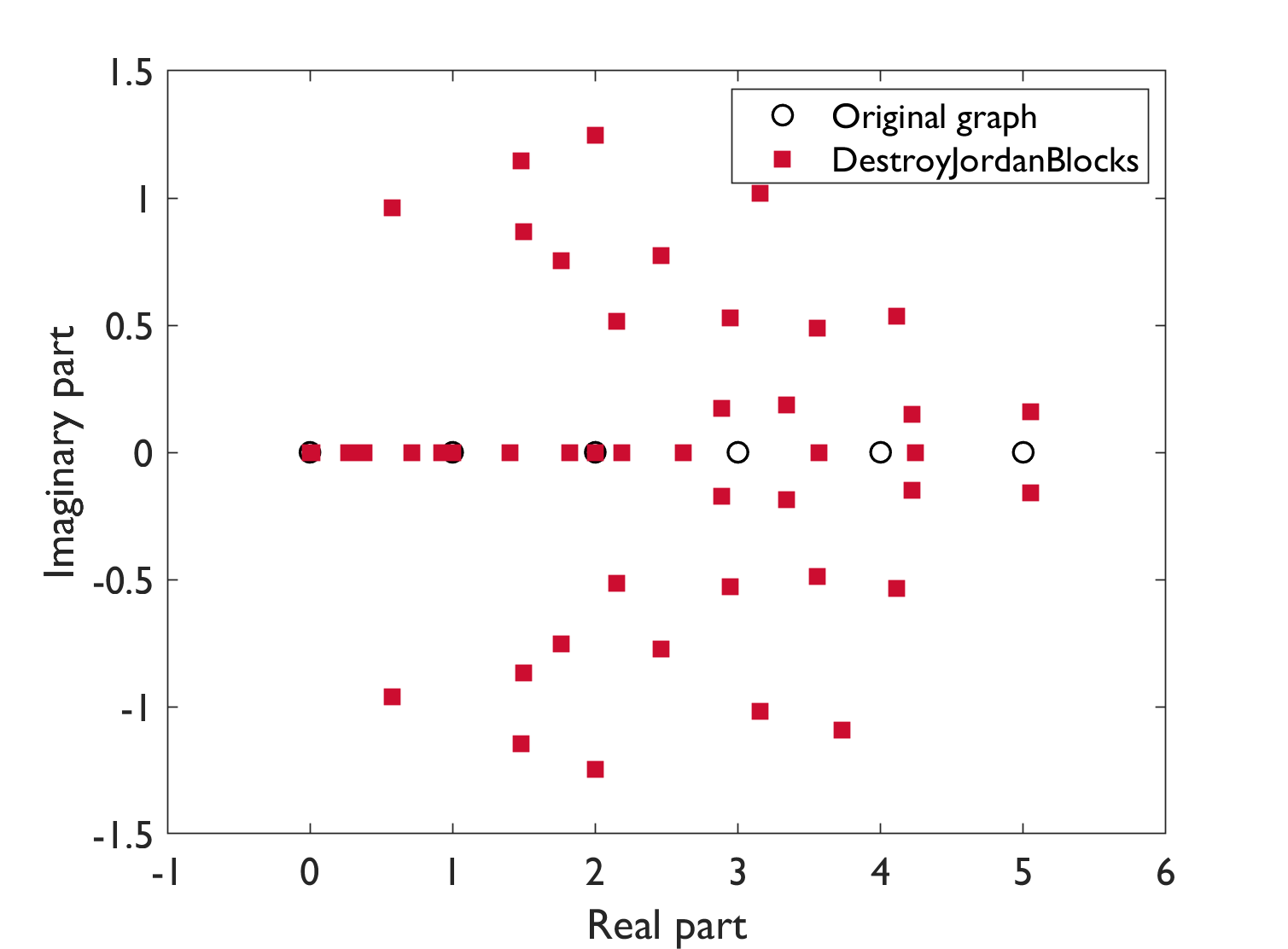}
    \caption{The Laplacian eigenvalues of the USA graph (black circles),
      and after making it diagonalizable (red squares).}
    \label{fig:USAGraphEigenvaluesLaplacian}
\end{figure}

\mypar{Manhattan taxi graph} We apply DestroyJordanBlocks to the
Laplacian of the Manhattan graph. The algorithm added $596$ new
edges within $12$ hours. The angles between the computed eigenvectors
before and after adding the edges are shown in
Fig.~\ref{fig:ManhattanGraphHistogramLaplacian}, i.e., the Fourier basis is close to orthogonal.
The eigenvalues $L + B$ are in the same range as those of $L$ (Fig.~\ref{fig:LaplacianSpectrumManhattan}).

\begin{figure}
    \centering
    \includegraphics[width=0.49\linewidth]{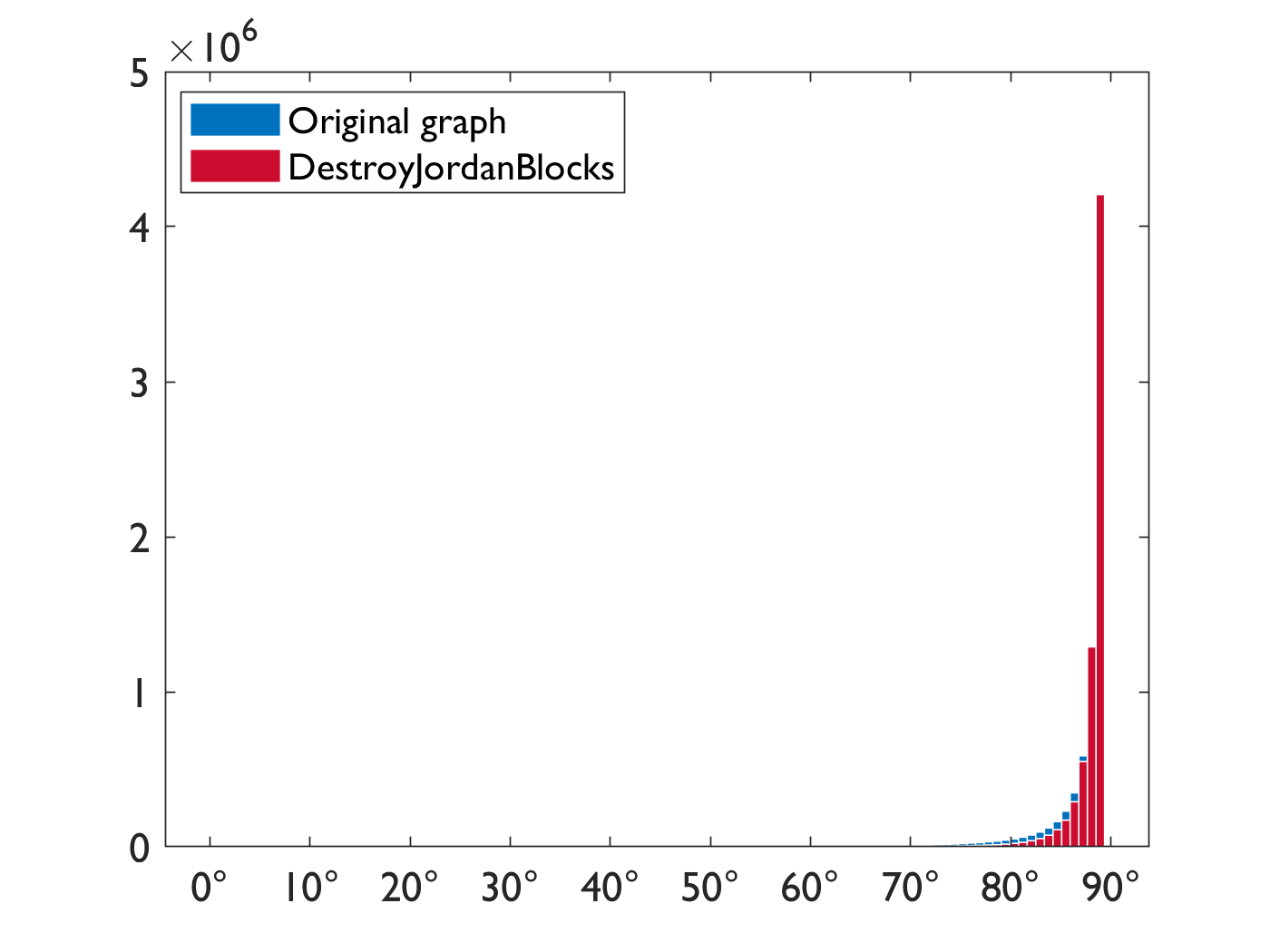}
    \includegraphics[width=0.49\linewidth]{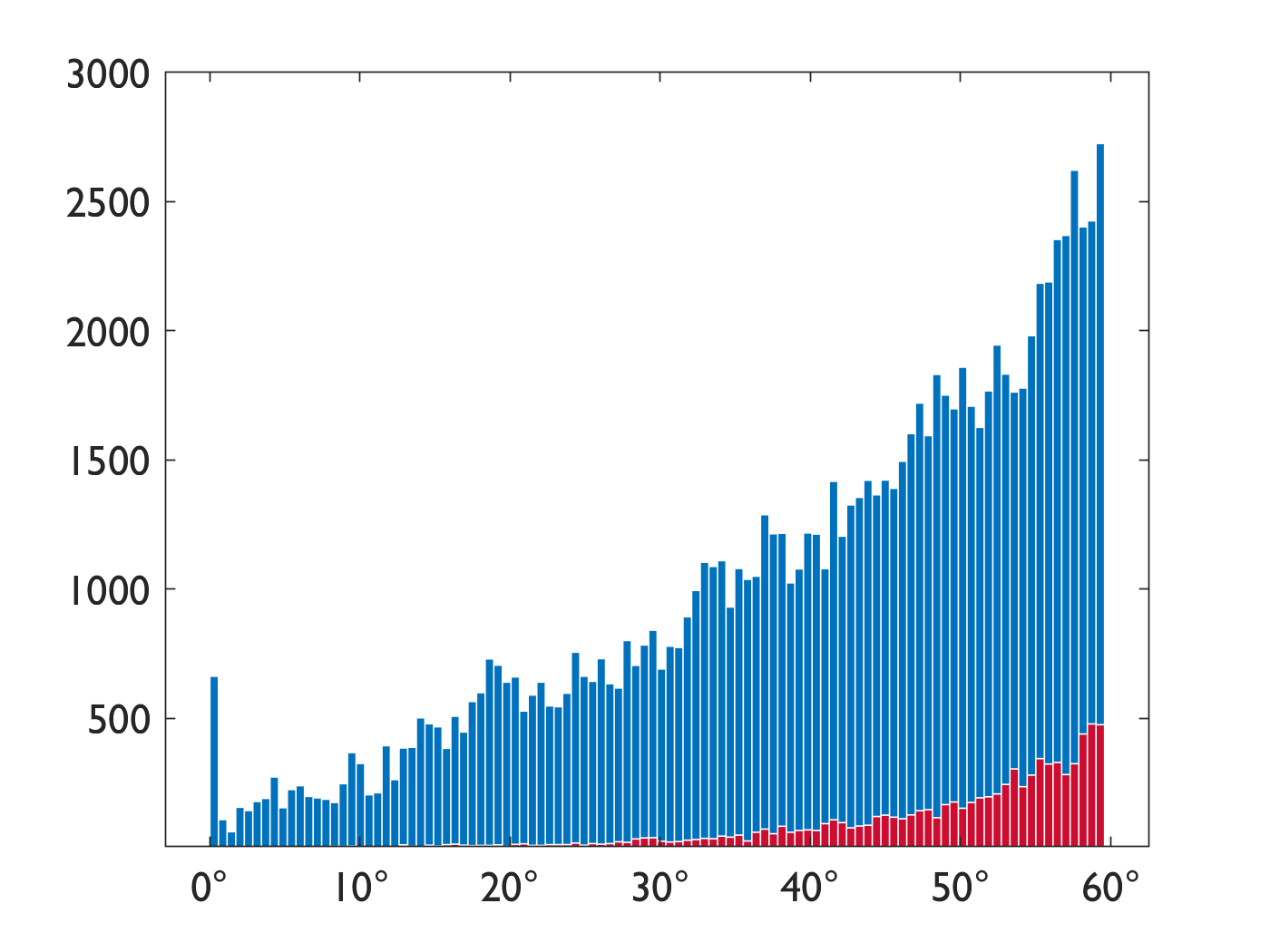}
    \caption{Manhattan graph: The histogram of all $5464^2$ angles
      between the spaces spanned by the computed eigenvectors of the
      Laplacian: for all angles (left), and zoomed in on angles
      $\leq 60$ degrees (right).}
    \label{fig:ManhattanGraphHistogramLaplacian}
\end{figure}
\begin{figure}
    \centering
    \includegraphics[width=\linewidth]{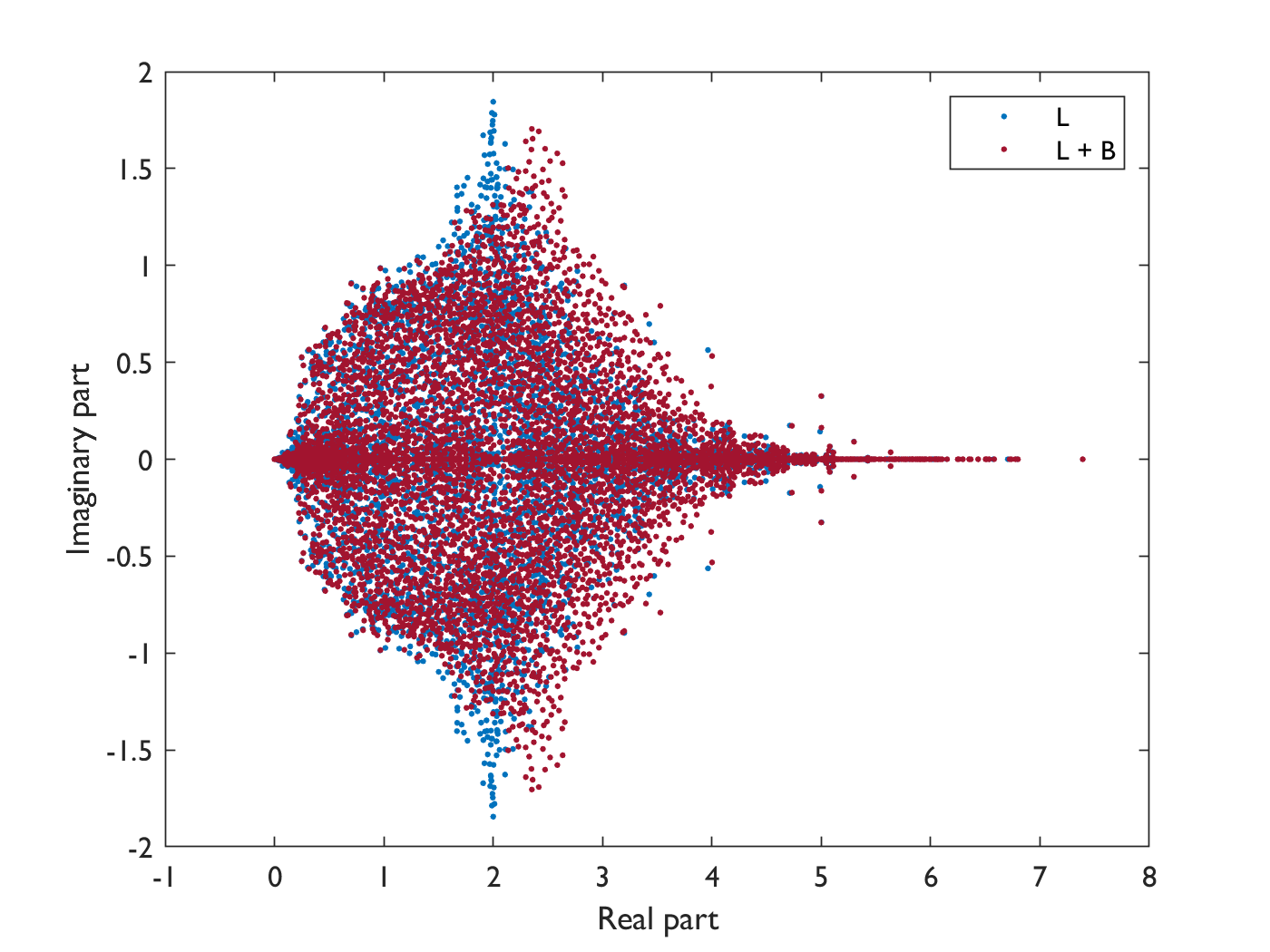}    
    \caption{Eigenvalues of the modified Laplacian of the Manhattan
      graph $L + B$.}
    \label{fig:LaplacianSpectrumManhattan}
\end{figure}


\section{Conclusion}
\label{sec:Conclusion}%

We presented a practical and scalable solution to the challenging
problem of designing a suitable Fourier basis in the case of non-diagonalizable shifts and filters in digraph signal processing. The basic idea was to add edges, i.e., slightly perturb the adjacency or Laplacian matrix to enforce this property. Then the Fourier basis and transform of the modified graph are used for the original graph.
Equivalently, our method can be seen as a way to construct an approximate, numerically stable eigenbasis and associated approximate Fourier transform that are still associated with an intuitive notion of shift in the graph domain. We showed that the method even works for directed acyclic graphs, which only have the eigenvalue zero.

Our method has more general potential uses to establish other desirable properties. Examples that we showed in the paper include invertibility or simple eigenvalues only for the graph shift, properties that are required or desirable for certain applications. It is intriguing, and invites further investigation, that the added edges must add cycles in the graph, thus generalizing the concept of cyclic boundary conditions. Also intriguing is that the Fourier bases obtained seem to be close to orthogonal and that they seem to maintain the total variation and its ordering with respect to the original graph. Finally, we would like to stress that the implementation of our method copes well with the inherent numerical instability of eigenvalue computations and scales to several thousands nodes.


\section*{Acknowledgments} 

The authors are very grateful to Sergey~V.~Savchenko (Landau Institute
for Theoretical Physics, Russian Academy of Sciences) for informing
them about an error in the formulation of
Theorem~\ref{thm:DestroyingJordanBlocks} in a previous version of the
paper and for providing us with Lemma~\ref{lemma:SoundHeuristics} and
its proof \cite{Savchenko:2020a}.

The authors also thank the anonymous reviewers for helpful comments which improved the content and the presentation of the paper.



\bibliographystyle{IEEEbib}
\bibliography{Literature/AlgebraicSignalProcessing,Literature/DataSources,Literature/GraphSignalProcessing,Literature/GraphTheory,Literature/LinearAlgebra,Literature/Misc,Literature/PerturbationTheory}
\begin{IEEEbiography}[{\includegraphics[width=1in,height=1.25in,clip,keepaspectratio]{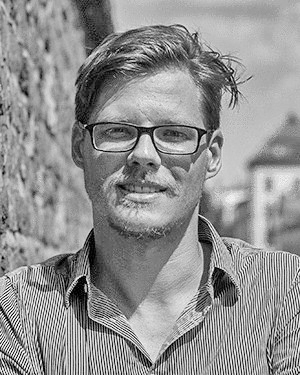}}]{Bastian  Seifert}
    (Member, IEEE) received the B.Sc. degree in mathematics from Friedrich-Alexander-Universität Erlangen-Nürnberg, Germany, in 2013, and the M.Sc. and Ph.D. degrees in mathematics from Julius-Maximilians-Universität Würzburg, Germany, in 2015 and 2020, respectively. He was a Research Associate with the Center for Signal Analysis of Complex Systems (CCS), the University of Applied Sciences, Ansbach, Germany. Currently, he is a Postdoc at ETH Zurich, Switzerland. His research interests include algebraic signal processing, dimensionality reduction, and applied mathematics.
\end{IEEEbiography}

\begin{IEEEbiography}[{\includegraphics[width=1in,height=1.25in,clip,keepaspectratio]{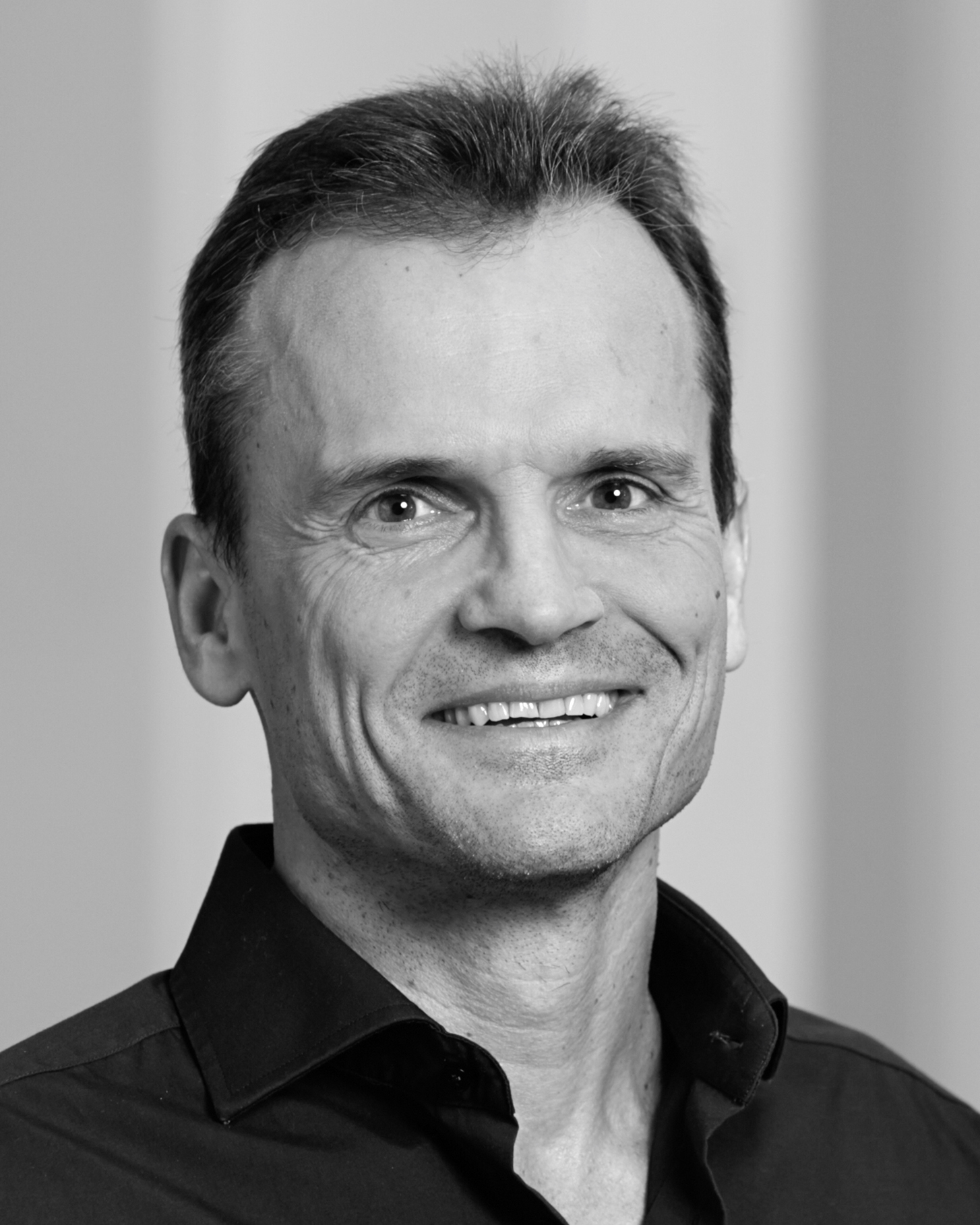}}]{Markus P\"uschel}
	(Fellow, IEEE) received the Diploma (M.Sc.) in mathematics and Doctorate 1068
	(Ph.D.) in computer science, in 1995 and 1998, respectively, both from the University of Karlsruhe, Germany. He is a Professor of Computer Science with ETH Zurich, Switzerland, where he was the Head of the Department from 2013 to 2016. Before joining ETH in 2010, he was a Professor with Electrical and Computer Engineering, Carnegie Mellon University (CMU), where he still has an Adjunct status. He was an Associate Editor for the IEEE Transactions on Signal Processing, the IEEE Signal Processing Letters, and was a Guest Editor of the Proceedings of the IEEE and the Journal of Symbolic Computation, and served on numerous program committees of conferences in computing, compilers, and programming languages. He received the main teaching awards from student organizations of both institutions CMU and ETH and a number of awards for his research. His current research interests include algebraic signal processing, program generation, program analysis, fast computing, and machine learning.
\end{IEEEbiography}

\vfill\pagebreak




\end{document}